\documentclass[11pt]{amsart}
\usepackage{fullpage}
\usepackage[foot]{amsaddr}
\usepackage[OT1]{fontenc}
\usepackage{lmodern}
\usepackage{amsmath}
\usepackage{amssymb}
\usepackage{amsthm}
\usepackage[linesnumbered,ruled,vlined]{algorithm2e}
\DontPrintSemicolon
\usepackage{enumitem}
\usepackage{graphicx}
\usepackage{xcolor}
\usepackage[margin=1cm]{caption}
\usepackage{booktabs}

\usepackage[bookmarks=true,hypertexnames=false,pagebackref]{hyperref}
\hypersetup{colorlinks=true, citecolor=blue, linkcolor=red, urlcolor=blue}

\usepackage{cleveref}

\def\*#1{\mathbf{#1}}
\def\+#1{\mathcal{#1}}
\def\-#1{\mathrm{#1}}
\def\=#1{\mathbb{#1}}
\def\^#1{\mathbb{#1}}

\newcommand{\norm}[1]{\ensuremath{\left\lVert #1\right\rVert}}

\newcommand{\eps}{\varepsilon}

\newcommand{\diag}{\operatorname{diag}}

\newcommand{\stitle}[1]{\vspace{1ex} \noindent{\bf #1}}

\newtheorem{theorem}{Theorem}[section]
\newtheorem{lemma}[theorem]{Lemma}

\newtheorem{corollary}[theorem]{Corollary}

\newtheorem{condition}[theorem]{Condition}
\theoremstyle{definition}
\newtheorem{definition}[theorem]{Definition}

\theoremstyle{remark}

\crefname{theorem}{Theorem}{Theorems}
\crefname{observation}{Observation}{Observations}
\crefname{claim}{Claim}{Claims}
\crefname{condition}{Condition}{Conditions}
\crefname{algorithm}{Algorithm}{Algorithms}
\crefname{algocf}{Algorithm}{Algorithms}
\crefname{property}{Property}{Properties}
\crefname{example}{Example}{Examples}
\crefname{fact}{Fact}{Facts}
\crefname{lemma}{Lemma}{Lemmas}
\crefname{corollary}{Corollary}{Corollaries}
\crefname{definition}{Definition}{Definitions}
\crefname{remark}{Remark}{Remarks}
\crefname{proposition}{Proposition}{Propositions}
\crefname{equation}{equation}{equations}
\crefname{enumi}{Case}{Case}
\creflabelformat{enumi}{(#2#1#3)}

\definecolor{HGcolor}{RGB}{255,50,50}

\makeatletter
\def\prob#1#2#3{\goodbreak\begin{list}{}{\labelwidth\z@ \itemindent-\leftmargin
      \itemsep\z@  \topsep6\p@\@plus6\p@
      \let\makelabel\descriptionlabel}
  \item[\textbf{Name}]#1
  \item[\textbf{Instance}]#2
  \item[\textbf{Output}]#3
  \end{list}}
\makeatother

\newcommand{\one}{\mathbf{1}}

\newcommand{\Ot}{\widetilde{O}}

\newcommand{\vol}{\operatorname{vol}}

\begin{document}

\title{Nearly optimal Personalized PageRank computation}

\author{Rong-Hua Li}
\author{Quanqing Xu}
\author{Yichun Yang}
\author{Junjie Zhou}
\address[Rong-Hua Li]{School of Computer Science, Beijing Institute of Technology, Beijing, China}
\address[Quanqing Xu]{Independent Researcher, Beijing, China}
\address[Yichun Yang]{School of Computer Science, Beijing Institute of Technology, Beijing, China}
\address[Junjie Zhou]{School of Computer Science, Beijing Institute of Technology, Beijing, China}

\begin{abstract}
We study the fundamental problem of computing Personalized PageRank (PPR) on an undirected and unweighted graph $G$. We focus on the $\eps$-error guarantee introduced by Andersen, Chung, and Lang [ACL; FOCS 2006 $\&$ Internet Math 2007]. Their classic local push method computes an approximate PPR vector satisfying the ACL $\eps$-error guarantee in $O((\alpha\eps)^{-1})$ time, where $\alpha$ is the teleportation parameter. This problem has a lower bound $\Omega(\eps^{-1})$. Working with idealized SDDM solves, we eliminate the dependence on $\alpha$ and present an algorithm that achieves the same error guarantee in $O(\eps^{-1-o(1)})$ time. Consequently, we obtain a nearly optimal algorithm for PPR computation under the ACL error guarantee, and a nearly linear algorithm for local graph clustering with complexity independent of $\phi$.
\end{abstract}

\maketitle

\section{Introduction}\label{sec:intro}

As a fundamental problem in graph analysis, Personalized PageRank (PPR) has been extensively studied in the previous two decades. For a seed node $s$ and teleportation parameter $\alpha \in (0, 1)$, the PPR value from $s$ to a node $v$ can be viewed as the probability that a random walk that starts from $s$ terminates at $v$, where it terminates with probability $\alpha$ at each step.
Therefore, PPR provides a natural measure of node importance relative to a given seed set. It has been widely used in applications such as web search~\cite{PageEtAl1999,JehWidom2003}, recommendation systems~\cite{GoriPucci2007,ZhangEtAl2008,MustoEtAl2021}, and graph neural networks~\cite{KlicperaEtAl2019,BojchevskiEtAl2020,ChienEtAl2021}. The efficient computation of PPR has also attracted considerable attention in the database \cite{LofgrenEtAl2014,WangEtAl2017,WuEtAl2021,LiaoEtAl2022,LiaoEtAl2023} and theoretical computer science communities~\cite{ACL06,ACL07,WeiEtAl2024,JiangEtAl2026}.

Despite the considerable attention that PPR has received and its widespread use in applications, the complexity of its local computation remains poorly understood.
The study of local PPR computation dates back to Andersen, Chung, and Lang~\cite{ACL06,ACL07}, who proposed the classic Push algorithm.
This algorithm computes a PPR vector in $O((\alpha\eps)^{-1})$ time under the ACL $\eps$-error guarantee (defined later in \Cref{def:acl-error}).
One can easily see that a linear dependence on $\eps^{-1}$ is unavoidable under the ACL $\eps$-error guarantee.
Thus, the simple Push algorithm is already optimal when $\alpha$ is a fixed constant. However, when $\alpha$ is small, as is typical in some applications such as local graph clustering~\cite{ACL06,ACL07}, the dependence on $\alpha$ becomes significant.

Along this direction, Fountoulakis et al.~\cite{FRSCM19} introduced $\ell_1$-regularized PageRank as an equivalent formulation of the problem under the ACL error guarantee.
Subsequently, at COLT 2022, Fountoulakis and Yang~\cite{FY22} posed the open problem of whether an $O((\sqrt{\alpha}\,\eps)^{-1})$ algorithm can be designed for $\ell_1$-regularized PageRank.
Equivalently, they asked whether an algorithm with this running time exists under the ACL $\eps$-error guarantee.
Later, Mart\'inez-Rubio, Wirth, and Pokutta~\cite{MWP23} obtained an $O(1/(\sqrt{\alpha}\,\eps^2))$ algorithm and, somewhat surprisingly, also designed an $O(\eps^{-3})$ algorithm whose complexity is independent of $\alpha$.
More recently, using a novel active-set method, Wei and Yang~\cite{WY26} proposed an $O\bigl(\eps^{-2}\,\mathrm{polylog}((\alpha\eps)^{-1})\bigr)$ algorithm, improving upon the result of Mart\'inez-Rubio et al.~\cite{MWP23}.
Several follow-up works further advanced this line of research:
Cui, Wei, and Yang~\cite{CWY26} improved the complexity to $\Ot\bigl(\min\{\eps^{-2},\, 1/(\sqrt{\alpha}\,\eps)\}\bigr)$;
concurrently, Zhou~\cite{Zhou26} obtained a deterministic version of this result;
and Li and Yang~\cite{LY26} extended similar ideas to $\ell_1$-regularized resistance and developed faster algorithms for counting spanning trees.
We refer to \Cref{tab:ppr-complexity} for a summary of these recent results.
Despite this substantial progress, the complexity landscape of PPR computation remains far from clear.
This naturally raises the following question: \textit{what is the optimal dependence of PPR algorithms on $\eps$ and $\alpha$?}

\begin{table}[h]
\centering
\caption{Local algorithms for PPR under ACL $\eps$-error guarantee.}
\label{tab:ppr-complexity}
\renewcommand{\arraystretch}{1.2}
\begin{tabular}{ll}
\toprule
Method & Complexity \\
\midrule
\cite{ACL06,ACL07} & $O((\alpha\eps)^{-1})$ \\
\cite{FRSCM19} & $\Ot((\alpha\eps)^{-1})$ \\
\cite{MWP23} & $O(1/(\sqrt{\alpha}\,\eps^{2}))$ \\
\cite{MWP23} & $O(\eps^{-3})$ \\
\cite{WY26} & $O\bigl(\eps^{-2}\,\mathrm{polylog}((\alpha\eps)^{-1})\bigr)$ \\
\cite{CWY26} & $O\bigl(\min\{\eps^{-2},\, 1/(\sqrt{\alpha}\,\eps)\}\mathrm{polylog}((\alpha\eps)^{-1})\bigr)$ \\
\cite{Zhou26} & $O\bigl(\min\{\eps^{-2},\, 1/(\sqrt{\alpha}\,\eps)\}\mathrm{polylog}((\alpha\eps)^{-1})\bigr)$ \\
\midrule
\textbf{This work} & $O(\eps^{-1-o(1)})$ \\
\textbf{Lower bound} & $\Omega(\eps^{-1})$ \\
\bottomrule
\end{tabular}
\end{table}

We first formally state the ACL $\eps$-error guarantee considered in this paper.

\begin{definition}[ACL $\eps$-error guarantee]\label{def:acl-error}
Let $\pi_s$ denote the PPR vector with source node $s$, and let $d_u$ denote the degree of node $u$.
An approximate PPR vector $\hat{\pi}_s$ satisfies the ACL $\eps$-error guarantee if
\[
  |\hat{\pi}_s(u)-\pi_s(u)| \leq \eps d_u
  \qquad \text{for every } u\in V.
\]
\end{definition}

This problem has a clear lower bound. For any fixed \(\alpha\), a star with its center as the source has \(\Theta(1/\varepsilon)\) leaves satisfying \(\pi_s(v)>\varepsilon d_v\), so the ACL guarantee forces that many nonzero estimates, yielding an \(\Omega(1/\varepsilon)\) time lower bound for explicit output.

Under the SDDM convention specified in \Cref{sec:preliminaries}, our main result gives a nearly optimal dependence on $\eps$ while removing the dependence on the teleportation parameter $\alpha$.

\begin{theorem}[Main result]\label{thm:main}
There is an algorithm that computes an approximate PPR vector satisfying the ACL $\eps$-error guarantee in
\(
  O\bigl(\eps^{-1-o(1)}\bigr)
\)
expected work in this model. In particular, the bound is independent of $\alpha$ and  nearly optimal to $\Omega(\eps^{-1})$ lower bound.
\end{theorem}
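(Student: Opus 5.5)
We plan to reduce the problem to a sequence of local linear solves in the matrix $M=\alpha D+\frac{1-\alpha}{2}L$-type SDDM system, written as $(I-(1-\alpha)AD^{-1})\pi_s=\alpha e_s$ after symmetrization. Here $L$ denotes the graph Laplacian and $A$ the adjacency matrix. The symmetrized system is $\bigl(D-\tfrac{1-\alpha}{1+\alpha}A\bigr)$-like, which is SDDM. We will work in the degree-normalized variable $x=D^{-1}\pi_s$. In this variable the ACL guarantee becomes an $\ell_\infty$ guarantee $\|\hat x-x\|_\infty\le\eps$.

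\textbf{Locality.} The first step is to identify a small support. By the standard push/$\ell_1$-regularized analysis (\cite{FRSCM19}), every node with $\pi_s(u)>\eps d_u$ lies in a set $S$ with $\vol(S)\le 1/\eps$, since $\sum_u \pi_s(u)=1$. Nodes outside a slightly enlarged set can be output as $0$. The difficulty is that $S$ is unknown. We would therefore use an active-set iteration in the style of \cite{WY26,CWY26}. We maintain an active set $S_k$ and solve the PPR system restricted to $S_k$, with zero boundary outside it. This restricted solution is monotonically a lower bound on $\pi_s$, because the restricted Green's function is dominated by the global one. We then add every node whose residual violates the $\eps d_u$ threshold. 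Monotonicity ensures $S_k$ only grows and stays inside $\{u:\pi_s(u)\gtrsim \eps d_u\}$. Hence $\vol(S_k)=O(1/\eps)$ throughout.

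\textbf{Cost per round.} Each restricted solve is an SDDM system on a subgraph of volume $m_k\le 1/\eps$. With an idealized almost-linear SDDM solver (the convention of \Cref{sec:preliminaries}), it costs $m_k^{1+o(1)}\log(1/\delta)$. The accuracy $\delta$ is chosen polynomial in $\eps$, so no $\alpha$ dependence enters: the condition number only appears inside the solver's logarithm. That logarithm is absorbed in the $o(1)$ exponent, or removed entirely in the idealized model.

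\textbf{Main obstacle.} The key difficulty is bounding the number of rounds. A naive count gives up to $1/\eps$ rounds, which only yields $\eps^{-2}$ total work. To get $\eps^{-1-o(1)}$, we would make the rounds geometric. We run phases with thresholds $\eps_j=2^{-j}$ for $j=0,\dots,\log(1/\eps)$. Within a phase we add all violators at once, and we re-solve only after the active volume grows by a constant factor, or after the residual mass above threshold halves. We then need a potential argument showing that each re-solve either doubles $\vol(S_k)$ or halves the outside residual mass $\sum_{u\notin S_k}r(u)$. The second alternative can happen only $O(\log(1/\eps))$ times per phase, because the residual mass starts at most $1$ and we stop once it falls below $\eps$. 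This yields $\mathrm{polylog}(1/\eps)$ solves per phase. The total work is then $\sum_j (1/\eps_j)^{1+o(1)}\mathrm{polylog}=\eps^{-1-o(1)}$.

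Proving the halving of residual mass without any $\alpha$ factor is the step I expect to be hardest. On the active set the residual is exactly zero after the solve, so outside mass flows only across the boundary. I would try to bound that boundary flow by the volume of newly added nodes times $\eps$, using the monotonicity from the locality step. Randomization would enter by sampling the order of violators to obtain the expected bound.
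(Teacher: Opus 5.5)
There is a genuine gap: the locality step and the round-count potential both fail, and this is exactly where the $\alpha$-dependence and the quadratic cost reappear.

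\textbf{Locality.} In your variable $x=D^{-1}\pi_s$ the system is $Mx=\alpha\mathbf e_s$ with $M=D-(1-\alpha)A$. Since $M^{-1}\mathbf d=\alpha^{-1}\one$, a bound $0\le\alpha\mathbf e_s-M\widehat x\le\delta\mathbf d$ certifies only $|x-\widehat x|\le\delta/\alpha$, so the threshold must be $\delta=\alpha\eps$. With threshold $\eps d_u$ you would get violators with $\pi_s(u)>\eps d_u$, but the certificate would be off by $1/\alpha$. With the correct threshold, monotonicity only gives $\pi_s(u)\ge(1-\alpha)\sum_{v\in N(u)\cap S_k}\widehat x_v>\alpha\eps d_u$, hence $\vol(S_k)\le1/(\alpha\eps)$.

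This bound is tight in spirit. Take a long path with $s$ in the middle and $\alpha=\eps^4$. The zero-boundary solution $M_{SS}z_S=(\mathbf e_s)_S$ on $S=[-k,k]$ has $z_{\pm k}\approx1/2$, so $(\mathbf e_s-Mz)_{\pm(k+1)}\approx1/2>\eps d_u$ for every $k\ll\alpha^{-1/2}$. Each round therefore admits exactly the two endpoints. The set grows to radius $\Theta(\alpha^{-1/2}\log(1/\eps))=\Theta(\eps^{-2}\log(1/\eps))$ over as many rounds, for total work $\Omega(\eps^{-4})$. Yet $\pi_s\le\sqrt{\alpha}\,\one\le\eps\mathbf d$, so $\widehat\pi_s=0$ is already a valid answer.

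The missing idea is \emph{not} to drive the residual to zero on the active set. Instead, leave a floor $\Theta(\eps d_v)$ at every active vertex. This is what the $\ell_1$ penalty does in \cite{FRSCM19,WY26}, and what the floor $qd_v$ and exterior reserve $4d_vx_v$ in \eqref{nr:state} do in the paper. It gives $q\vol(H)\le\|\mathbf e_s-Mx\|_1\le1$ independently of $\alpha$, and the output is then pruned to $(x-c\one)_+$ to keep the residual nonnegative.

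\textbf{Round count.} The same example refutes your potential. Each re-solve adds two vertices to a set of size $2k+1$, while the total residual mass $1-\alpha\mathbf d^{\mathsf T}z$ stays close to $1$, all of it above threshold at the two boundary vertices. So neither ``volume doubles'' nor ``mass halves'' occurs. Several further points fail as well.
\begin{itemize}
\item The stopping rule ``mass below $\eps$'' is the wrong target. The ACL certificate $r\le\eps\mathbf d$ is pointwise, and its total mass may be $\Theta(1)$; forcing the $\ell_1$ mass below $\eps$ needs $\alpha$-dependent exploration.
\item The rule ``re-solve only after the volume grows by a constant factor'' is circular, because in your scheme violators are revealed only by solves.
\item The geometric thresholds $\eps_j$ do not help: the path already misbehaves at $\eps_j=1/8$.
\item The boundary flow after a solve cannot be bounded by $\eps$ times the new volume. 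The part coming from coordinates that more than double is controlled only by $q\vol(D)+2\sum_{v\in D}r^0_v$, which is large whenever residual concentrates on a small set $D$.
\end{itemize}

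\textbf{How the paper resolves these.}
\begin{itemize}
\item \textsc{ExteriorPush} discovers vertices without any solve, at total cost $O(1/\eps)$ independent of $\alpha$. There is one admitting push per vertex, and every other push removes $\Omega(qd_v)$ of work residual thanks to the reserve.
\item The comparison lemma shows that the work residual above the floors halves between consecutive full solves, unless the new input $J$ is large or the doubling set $D$ is large. Large $J$ is rare because $J$ has an $O(1)$ global sum. Large $D$ is rare because each vertex doubles only $O(\log N)$ times, as $c\le x_v\le3N$.
\item Every full solve except the $O(N/b)$ restarts is preceded by a density normalization, done with budgeted cache solves of volume below $2b$. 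This prevents a small $D$ from carrying the residual and by itself gives $\widetilde O(\eps^{-3/2})$.
\item Recursing on that normalization task with depth about $\sqrt{\log(1/\eps)/\log\log(1/\eps)}$ yields $\eps^{-1-o(1)}$.
\end{itemize}
Randomly ordering violators plays no role; ``expected'' refers only to the SDDM primitive.
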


As a direct consequence, to the best of our knowledge, our PPR algorithm yields the first nearly linear time algorithm for local graph clustering with complexity independent of $\phi$.

\begin{corollary}\label{cor:local-clustering}
For a reference set $C$ of volumn $\vol(C)=k$ conductance $\phi$, there is a local algorithm that finds a set $S$ of conductance $O(\sqrt{\phi\log k})$ and volume $O(k)$ in \(k^{1+o(1)}\) time.
\end{corollary}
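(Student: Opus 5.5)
The corollary follows by plugging our PPR routine from \Cref{thm:main} into the standard ACL sweep-cut framework. We reuse the ACL analysis verbatim and only replace the Push subroutine.

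\textbf{Step 1: parameters.} Given $C$ with $\vol(C)=k$ and conductance $\phi$, we follow the ACL local partitioning theorem~\cite{ACL06,ACL07}. The procedure is:
\begin{enumerate}
\item choose $\alpha=\Theta(\phi)$;
\item choose a seed $s$ from the good subset $C_\alpha\subseteq C$, which has $\vol(C_\alpha)\ge \vol(C)/2$;
\item choose a guessed target volume $2^b$ with $2^b=\Theta(k)$;
\item choose $\eps=\Theta\bigl(1/(2^{b}\log k)\bigr)$, or the analogous choice $\eps=\Theta(1/k)$ up to polylogarithmic factors as in ACL's PageRank-Nibble.
\end{enumerate}
ACL show that for $s\in C_\alpha$, any vector $\hat\pi_s$ that is an $\eps$-approximate PPR vector in their sense yields a sweep set $S$ of conductance $O(\sqrt{\alpha\log k})=O(\sqrt{\phi\log k})$ and volume $O(k)$. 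They prove this via the Lov\'asz--Simonovits curve argument applied to $\hat\pi_s D^{-1}$.

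\textbf{Step 2: the error notion.} The main point to check is that the ACL analysis uses only the property that $\hat\pi_s=\mathrm{pr}(\alpha,\chi_s-r)$ for some nonnegative residual $r$ with $r(u)\le\eps d_u$. It does not use the specific mechanics of Push. This is stronger than the two-sided guarantee of \Cref{def:acl-error}. I would verify that our algorithm of \Cref{thm:main} returns (or can be post-processed to return) such a one-sided approximation. If it only gives the pointwise bound $|\hat\pi_s(u)-\pi_s(u)|\le\eps d_u$, I would instead rerun the sweep analysis with the perturbed vector. Since $\sum_u|\hat\pi_s(u)-\pi_s(u)|$ restricted to any set $S$ is at most $\eps\vol(S)$, the Lov\'asz--Simonovits curve shifts by at most $\eps\vol(S)=O(1/\log k)$ on the relevant range. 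The ACL thresholds already absorb such an additive slack, so constants change but the conclusion does not.

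\textbf{Step 3: running time.} The PPR computation costs $O(\eps^{-1-o(1)})=O\bigl((k\,\mathrm{polylog}\,k)^{1+o(1)}\bigr)=k^{1+o(1)}$, with no dependence on $\alpha$ and hence none on $\phi$. The polylogarithmic factors are absorbed into $k^{o(1)}$. The output support has volume $O(\eps^{-1})$. Sorting it and performing the sweep, computing conductances incrementally, costs $O(\eps^{-1}\log(1/\eps))=k^{1+o(1)}$. Trying all $O(\log k)$ values of $b$ (or doubling on $k$ if it is unknown) adds only a logarithmic factor. Choosing $s$ uniformly from $C$ by degree lands in $C_\alpha$ with probability at least $1/2$, which matches ACL's statement.

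\textbf{Main obstacle.} I expect Step 2 to be the hard part: reconciling the approximation notion produced by our SDDM-based solver with the residual-based notion that ACL's sweep lemma consumes. If the direct perturbation argument proves delicate, I would try first to extract an explicit nonnegative residual from the solver output by a final cleanup round of Push. That cleanup costs $O((\alpha\eps)^{-1}\cdot\eps\,\|\text{residual}\|)$, which is small once the residual is already tiny. Making it small enough may require solving to accuracy $\eps'=\eps\alpha$ only on mass. If that also fails, I would fall back to the two-sided perturbation argument sketched in Step 2.
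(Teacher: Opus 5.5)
Your argument is the paper's: substitute \Cref{thm:main} into the ACL sweep framework with $\alpha=\Theta(\phi)$ and $\eps=\Theta(1/k)$ (up to logarithmic factors), so the PPR step costs $\eps^{-1-o(1)}=k^{1+o(1)}$ with no $\phi$-dependence. The Step~2 obstacle you anticipate does not arise: the algorithm returns $z$ with $0\le\mathbf e_s-Mz\le\eps\mathbf d$ (this is exactly how \Cref{thm:main} is proved via \Cref{lem:acl-residual}), so $\widehat\pi_s=\alpha DM^{-1}(\mathbf e_s-r)$ with $0\le r\le\eps\mathbf d$, which is ACL's one-sided approximate PageRank once the non-lazy parameter is set to $2\alpha/(1+\alpha)$ to match their lazy walk; drop both fallbacks, because neither works without bringing back the $\phi$-dependence (a signed error re-enters at each of the roughly $1/\alpha$ steps of the Lov\'asz--Simonovits induction and forces $\eps\lesssim\alpha/k$, and a Push cleanup reintroduces the $1/\alpha$ factor).
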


Indeed, the local clustering framework of Andersen, Chung, and Lang~\cite{ACL06,ACL07} uses the Push algorithm to compute an ACL $\eps$-approximate PPR vector with support volume $O(1/\eps)$. They set $\eps=\Theta(1/k)$ and $\alpha=\Theta (\phi)$ to find the cut $S$.
Substituting the algorithm from \Cref{thm:main} into this framework gives a PPR computation time of $\eps^{-1-o(1)}=k^{1+o(1)}$. Thus, the resulting local clustering algorithm runs in nearly linear time in the target volume $k$. This improves upon the $O(k/\phi+k\log k)$ runtime by \cite{ACL06,ACL07} and the $O(k^2\mathrm{polylog} \frac{k\log n}{\phi})$ runtime by \cite{WY26}.

Beyond its application to local graph clustering, we believe that our result has the potential to improve a range of other problems, including effective resistance computation~\cite{YangEtAl2025ER,CuiEtAl2025ER}, sublinear-time solvers for diagonally dominant linear systems~\cite{KwokEtAl2026}, and computation in the Friedkin--Johnsen model~\cite{FengEtAl2026}. Establishing these consequences would require additional problem-specific reductions and analyses. Since these directions are beyond the scope of this paper, we leave them to future work.

\begin{figure}[!ht]
 \centering
 \includegraphics{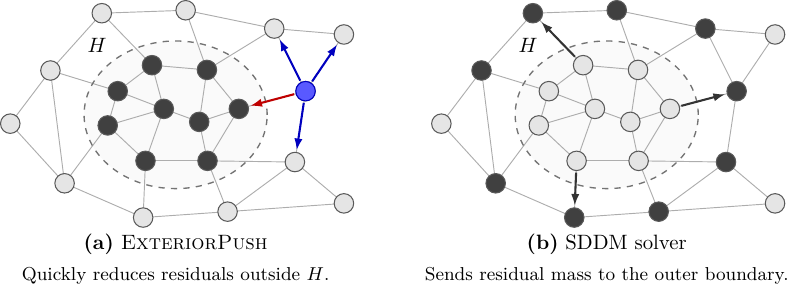}
 \caption{Residual distributions after local push and SDDM solve on $H$.
 Darker vertices indicate larger residuals.}
 \label{fig:push-sddm-residuals}
\end{figure}

\subsection{Overview of our method}\label{sec:overview}

Our method combines two existing algorithms for local PPR computation. The first is the Push algorithm of Andersen, Chung, and Lang~\cite{ACL06,ACL07}, which pushes residual mass from a single vertex at a time and can be viewed as a coordinate-wise Gauss--Seidel iteration. Although each push is local, the same vertex may need to be processed repeatedly. This leads to an $O(1/(\alpha\eps))$ running time, with a linear dependence on $1/\alpha$.

The second is the SDDM solver of Spielman and Teng~\cite{ST14}, subsequently used by Wei and Yang~\cite{WY26} to design the active-set method for PPR computation. Given an explored active set $H$ with $\vol(H)=O(1/\eps)$, an SDDM solve takes time nearly linear in $\vol(H)$ and depends only logarithmically on the residual accuracy. However, in the active-set framework of Wei and Yang, each iteration may discover only one new vertex outside $H$. Thus, as many as $O(1/\eps)$ iterations may be needed, with each iteration solving a system on the growing active set. This repeated work leads to a quadratic dependence on $1/\eps$, giving a total running time of $O\bigl(\eps^{-2}\,\mathrm{polylog}((\alpha\eps)^{-1})\bigr)$.

To avoid this repeated work, we combine Push and SDDM solves in a new framework. We design \textnormal{\textsc{ExteriorPush}}, a variant of ACL Push that quickly discovers a active set $H\subseteq V$ in which large residuals are concentrated inside $H$, while ensuring that the residuals outside $H$ are small, see \Cref{fig:push-sddm-residuals}(a). We then apply the SDDM solver inside $H$ to make its internal residuals sufficiently small, sending residual mass across its boundary to vertices outside $H$, see \Cref{fig:push-sddm-residuals}(b). The two algorithms therefore complement each other: \textnormal{\textsc{ExteriorPush}} efficiently explores the graph and controls residuals outside the active set, while the SDDM solver reduces residuals inside it through a joint update.

However, the main difficulty is to coordinate these two operations within a single framework while controlling the total work of their repeated calls. Carrying out this framework is more involved. We first present in \Cref{sec:three-halves} a simpler algorithm that runs in $\widetilde O(\eps^{-3/2})$ time and already breaks the quadratic bound of \cite{WY26}. Its basic idea is to use \textnormal{\textsc{ExteriorPush}} to discover $H$ and control the residuals outside $H$, and then to apply SDDM solves repeatedly to the high-residual vertices inside $H$ to reduce those residuals efficiently. This framework is subsequently used to design a recursive algorithm (\Cref{sec:recursive-reduction}), which recurses on subgraphs of successively smaller volume and finally yields the $O(\eps^{-1-o(1)})$ expected running time in \Cref{thm:main}, independent of $\alpha$.

\stitle{AI Declaration.} The result of this paper is derived by the interaction between the authors and the GPT-6-Astra. We have independently checked and rewritten all the contexts.

\section{Preliminaries}\label{sec:preliminaries}

Let $G=(V,E)$ be a simple unweighted and undirected graph, and let $s\in V$ be the seed node. We write $A\in\mathbb R^{V\times V}$ for the adjacency matrix, $d_u$ for the degree of a node $u$, $\mathbf d=(d_u)_{u\in V}$ for the degree vector, and $D:=\diag(\mathbf d)$ for the degree matrix. The graph Laplacian is $L:=D-A$, the probability transition matrix is $P=AD^{-1}$.
For a teleportation parameter $\alpha\in(0,1)$, the personalized PageRank vector $\pi_s\in\mathbb R^V$ is the unique solution to
\[
  \pi_s=\alpha\mathbf e_s+(1-\alpha)P\pi_s,
\]
where $\mathbf e_s$ is the one-hot vector corresponding to $s$. Equivalently, defining $M:=D-(1-\alpha)A$, we can write $\pi_s=\alpha DM^{-1}\mathbf e_s$.

We count arithmetic operations and adjacency-list queries.  Assume
$0<\eps\le1$ and a nonisolated source; an isolated source is returned
exactly.  Matrix identities are understood on the positive-degree
vertices, with zero output at other isolated vertices.  

An SDDM matrix is a symmetric positive-definite, diagonally dominant matrix whose off-diagonal entries are nonpositive. The matrix $M$ is an SDDM matrix, and every principal submatrix $M_{SS}$ is also SDDM with an entrywise nonnegative inverse. Throughout the algorithm, we use the following nearly linear time SDDM solver as a key subroutine. For the algorithmic analysis, an SDDM solve on a set
$T$ is an idealized primitive: it returns the exact solution and is
charged $\widetilde O(\vol(T))$ work. 

\begin{lemma}[Nearly linear time SDDM solver~\cite{ST14}]\label{lem:sdd}
  Let $M\in\mathbb R^{n\times n}$ be an SDDM matrix with $q$ nonzero entries, let $b\in\mathbb R^n$, and let $0<\eta<1$. One can compute $\widetilde x$ satisfying
  \[
   \norm{\widetilde x-M^{-1}b}_M
   \le \eta\norm{M^{-1}b}_M\footnotemark
  \]
  in $\Ot(q\log(1/\eta))$ time with probability at least $1-n^{-O(1)}$.
  \footnotetext{For any vector $z\in\mathbb R^n$, we define $\norm{z}_M:=\sqrt{z^{\mathsf T}Mz}$.}
\end{lemma}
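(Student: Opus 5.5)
This lemma is the Spielman--Teng solver, which the paper uses as a black box, so I would establish it by reduction to their Laplacian solver rather than reprove the solver. First I reduce the SDDM system to a graph Laplacian system. Write $M=L_G+\diag(x)$, where $L_G$ is the Laplacian of the weighted graph given by the negated off-diagonal entries, and $x\ge 0$ is the diagonal excess. Add one ground vertex $g$ joined to each $u$ by an edge of weight $x_u$. The resulting Laplacian $\widehat L$ has $O(q)$ nonzeros, and $M$ is exactly the Schur complement obtained by deleting the row and column of $g$, that is, grounding $g$. Solving $\widehat L\widehat y=(b,-\one^{\mathsf T}b)$ and then subtracting $\widehat y_g$ from all coordinates recovers $M^{-1}b$. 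The $M$-norm of the error equals the $\widehat L$-seminorm of the error with $g$ fixed, so accuracy transfers without loss. Since $M$ is positive definite, some $x_u>0$ in every connected component, so the grounded graph is connected and the reduction is valid.

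For the Laplacian solve I would follow the standard recursive preconditioning scheme. Build a low-stretch spanning tree $T$ of the weighted graph, with total stretch $\widetilde O(m)$. Form an ultrasparsifier: $T$ plus $\widetilde O(m/k)$ edges sampled according to stretch, giving a preconditioner $B$ with $B\preceq \widehat L\preceq k B$. Next, eliminate the degree-one and degree-two vertices of $B$ by partial Cholesky factorisation, which leaves a Laplacian on $\widetilde O(m/k)$ vertices. Apply preconditioned Chebyshev iteration, which needs $O(\sqrt{k}\log(1/\eta))$ outer iterations, and solve the inner systems recursively. With $k$ a suitable polylogarithm, the recurrence $W(m)=\sqrt{k}\,\bigl(O(m)+W(\widetilde O(m/k))\bigr)$ solves to $\widetilde O(m)$ per constant-factor error reduction. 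Repeating gives the $\log(1/\eta)$ factor in the $\widehat L$-norm, and hence the same bound in the $M$-norm.

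The main obstacle is making the recursion's errors compose. The inner solves are themselves approximate, so at each level the preconditioner is only an approximately linear operator. I would handle this with the usual argument that Chebyshev iteration (or PCG) driven by an $\epsilon$-accurate inner solver still contracts, provided the inner accuracy is a fixed constant depending on the depth. The sampling step in the sparsifier succeeds only with high probability, and a union bound over the $O(\log n)$ levels gives the stated $1-n^{-O(1)}$ success probability. For this paper the idealized exact-solve convention makes these details moot: only the $\widetilde O(\vol(T))$ cost is used, and I would simply cite~\cite{ST14} for it.
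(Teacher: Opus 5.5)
Your proposal agrees with the paper, which gives no proof of this lemma and simply cites Spielman--Teng as a black box; under its idealized convention it uses only the $\widetilde O(\vol(T))$ cost. Your grounding reduction and recursive-preconditioning outline are the standard justification behind that citation, with two small corrections: deleting the row and column of $g$ gives the principal submatrix $\widehat L_{VV}=M$, not a Schur complement (eliminating $g$ would give the singular matrix $M-xx^{\mathsf T}/(\one^{\mathsf T}x)$), and the tree-plus-stretch-sampling ultrasparsifier you describe is the Koutis--Miller--Peng construction rather than the one in \cite{ST14}, though either suffices.
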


For any vector $z\in\mathbb R^V$, define its residual by
$\mathbf e_s-Mz$.  The residual lets us check the PageRank error
without knowing the exact PageRank vector.

\begin{lemma}[Residual and approximation error]
\label{lem:acl-residual}
For any $z\ge0$, the approximation $\widehat\pi_s=\alpha Dz$ satisfies
\[
 \pi_s-\widehat\pi_s=\alpha DM^{-1}(\mathbf e_s-Mz).
\]
In particular, $0\le \mathbf e_s-Mz \le\eps\mathbf d$ implies that
$\widehat\pi_s$ satisfies the ACL $\eps$-error guarantee.
\end{lemma}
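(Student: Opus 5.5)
The identity comes from expanding the definitions, and the error bound then follows from a nonnegativity argument about $M^{-1}$. For the identity, recall from the preliminaries that $\pi_s=\alpha DM^{-1}\mathbf e_s$. I will write $\widehat\pi_s=\alpha Dz=\alpha DM^{-1}Mz$, using that $M$ is invertible on the positive-degree vertices. Subtracting gives
\[
  \pi_s-\widehat\pi_s=\alpha DM^{-1}\mathbf e_s-\alpha DM^{-1}Mz=\alpha DM^{-1}(\mathbf e_s-Mz).
\]
The hypothesis $z\ge0$ is not needed for the identity.

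For the ``in particular'' part, let $r:=\mathbf e_s-Mz$ and assume $0\le r\le\eps\mathbf d$. The preliminaries state that $M$ is SDDM with an entrywise nonnegative inverse, so $M^{-1}\ge 0$. Since $D\ge0$ and $\alpha>0$, the map $x\mapsto\alpha DM^{-1}x$ is monotone in the entrywise order. This gives
\[
  0\le \pi_s-\widehat\pi_s\le \eps\,\alpha DM^{-1}\mathbf d .
\]
It therefore suffices to prove that $\alpha DM^{-1}\mathbf d=\mathbf d$, that is, $M^{-1}\mathbf d=\alpha^{-1}\one$. This follows from $M\one=\mathbf d-(1-\alpha)A\one=\mathbf d-(1-\alpha)\mathbf d=\alpha\mathbf d$, where $A\one=\mathbf d$ because the graph is unweighted. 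Consequently $\alpha DM^{-1}\mathbf d=D\one=\mathbf d$, and so $|\widehat\pi_s(u)-\pi_s(u)|\le\eps d_u$ for every $u$, which is the ACL $\eps$-error guarantee of \Cref{def:acl-error}.

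The only delicate point is the isolated vertices. By the stated convention, $M$ and all matrix identities are restricted to positive-degree vertices, which makes $M$ invertible there; isolated vertices receive output $0=\pi_s(u)$. Given that convention, the argument above is complete. The key step is the identity $M\one=\alpha\mathbf d$, which converts a residual bound proportional to $\mathbf d$ into an error bound proportional to $\mathbf d$.
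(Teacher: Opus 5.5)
Your proposal is correct and follows essentially the same route as the paper. Both derive the identity from $\pi_s=\alpha DM^{-1}\mathbf e_s$, then combine entrywise nonnegativity of $M^{-1}$ with $M\one=\alpha\mathbf d$ to obtain $0\le\pi_s-\widehat\pi_s\le\alpha DM^{-1}(\eps\mathbf d)=\eps\mathbf d$. Your added remarks, that $z\ge0$ is not needed for the identity and that isolated vertices are covered by the paper's convention, are also accurate.
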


\begin{proof}
The identity follows from $\pi_s=\alpha DM^{-1}\mathbf e_s$.
Since $M^{-1}$ is entrywise nonnegative and
$M\one=\alpha\mathbf d$, the residual bound gives
\[
 0\le\pi_s-\widehat\pi_s
   =\alpha DM^{-1}(\mathbf e_s-Mz)
   \le\alpha DM^{-1}(\eps\mathbf d)
   =\eps\mathbf d.
\]
\end{proof}

\stitle{The \textnormal{\textsc{Push}} algorithm.}
We recall the classic \textnormal{\textsc{Push}} algorithm of \cite{ACL06,ACL07}. It maintains a vector $z\ge0$ and its residual
$r:=\mathbf e_s-Mz$. By \Cref{lem:acl-residual}, the goal is to reach
$0\le r\le\eps\mathbf d$. A \emph{push} on a vertex $v$ increases $z_v$ by
$t=r_v/d_v$. By the definition of residual,
we reset $r_v$ to zero and adds $(1-\alpha)t$ to $r_u$ for each neighbor
$u$ of $v$, leaving all other coordinates unchanged. A single push operation at $v$ takes $O(d_v)$ queries. The \textnormal{\textsc{Push}} algorithm
starts from $z=0$ and $r=\mathbf e_s$, and pushes a vertex $v$ whenever
$r_v>\eps d_v$, until no such vertex remains; see \Cref{alg:push}.

\begin{algorithm}[H]
\caption{\textnormal{\textsc{Push}}~\cite{ACL06,ACL07}}
\label{alg:push}
\small
\KwIn{$G,s,\alpha\in(0,1)$ and $\eps\in(0,1]$.}
\KwOut{An ACL $\eps$-approximation $\widehat\pi_s$.}
$z\leftarrow0$; $r\leftarrow\mathbf e_s$\;
\While{there is a vertex $v$ with $r_v>\eps d_v$}{
 $t\leftarrow r_v/d_v$\;
 $z_v\leftarrow z_v+t$; $r_v\leftarrow0$\;
 \lFor{$u\in N(v)$}{$r_u\leftarrow r_u+(1-\alpha)t$}
}
\Return $\widehat\pi_s=\alpha Dz$\;
\end{algorithm}

\begin{lemma}[\cite{ACL06,ACL07}]\label{lem:push}
\Cref{alg:push} runs in $O(1/(\alpha\eps))$ time and returns $z\ge0$ with
\[
 0\le\mathbf e_s-Mz\le\eps\mathbf d.
\]
Consequently, $\widehat\pi_s=\alpha Dz$ satisfies the ACL $\eps$-error
guarantee.
\end{lemma}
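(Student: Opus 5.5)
The plan is to prove \Cref{lem:push} with the standard potential argument of Andersen, Chung, and Lang, tracking the invariant $r=\mathbf e_s-Mz$ and the total residual mass $\one^{\mathsf T}r$.

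\textbf{Invariants.} First, every push preserves $r=\mathbf e_s-Mz$. Increasing $z_v$ by $t$ changes $Mz$ by $t(d_v\mathbf e_v-(1-\alpha)A\mathbf e_v)$. This matches exactly the update that zeroes $r_v=td_v$ and adds $(1-\alpha)t$ to each neighbor. Second, by induction $z\ge0$ and $r\ge0$. This holds because $t=r_v/d_v>0$ and a push only zeroes one coordinate and increases others.

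\textbf{Correctness.} At termination there is no $v$ with $r_v>\eps d_v$, so $0\le r\le\eps\mathbf d$. \Cref{lem:acl-residual} then gives the ACL $\eps$-error guarantee for $\widehat\pi_s=\alpha Dz$.

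\textbf{Running time.} Consider the potential $\Phi=\one^{\mathsf T}r$, which starts at $\Phi=1$ and stays nonnegative. A push at $v$ removes $r_v$ and redistributes $(1-\alpha)t\,d_v=(1-\alpha)r_v$. So $\Phi$ drops by exactly $\alpha r_v$, which is more than $\alpha\eps d_v$ since $r_v>\eps d_v$ when $v$ is pushed. The push costs $O(d_v)$ work. Summing over all pushes,
\[
\sum_{\text{pushes}} d_v<\frac{1}{\alpha\eps}\sum_{\text{pushes}}\alpha r_v\le\frac{1}{\alpha\eps}\,\Phi_{\text{initial}}=\frac{1}{\alpha\eps}.
\]
Hence the total work of pushes is $O(1/(\alpha\eps))$. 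Since the algorithm only makes finitely many pushes, it also terminates.

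\textbf{Bookkeeping.} The remaining step is checking that selecting a vertex with $r_v>\eps d_v$ costs $O(1)$ amortized. I would maintain a queue of candidate vertices. A vertex is enqueued only when its residual crosses the threshold during a neighbor's push, so each enqueue is charged to an edge scan already counted above. Stale entries are discarded at $O(1)$ each. The initialization touches only $s$, which is fine since $d_s\ge1$ and $\eps\le1$.

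This bookkeeping is the only mildly delicate point; the potential argument itself is routine.
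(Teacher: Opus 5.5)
Your proposal is correct and follows the argument the paper relies on. The paper cites \cite{ACL06,ACL07} and, in \Cref{nr:design}, uses exactly your accounting: a push at $v$ costs $O(d_v)$ and lowers $\|\mathbf e_s-Mz\|_1$ by $\alpha d_v t=\alpha r_v>\alpha\eps d_v$, while correctness comes from the stopping rule together with \Cref{lem:acl-residual}.
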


Intuitively, the factor $1/\alpha$ arises because residual sent to a neighbor may later
return, so the same vertex can be pushed many times, while each push settles
only an $\alpha$ fraction of the residual. We refer to \Cref{sec:three-halves} for the detailed analysis.

\section{The \texorpdfstring{$\widetilde O(\eps^{-3/2})$}{three-halves} non-recursive algorithm}
\label{sec:three-halves}

In this section, we present an algorithm that breaks the previous $\Ot(\eps^{-2})$ barrier \cite{WY26} (see \Cref{nr:runtime}). Its design and analysis are not directly used in the proof of \Cref{thm:main}. Nevertheless, it provides a framework whose core ideas and proof strategy will be reused in the more involved algorithms that follow. The design and analysis of the nearly linear time algorithm are given in \Cref{sec:recursive-reduction,sec:technical-proofs}.

Following the strategy of $\ell_1$-regularized PageRank \cite{MWP23,WY26}, we discover an growing active set $H$ such that directly invoking the SDDM solver (\Cref{lem:sdd}) on the final $H$ is enough for the ACL guarantee. The central task is to discover $H$ such that: (i) $\vol(H)=O(1/\eps)$; (ii) the time to discover $H$ is cheap and (iii) after the SDDM solve on $H$, the residual satisfies $0\le \mathbf e_s-Mx \le\eps\mathbf d$. Then we use \Cref{lem:acl-residual} to show that the ACL guarantee is satisfied. 

Condition (i) can be easily satisfied: we retain $\Theta(\eps d_v)$ residual at each of the vertex of $H$ (i.e., $\mathbf (e_s-Mx)_v=\Theta(\eps d_v)$ for $v\in H$). Since $\|\mathbf e_s-Mx\|_1\le 1$, this would give
$\vol(H)=O(1/\eps)$, so the final SDDM solve would cost only
$\widetilde O(1/\eps)$. However, the condition (iii) is essential:
solving on $H$ must also leave the exterior $V\setminus H$ within the residual
bound. An arbitrary small region need not have this property. 

The previous paper \cite{WY26} use $\Ot(\eps^{-2})$ time to discover the active set $H$ satisfies (i), (iii) by repeated SDDM solves on $H$ and discover large residuals on $V\setminus H$. To break the quadratic bound and make the process more efficient, we combine the idea of Push algorithm \cite{ACL06,ACL07} and the SDDM solver. There are two costs to control. Ordinary push can discover new vertices,
but repeated visits introduce a dependence on $1/\alpha$. Repeated SDDM
solves avoid those intermediate visits, but may cost quadratically in
$1/\eps$ if each solve reveals very little new vertices in $V\setminus H$. We first make discovery
inexpensive, then determine when a whole-region solve is worth its cost.

Write $q=\eps/96$ and $N=1/q$. The parameter $b$
is the volume scale of the smaller regions we will solve on. We discuss
$0<\alpha\le1/2$; ordinary push already gives $O(1/\eps)$ work for
$\alpha>1/2$.

\begin{theorem}[Nonrecursive algorithm]\label{nr:runtime}
For any integer $1\le b\le N$, \Cref{nr:alg-main} computes an ACL
$\eps$-approximation in $\widetilde O(N^2/b+Nb)$ expected time, using
$O(N)$ graph adjacency queries. In particular, $b\asymp\sqrt N$
gives $\widetilde O(\eps^{-3/2})$ expected time.
\end{theorem}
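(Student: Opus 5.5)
We have not yet seen \Cref{nr:alg-main}, so the plan below is for the algorithm we expect the section to describe: \textsc{ExteriorPush} discovers $H$ and pushes only at vertices outside it, and SDDM solves on subsets of volume about $b$ handle the high-residual vertices inside $H$. The proof has three parts: correctness, a budget for the exterior pushes and adjacency queries, and a budget for the SDDM solves.

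\textbf{Correctness.} We maintain two invariants throughout:
\[
z\ge0 \qquad\text{and}\qquad r=\mathbf e_s-Mz\ge0 .
\]
Pushes preserve both invariants. An exact SDDM solve on a set $T$ sets $r_T=0$, and it increases $z_T$ by $M_{TT}^{-1}r_T\ge0$. This uses the earlier fact that $M_{TT}^{-1}$ is entrywise nonnegative. The solve only adds nonnegative mass $(1-\alpha)A_{\bar T T}\,\Delta z_T$ to the vertices outside $T$. At termination, no vertex outside $H$ has $r_v>\eps d_v$, because that is the stopping rule of \textsc{ExteriorPush}. Inside $H$, the final solve, or the rule for selecting high-residual sets, leaves $r_v\le\eps d_v$. \Cref{lem:acl-residual} then gives the ACL guarantee. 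The constant $96$ in $q=\eps/96$ should absorb the slack between the thresholds used inside and outside $H$.

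\textbf{Volume and query bound.} Vertices enter $H$ only when they hold residual at least $c\,q\,d_v$. We also argue that every vertex keeps $\Theta(q d_v)$ residual, or absorbed that much mass, before it is settled. Since $\|r\|_1\le1$ and $\one^{\mathsf T} r$ is nonincreasing (because $\one^{\mathsf T}M=\alpha\mathbf d^{\mathsf T}\ge0$), this gives $\vol(H)=O(N)$. Adjacency queries happen only when a vertex is first explored, or during pushes outside $H$. We will charge each exterior push at $v$ to the event that $v$ joins $H$, or to $\Omega(qd_v)$ residual mass that leaves the graph. If the algorithm forbids repeated exterior pushes by moving a pushed vertex into $H$, then each vertex is pushed at most once. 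That gives $O(N)$ queries and $O(N)$ push work, with no dependence on $\alpha$.

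\textbf{SDDM cost, the main obstacle.} We group the high-residual vertices of $H$ into regions of volume about $b$ and solve on each region for $\widetilde O(b)$ time. We also do $O(N/b)$ whole-$H$ solves, each costing $\widetilde O(N)$, which contributes $\widetilde O(N^2/b)$. For the small solves, we would use a potential such as $\Phi=\sum_{v\in H}r_v$, or the number of vertices of $H$ still above threshold. The key step is that a small solve is triggered only when some region holds $\Omega(qb)$ residual mass, so it pays for itself. Each solve on a region removes that mass into the teleportation sink, or sends it out. The hard part is that mass exported to neighbours inside $H$ can return, which is the same $1/\alpha$ phenomenon as in \Cref{lem:push}.

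Our first attempt at this would be an amortized argument. A whole-$H$ solve is performed whenever the total residual scheduled for small solves since the last whole solve exceeds $\Theta(b\cdot q)$ per unit of $N$. Once $H$ is solved exactly, the remaining residual inside $H$ is zero. Between two whole solves, then, only freshly injected mass can trigger small solves, and this bounds the small solves to $O(N)$ units of work of size $b$, i.e. $\widetilde O(Nb)$. The expectation in the bound comes from a randomized choice of region boundaries or thresholds, which avoids adversarial recurrence. Balancing the two terms at $b\asymp\sqrt N$ gives $\widetilde O(\eps^{-3/2})$.
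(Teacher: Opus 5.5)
\textbf{There is a genuine gap in your proposal.} It sits exactly at the step you call the main obstacle, and the discovery accounting also fails for small $\alpha$.

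\textbf{Discovery.} A push at $v$ forwards a $(1-\alpha)$ fraction of the pushed residual to its neighbours. So it cannot be charged to ``$\Omega(qd_v)$ residual mass that leaves the graph''. The alternative, ``absorbed that much mass'', only gives $\vol(H)\le 1/(\alpha q)$. The paper removes $\alpha$ by giving exterior vertices a reserve $4d_vx_v$, with the cap $x_v<c=3q/8$. A nonadmitting push has $t>q/5$, and the cap together with the summed state identity gives $\sum d_vt\le3/8$. Hence exterior pushes cost $O(N)$ and their total outflow is $O(1)$ (\Cref{nr:push-account}). Active vertices retain a permanent floor $qd_v$, separate from the work residual, and this is what gives $\vol(H)\le N$. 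In particular, a solve may clear only the work residual, never the whole residual on $H$.

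\textbf{SDDM cost.} Your rule ``between two whole solves only freshly injected mass can trigger small solves'' bounds neither the number of whole solves nor the number of small solves. Solves destroy only an $\alpha$ fraction of what they move. A whole solve can push $\Theta(\mu)$ across the boundary of $H$, and after exterior pushes and activations that mass re-enters $H$. So ``fresh'' mass is not globally $O(1)$, and nothing you state limits the whole solves to $O(N/b)$. You also cannot stop right after a final whole solve, because its boundary flow re-excites the exterior. The paper supplies two ingredients that your plan lacks.

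First, episodes use a budget $Q$, a density $\theta=Q/b$, and a counter $W$ that charges only exterior-push outflow and activations. This gives $\mu\le\mu_0+W<2Q$, and with a monotone cache that retains the floor $\theta d_v$ it yields $\vol(C)<2b$ (\Cref{nr:endpoint}). A cache solve that admits no new vertex needs at least $q$ of fresh exterior-push inflow into a cached row, because outflow into $H\setminus C$ comes back only through admission. This gives $O(bE+N)$ cache solves.

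Second, the comparison \Cref{nr:progress} states $\mu_{\rm next}\le J+q\vol(D)+2\sum_{v\in D}r_v^0+\|f\|_1$, where $D$ is the set on which a whole solve more than doubles $x_v$. Combined with the density bound from a successful episode, it forces the mass to halve unless one of three events occurs:
\begin{itemize}
\item $J\ge\mu/4$, which can happen only $O(N/b)$ times because $J$ is globally summable;
\item $\vol(D)>b/192$, which is rare because each vertex can double only $O(\log N)$ times, since $c\le x_v\le 3N$ (\Cref{lem:potential-completion});
\item a budget restart.
\end{itemize}
This is where the $\widetilde O(N/b)$ bound on full solves and episodes comes from (\Cref{nr:full-exceptions,nr:counts}).

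The correct stopping state is a successful episode at $\theta=q$, with the exterior drained and all active densities below $3q$; that is where \Cref{nr:accuracy} applies. Finally, the expectation in the running time comes from the randomized SDDM solver, not from a random choice of regions.
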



\subsection{Making discovery inexpensive}
\label{nr:design}

We start with \textnormal{\textsc{Push}} because it reveals the graph one adjacency list at
a time. A push that increases $x_v$ by $t$ costs $O(d_v)$, but decreases
the total mass of the residual $\|\mathbf e_s-Mx\|_1$ by only $\alpha d_vt$ (\Cref{alg:push}). Thus even a push
from residual above $qd_v$ may remove only $\Theta(\alpha qd_v)$ mass.
To avoid paying $1/\alpha$ for repeated visits, we separate the pushes
into two kinds that can be paid for differently. Note that all our push operations work outside $H$.

\begin{itemize}[leftmargin=*,itemsep=3pt,topsep=4pt]
\item \emph{Pushes that admits a new vertex to $H$ (admitting push).}
Once a vertex joins $H$, hand it over to SDDM solves and never push it
individually again. There is at most one admitting push per active
vertex, so these scans cost $O(\vol(H))$ in total. By bounding $\vol(H)\le N$, this is $O(N)$.
\item \emph{Pushes that do not admit a new vertex (nonadmitting push).}
These may revisit a vertex. Each must instead remove
$\Omega(qd_v)$ from the sum of work residual (which we discribe later), so
that the removed mass pays for its $O(d_v)$ scan.
\end{itemize}

The second requirement motivates a change in what we push. Outside
$H$, we temporarily set aside residual proportional to $d_vx_v$ and
process only the remainder, which we call the \emph{work residual}.
Increasing $x_v$ now both propagates residual and increases this reserve.
The work residual therefore decreases substantially even when $\alpha$
is small. We cap the potential accumulated in this way: once $x_v$
reaches an activation threshold, $v$ joins $H$ and subsequent updates
are handled by SDDM solves. This gives exactly the two kinds of push
above.

Concretely, let $c=3q/8$. An exterior vertex retains $4d_vx_v$ and
has $x_v<c$; an active vertex retains $qd_v$ and has $x_v\ge c$.
The maintained nonnegative state satisfies
\begin{align}
 \mathbf e_s-Mx=q\mathbf d_H+4D_{V\setminus H}x+r,
 \label{nr:state}
\end{align}
where $\mathbf d_H$ equals $\mathbf d$ on $H$ and vanishes elsewhere, $r$ is the work residual.
We push an exterior vertex only when $r_v>qd_v$, using
$t=\min\{r_v/(5d_v),c-x_v\}$. The update on $v$ consumes $d_vt$
and the reserve consumes another $4d_vt$; each neighbor receives
$(1-\alpha) t$. The minimum stops the push either when its work residual
is cleared or when the vertex is ready to join $H$. At activation,
replacing the reserve $4cd_v$ by $qd_v$ releases $qd_v/2$ back into
work residual. These are the updates in \Cref{nr:alg-exterior}.

We call this push procedure \textnormal{\textsc{ExteriorPush}}, and
call running it until every vertex $v\in V\setminus H$ satisfies $r_v\le qd_v$ a
\textit{drain}. The new \textnormal{\textsc{ExteriorPush}} guarantees two
properties: (i)  \textnormal{\textsc{ExteriorPush}} is a “free lunch”: every drain in the entire algorithm shares one $O(N)=O(\eps^{-1})$ total discovery cost (\Cref{nr:push-account}); and (ii) after a drain, the residual outside $H$ fits within the approximation error: $(\mathbf e_s-Mx)_v\le\eps d_v$ for every $v\in V\setminus H$.

For (i), intuitively, the total residual mass satisfies
$\|\mathbf e_s-Mx\|_1\le1$, and \textnormal{\textsc{ExteriorPush}}
performs only the two kinds of push described above. An admitting push
happens at most once per vertex and costs $O(d_v)$. After admission,
$v$ retains $qd_v$ residual, so $q\vol(H)\le\|\mathbf e_s-Mx\|_1\le1$
and $\vol(H)\le N$. Hence all admitting pushes cost
$O(\vol(H))=O(N)$. A nonadmitting push also costs $O(d_v)$.
It decreases the total work residual by $(4+\alpha)d_vt=\Omega(qd_v)$.
The total work residual is at most $\|\mathbf e_s-Mx\|_1\le1$, plus the newly activated work residual $q\vol(H)/2\le 1/2$, hence the total cost of nonadmitting pushes is also $O(N)$.
For (ii), the guarantee follows directly from the stopping condition of
a drain. For $v\notin H$, we have $r_v\le qd_v$ and $x_v<c$, so
\eqref{nr:state} gives
$(\mathbf e_s-Mx)_v=4d_vx_v+r_v\le(4c+q)d_v=\tfrac52qd_v\le\eps d_v$.

So next our goal is to make the residual inside $H$ also small. This suggests an \emph{intuitive algorithm}: start at $s$ and drain;
SDDM solve on the current $H$; send the residual to $V\setminus H$;
and repeat until $0\le\mathbf e_s-Mx\le\eps\mathbf d$. The SDDM solve leaves
very little work residual within $H$, but its boundary flow may
activate more vertices. All discovery in this algorithm is $O(N)$ by \textnormal{\textsc{ExteriorPush}}. Its unresolved cost is the repeated SDDM solve: a drain
may change only a small part of $H$, yet the next solve processes all
of it. We now ask what would justify paying for that solve.

\subsection{Making a whole-region solve worthwhile}
\label{nr:full-idea}

A path illustrates the difficulty with the intuitive algorithm. Place
the source $s$ at middle and consider a round in which the drain of \textnormal{\textsc{ExteriorPush}}
adds only the next vertex at each end of the active interval. Repeating
such rounds would mean solving on a growing interval to discover just
two more vertices each time. As a result, the total complexity falls back to $O(\eps^{-2})$.

However, we notice that the large residuals are just concentrated near the two newly admitted endpoints.
This suggests designing an algorithm invoking SDDM solver just \emph{locally} around those high residual vertices.
The question is how such local work can guarantee better complexity.

Before answering this question, we first analyze the intuitive algorithm. Fix three moments: $t_0$, just before a full SDDM solve on $H$ and after a drain; $t_1$, immediately after that SDDM solve; and $t_2$,
the next drained state at which another full SDDM solve on new $H$ is about to begin.
Write $H=A$ for the active set at $t_0$ and $H=A\cup B$ for the active set
at $t_2$. Let $x^0,r^0$ be the state at $t_0$. The first SDDM solve increases
$x_A^0\leftarrow x_A^0+h$ and leaves work residual $f$ on $A$ at $t_1$ ($f$ can  be very small). Now we upper bound the total work residual mass $\mu_{\rm next}$ at $t_2$.

At $t_1$, the old region $A$ therefore contributes only the small mass
$\|f\|_1$. The potentially large contribution to $\mu_{\rm next}$ is on the newly admitted vertices $B$, which receive boundary flow from $h$. Split that increment $h$ into two parts: $h_v=\min\{h_v,x_v^0\}+w_v$, where
$w_v=(h_v-x_v^0)_+$. The first term sends no more flow to any
$u\in B$ than the old potential $x_A^0$ contributes to the residual
balance at $u$. Since $u$ was exterior and drained at $t_0$, we have
$x_u^0<c$ and $r_u^0\le qd_u$. Combining this with residual identity \eqref{nr:state} bounds the incoming flow
attributable to the old potential $x_{A}^0$ by $O(qd_u)$. Summing over $u\in B$ gives $O(q\vol(B))$.

For the excess $w$, we bound the flow where it originates. Its support
is $D=\{v\in A:h_v>x_v^0\}$ (i.e., $x_v$ \textit{doubles} in the first full SDDM solve). The excess flow reaching $B$ is at most the total
that leaves $D$. To bound this outflow leaving $D$, let $x^1$ be the potential at
$t_1$ and note that $w_v=x_v^1-2x_v^0$ on $D$. At the comparison
potential $2x^0$, the residual at $v\in D$ is at most
$2(qd_v+r_v^0)$: the flow terms double, while the source input stays
fixed. At $x^1$, the residual is $qd_v+f_v$. The difference is thus
at most $qd_v+2r_v^0$. On summing over $D$, this bounds the excess outflow by $q\vol(D)+2\sum_{v\in D}r_v^0$.

The initial work residual and the change of reserve on $B$ contribute
another $O(q\vol(B))$. Between $t_1$ and $t_2$, any further inflow
into $A\cup B$ from outside is bounded by the total outflow of
\textnormal{\textsc{ExteriorPush}}. Combining these terms with the
remainder $\|f\|_1$ on $A$ gives the following comparison lemma.

\begin{lemma}[Residual after a SDDM solve]
\label{nr:progress}
Let $\mu$ and $\mu_{\rm next}$ be the active work-residual masses
at $t_0$ and $t_2$, respectively. Set
$J=12q\vol(B)+U$, where $U$ is the total outgoing residual flow of \textnormal{\textsc{ExteriorPush}} between the two states. Then
\begin{align}
 \mu_{\rm next}\le J+q\vol(D)+2\sum_{v\in D} r_v^0+\|f\|_1.
 \label{nr:progress-bound}
\end{align}
\end{lemma}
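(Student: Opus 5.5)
The approach is to track the active work-residual mass on $A\cup B$ at $t_2$ by decomposing the residual identity \eqref{nr:state} coordinatewise on $B$ and on $A$. Write $x^2$ for the potential at $t_2$. On $A$, the state at $t_1$ leaves work residual $f$. Afterwards only \textnormal{\textsc{ExteriorPush}} acts, and it never pushes from $A$. Hence the work residual on $A$ at $t_2$ is $f$ plus whatever inflow arrives from exterior pushes. This contributes $\|f\|_1$ plus a share of $U$.

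On $B$, I will express the active work residual of $u\in B$ at $t_2$ through
\[
 r_u^2=(\mathbf e_s-Mx^2)_u-qd_u .
\]
I then compare this with the drained exterior state at $t_0$, where
\[
 (\mathbf e_s-Mx^0)_u=4d_ux_u^0+r_u^0\le \tfrac52 qd_u .
\]
The difference $(Mx^0-Mx^2)_u$ splits into three parts: the change in $x_u$ itself, which is at most $c$ and so costs $O(qd_u)$; inflow from neighbors in $A$ caused by $h$; and inflow from pushes outside $A$, which is bounded by $U$. The reserve changes on $B$ and the released $qd_u/2$ at activation are absorbed into the $12q\vol(B)$ term, once the constants from $c=3q/8$ and $t\le c-x_v$ are tallied.

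For the inflow from $A$ caused by $h$, I use the splitting $h=\min\{h,x^0\}+w$. The flow into $u$ is $(1-\alpha)\sum_{v\in N(u)\cap A}h_v$. The $\min$ part is at most $(1-\alpha)\sum_{v\in N(u)\cap A}x^0_v$. This quantity appears in the residual balance at $u$ at time $t_0$:
\[
 (1-\alpha)(Ax^0)_u=(\mathbf e_s-Mx^0)_u-\mathbf e_s(u)+d_ux^0_u\le \tfrac52qd_u+d_uc,
\]
where the $\mathbf e_s(u)$ term vanishes since $s\in A$. This is $O(qd_u)$, and summing over $u\in B$ gives an $O(q\vol(B))$ contribution.

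The $w$ part is the main obstacle. Its total flow into $B$ is at most the total outflow $(1-\alpha)\sum_{v\in D}d_vw_v$ leaving $D$. I bound this by comparing the residual at $v\in D$ under the two potentials $2x^0$ and $x^1=2x^0+w$ (on $D$). I have to be careful that $x^1\ge 2x^0$ holds only on $D$, while off $D$ we have $x^1\ge x^0$. The key monotonicity step is
\[
 (M(x^1-2x^0))_v\ \ge\ d_vw_v-(1-\alpha)\sum_{u\sim v,\,u\in D}w_u-(\text{nonpositive terms off }D),
\]
so summing over $D$ gives $\sum_{v\in D}(M(x^1-2x^0))_v\ge \alpha\sum_D d_vw_v+\text{(outflow of }w\text{ leaving }D)$.

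At $2x^0$ the residual at $v\in D$ satisfies
\[
 \mathbf e_s-2Mx^0=2(\mathbf e_s-Mx^0)-\mathbf e_s\le 2(qd_v+r_v^0).
\]
At $x^1$ the residual is $qd_v+f_v\ge qd_v$. The residual difference therefore upper-bounds the flow, and summing over $D$ yields
\[
 q\vol(D)+2\sum_{v\in D}r_v^0 .
\]
Off $D$, the inequality $h_v\le x^0_v$ means the $w$-part has no contribution there, which keeps the sign argument clean. I expect to need nonnegativity of all maintained quantities and the entrywise sign pattern of $M$ here. Collecting the four contributions gives \eqref{nr:progress-bound}.
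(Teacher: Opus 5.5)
Your bound on the flow that the first solve sends from $A$ into $B$ is the paper's argument. You use the same split $h=\min\{h,x^0\}+w$ and the same drained-state bound $(1-\alpha)(Ax^0)_u\le(5c+q)d_u$ at $u\in B$. You also use the same comparison of residuals at $2x^0$ and $x^1$ on $D$, which yields $\one_D^{\mathsf T}M_{DD}w_D\le q\vol(D)+2\sum_{v\in D}r_v^0$.

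The gap is in how you treat everything after $t_1$. You assume that afterwards only \textsc{ExteriorPush} acts, and then you account coordinatewise: the work residual at $v\in A$ is $f_v$ plus push inflow, and the inflow at $u\in B$ comes only from $h$ and from pushes. That holds for the intuitive drain-and-solve loop. But the paper says right after the statement that the lemma remains available once local solves are added. It is applied in \Cref{nr:full-exceptions} to consecutive \textsc{FullSolve} calls of \Cref{nr:alg-main}, and between those calls there are cache loops, possibly over several episodes. Each \textsc{CacheSolve} on $C\subseteq H$ sends boundary flow from cached vertices of $A$ to other vertices of $A$ and into $B$. It also raises $x$ on cached vertices of $B$ above $c$. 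So both of your coordinatewise claims fail there, and this inflow is not part of $U$.

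The repair is to handle all later updates only in aggregate over $F=A\cup B$. This is what the paper's item (iv) does, and what \Cref{lem:residual_next_bound} does explicitly in the recursive setting. Write $x^2=x^0+h+\Delta$, with $\Delta\ge0$ collecting every update after $t_1$ (pushes and joint solves alike). The state identity at $t_0$ and $t_2$ gives
\[
 \mu_{\mathrm{next}}=\|f\|_1+\sum_{v\in B}\bigl(r_v^0+(4x_v^0-q)d_v+(1-\alpha)A_{vA}h\bigr)-\one_F^{\mathsf T}M\Delta ,
\]
and
\[
 -\one_F^{\mathsf T}M\Delta=-\one_F^{\mathsf T}M_{FF}\Delta_F+(1-\alpha)\one_F^{\mathsf T}A_{F,V\setminus F}\Delta_{V\setminus F}\le U .
\]
The second display holds for three reasons: $M_{FF}$ has nonnegative column sums, every joint update is supported in $H\subseteq F$, and $\Delta_{V\setminus F}$ comes only from pushes at vertices that stay exterior. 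On $B$ you have $r_v^0+(4x_v^0-q)d_v\le\frac32qd_v$. Combined with your two flow bounds, this gives the coefficient $\frac{35}{8}q\le12q$ on $\vol(B)$ and the stated inequality.
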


The proof in \Cref{nr:analysis} uses only the state identities and
the drained endpoints. Thus the comparison is a property of the
intuitive framework itself; it remains available after we add local
solves. We use it to divide transitions between full solves into two
cases.

\begin{itemize}[leftmargin=*,itemsep=4pt,topsep=4pt]
\item \emph{Good case: the active work-residual mass halves ($\mu_{\rm next}\le\mu/2$).}
If $\mu_{\rm next}\le\mu/2$, successive such transitions reduce
the mass geometrically toward the final accuracy scale. We count
consecutive runs of good cases: a bad case can increase the mass
again, but each new run is still logarithmically long.
\item \emph{Bad case: the mass does not halve ($\mu_{\rm next}>\mu/2$).}
If $\mu_{\rm next}>\mu/2$, we consider two cases. First suppose the new input is large, say $J\ge\mu/4$. Notice that newly active
volume can be charged only once, and the total exterior-push flow is
bounded. Thus $J$ has a constant global sum. Above the stopping scale, say
$\mu=\Omega(qb)$, each such event spends $J=\Omega(qb)$ of that sum. Intuitively, this implies $\vol(B)=\Omega(b)$ (when ignoring $U$), which can not happen so many times since $\sum \vol(B)\le \vol(H)\le N$.
Otherwise $J<\mu/4$. With the full SDDM solve residual $f$ chosen small enough,
\Cref{nr:progress} forces $q\vol(D)+2\sum_{v\in D} r_v^0=\Omega(\mu)$.
This is the \textit{difficult case} that full solves must make more efficient.
\end{itemize}

The obstruction is now explicit. We hope each difficult case imply substantial
doubling volume of $D$ with the condition $q\vol(D)+2\sum_{v\in D} r_v^0=\Omega(\mu)$. However, a small set $D$ might carry a large
fraction of the residual. We need to control $2\sum_{v\in D} r_v^0$ through $\vol(D)$.
It suffices to bound every active residual density before the first
full solve. This motivates the following intermediate task.

\noindent\textit{Output goal (fixed cache volume scale $b$ and density threshold
$\theta\ge q$).} Starting from a drained state, reach another drained
state in which $r_v<3\theta d_v$ for every $v\in H$, using joint
updates only on caches of volume less than $2b$. The target includes
vertices that join $H$ during this computation.

Here \textit{density} represents for \textit{residual mass} divided by \textit{volume}. Suppose this goal is achieved with density $\theta=O(\mu/b)$ at the next
full-solve request. Then $2\sum_{v\in D} r_v^0=O(\mu\vol(D)/b)$ and $q\le\theta = O(\mu/b)$.
The difficult case therefore requires $\vol(D)=\Omega(b)$.
Each vertex enters $H$ with potential $c$, only increases thereafter,
and, until component completion, has potential at most $3N$
(\Cref{lem:potential-completion}). It can thus belong to a doubling
set only $O(\log N)$ times. So large doubling sets $D$ cannot occur often.
The large-input account and this doubling account bound bad cases;
the good cases between them are bounded by geometric decrease.
\Cref{nr:halving,nr:full-exceptions,nr:counts} make this argument quantitative.

We will obtain the output goal by \emph{cache solves}, and request
a full SDDM solve only after establishing the required density bound,
apart from the initial solve and the restarts accounted for below.
This also identifies the eventual recursive algorithm in \Cref{sec:recursive-reduction}: reducing residuals
on a small cache is a smaller instance of the same task as reducing
them on the whole active set.

\subsection{Controlling residual with a small cache}
\label{nr:cache-idea}

We now implement the output goal with cache scale $b$ and target
density $3\theta$. The idea is to collect the active vertices with
excessive work residual in a small cache set $C$ and solve only on that set.
We repeat these local solves until every vertex in the entire active
set meets the density target. If the cache stays small throughout,
we obtain the condition that makes the difficult case imply
large doubling volume, without first paying for another whole-region
solve.

Start with an empty cache. After each exterior drain, admit every
active vertex with $r_v\ge3\theta d_v$. A
\textnormal{\textsc{CacheSolve}} reduces the cached residuals $\theta d_v\le r_v\le (\theta+q/4)d_v$; its full procedure
is \Cref{nr:alg-cache}. Its boundary flow may raise residuals at other
active vertices or cause exterior discovery. We therefore run \textnormal{\textsc{ExteriorPush}} and drain again,
admit any newly excessive active vertices, and repeat. This is the
loop in lines~\ref{nr:line-normalize-start}--\ref{nr:line-cache-call}
of \Cref{nr:alg-main}.

Two features let us control the size of all caches used during this
loop. The cache grows monotonically, so a vertex already processed
remains available when residual returns to it. Each vertex is admitted
only once during this attempt. Also, every cache solve retains the
positive floor $\theta d_v$. Previously admitted vertices therefore
continue to contribute at least this much work residual. At every
later solve, $\theta\vol(C)\le \sum_{v\in C} r_v$, even if the rows that triggered
the current solve are different from those that triggered earlier
ones. Controlling the total active mass would consequently control
the accumulated cache volume. The remaining difficulty is that \textit{this mass may
grow as exterior pushes send input inward and new vertices join $H$}.

We handle that mass growth caused by exterior pushes and new active vertices with a \emph{budget}. First choose the scale for
one attempt, called an \emph{episode}. At its drained starting state,
let the active mass be $\mu_0$, and fix
$Q=\max\{\mu_0,qb\}$ and $\theta=Q/b$ for the whole episode.
Thus the starting mass is at most $\theta b$. The lower bound on $Q$
ensures $\theta\ge q$; reaching the output goal at $\theta=q=\Theta(\epsilon)$
will give the final approximation. The episode allows an additional
amount $Q$ of charged input, recorded by a budget counter $W$ initially zero.
If a full solve has been requested, it occurs after this snapshot and
before the cache loop. Its boundary flow is drained with the budget
already open.

The counter $W$ has two contributions. When every new vertex $v$ joins $H$, we add
$3qd_v/2$: this covers up to $qd_v$ of work residual at the drained
starting snapshot and another $qd_v/2$ for the reserve change.
Its further growth is
accounted for by the updates between them. For every exterior push
of size $t$ at $v$, we also add its entire outgoing flow $(1-\alpha) d_vt$.
Counting the whole outflow bounds whatever part enters the eventual
active set. In particular, if $H^{\rm start}$ is the starting set and
$U$ is this accumulated exterior outflow by \textnormal{\textsc{ExteriorPush}}, then
\begin{align}
 W=\frac{3q}{2}\vol(H\setminus H^{\rm start})+U.
 \label{nr:charge}
\end{align}
Joint solves receive no additional charge. When residual is summed
over the final active set, their internal transfers cancel and their
remaining contribution is nonpositive. Thus this budget controls
the growth of active mass without charging the same mass each time
a joint solve redistributes it. The two updates to $W$ appear
explicitly in \Cref{nr:alg-exterior}.

We stop the episode immediately after $W$ reaches $Q$, before
performing another cache solve. This rule gives the required size
guarantee.

\begin{lemma}[The budget keeps the cache small]\label{nr:endpoint}
At every endpoint of an episode, its active work-residual mass satisfies
$\mu\le\mu_0+W$. While $W<Q$, every permitted cache solve has
$\vol(C)<2b$.
\end{lemma}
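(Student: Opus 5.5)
We prove the two claims separately: first a mass-accounting inequality for $\mu$, then the cache-volume bound from the residual floor that every cache solve keeps.

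\emph{Mass inequality $\mu\le\mu_0+W$.} Let $H^{\rm start}$ be the active set at the episode's starting snapshot, and let $H$ be the active set at the endpoint. Write $\mu(S)=\sum_{v\in S}r_v$ for the work-residual mass on $S$. We compare $\mu(H)$ at the endpoint with $\mu(H)$ at the start. At the start, the part on $H^{\rm start}$ is exactly $\mu_0$. Each $v\in H\setminus H^{\rm start}$ was exterior and drained at the start, so $r_v\le qd_v$ there. Its activation releases $qd_v/2$ of reserve into work residual. These two terms together are at most the $3qd_v/2$ charged to $W$ when $v$ joins $H$. After that, only three kinds of update change work residual on $H$:
\begin{itemize}[leftmargin=*]
\item Exterior pushes at vertices outside $H$ deposit inflow into $H$. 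This inflow is at most their total outflow $(1-\alpha)d_vt$, which is the $U$ term of \eqref{nr:charge}.
\item An exterior push at a vertex of $H\setminus H^{\rm start}$ before its admission only lowers its own residual, and its outflow is already counted in $U$.
\item Joint solves (full and cache) on a set $T\subseteq H$ change $\sum_{v\in H}r_v$ by $-\one_H^{\mathsf T}M_{\cdot T}\delta$ with $\delta\ge0$. Every column of $M$ sums to $\alpha d_v\ge0$, and entries of $M$ outside the diagonal are nonpositive. So the mass leaving $H$ plus $\alpha$ times the settled mass is nonnegative, and the net change on $H$ is $\le0$. We also check that reserves on $H$ stay fixed at $qd_v$, so joint solves touch only the work residual.
\end{itemize}
Summing the three kinds of update gives $\mu\le\mu_0+\tfrac{3q}{2}\vol(H\setminus H^{\rm start})+U=\mu_0+W$. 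One point needs care: a full solve requested at the start happens after the snapshot, and its boundary flow is drained with the budget already open. It is therefore covered by the same accounting.

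\emph{Cache-volume bound.} Fix a permitted cache solve, so that $W<Q$ just before it, and let $C$ be the cache at that moment. Every vertex admitted to $C$ earlier had $r_v\ge3\theta d_v$ when admitted. \textnormal{\textsc{CacheSolve}} keeps the floor $r_v\ge\theta d_v$ on $C$. Between solves, residual on active vertices can only increase, by inflow from exterior pushes and boundary flow from solves whose off-diagonal entries are nonpositive. So every $v\in C$ has $r_v\ge\theta d_v$ at this time, and the newly admitted ones have even more. Hence $\theta\vol(C)\le\mu(C)\le\mu(H)$.

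Apply the first part's accounting at this intermediate drained state; it holds at any drained time, not just endpoints. This gives $\mu(H)\le\mu_0+W<\mu_0+Q\le2Q$. Therefore $\vol(C)<2Q/\theta=2b$. The strict inequality comes from $W<Q$, and $\mu_0\le Q$ holds by the choice of $Q$.

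The step I expect to be the main obstacle is showing that the floor $r_v\ge\theta d_v$ persists for every earlier cache member. Later cache solves on a larger $C$, and full solves, might a priori reduce $r_v$ below $\theta d_v$. I would settle this from the definition of \textnormal{\textsc{CacheSolve}} in \Cref{nr:alg-cache}: it solves to a target residual in $[\theta d_v,(\theta+q/4)d_v]$ on all of $C$, so after each cache solve all of $C$ sits at or above $\theta d_v$. Also, no full solve occurs after the cache loop starts within an episode. If that persistence turns out to fail, the fallback is to bound $\vol(C)$ by the residual at admission minus the mass removed. This is again controlled by $\mu_0+W$.
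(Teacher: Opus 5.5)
Your proof is correct and follows the paper's argument. The paper obtains $\mu\le\mu_0+W$ in one step: it compares the state identity \eqref{nr:state} at the drained start and at the current endpoint, with $u=x-x^0\ge0$, and splits $-\one_H^{\mathsf T}Mu$ into a nonpositive $M_{HH}$ part (by nonnegative column sums) and an exterior-push outflow part bounded by $U$. This is exactly your update-by-update accounting, summed over the episode. The paper then uses the retained floor $\theta\mathbf d_C$ to get $\theta\vol(C)\le\mu<2Q$, and your justification that this floor persists (each cache solve restores it on all of $C$, no full solve occurs inside the cache loop, and active rows only gain residual between solves) supplies a step the paper only asserts, so your fallback is unnecessary.
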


Indeed, an unexhausted budget gives $\mu<2Q$, while the retained
floor gives $\theta\vol(C)\le \sum_{v\in C} r_v\le\mu$. Since $Q=\theta b$,
the cache volume is below $2b$. The operation that exhausts the budget may
break the size bound, but no further cache solve is allowed in that
episode.

An exhausted budget signals enough new input to require a fresh
attempt. We close the episode, finish draining, and restart with a
full solve and a newly chosen allowance. These restarts are affordable:
each spends at least $qb$, whereas the sum of all episode charges is $O(1)$
by \Cref{nr:push-account}. Thus only $O(N/b)$ episodes can end this
way. This pays for the full solves requested without first meeting
the output goal.

If the goal is reached, there are two possibilities above the final
scale. When the remaining mass is below $Q/2$, the cache loop has
already made progress; we begin a smaller episode without a full
solve. Otherwise $\mu\ge Q/2$, so $\theta=Q/b\le2\mu/b$.
The output goal then supplies exactly the density bound required
for the good-case/bad-case argument, and we request a full solve.
At $\theta=q$, we instead return the final approximation.

Finally, the gap between the admission threshold $3\theta$ and the
retained level near $\theta$ controls how often the loop solves.
A call either admits a new vertex into cache or revisits an unchanged cache.
In the latter case, an already processed row needs fresh exterior
flow to reach the threshold again. Monotone admissions and globally
bounded exterior flow pay for these two kinds of call, respectively
(\Cref{nr:cache-count}). Thus the cache implements the output goal
with both its size and its repeated work under control.

\subsection{The complete algorithm}
\label{nr:details}

The implementation uses the three operations in
\Cref{nr:alg-exterior,nr:alg-full,nr:alg-cache}.
\textnormal{\textsc{ExteriorPush}} drains residual outside $H$, stopping
if an open budget is exhausted. \textnormal{\textsc{FullSolve}} solves
on $H$ and leaves at most $\tau d_v$ of work residual at each active
vertex, where $\tau=qb/(2^{12}N)$.
\textnormal{\textsc{CacheSolve}} solves on $C$ and retains the floor
specified above. The two joint solves cost
$\widetilde O(\vol(H))$ and $\widetilde O(\vol(C))$, respectively.
These costs use the SDDM convention in \Cref{sec:preliminaries}.

The state is \eqref{nr:state}. An exterior queue stores vertices outside $H$ with
$r_v>qd_v$, and an active heap stores residual densities on $H$.
Every changed row has its queue or heap key updated. Degrees are read
when a vertex first receives residual; adjacency lists are opened only
by exterior pushes and then stored. Heap operations cost $O(\log N)$.
At episode entry one may scan $H$ to initialize the high-density queue;
subsequent cache iterations update it only at touched rows.

The counter $W$ is exactly \eqref{nr:charge}. It changes only at the
exterior-push and activation lines below. Joint solves update $x$ and the residual records without adding a charge.
Disconnected solve domains are handled component by component. If the source component
becomes complete, \Cref{lem:potential-completion} takes priority over every
subsequent local operation and returns the final answer.

\begin{algorithm}[H]
\caption{\textnormal{\textsc{ExteriorPush}}: drain the exterior queue}
\label{nr:alg-exterior}
\small
\KwIn{State $(x,H,r)$; an optional open episode $(Q,W)$.}
\KwOut{Updated state and status; or a final answer if the source component is complete.}
\While{$\exists v\notin H$ with $r_v>qd_v$}{
 Choose such $v$ and read its adjacency list if not stored\;
 $t\leftarrow\min\{r_v/(5d_v),c-x_v\}$\nllabel{nr:line-push-step}\;
 $x_v\leftarrow x_v+t$; $r_v\leftarrow r_v-5d_vt$\;
 \For{$u\in N(v)$}{$r_u\leftarrow r_u+(1-\alpha) t$; update its queue or heap key\;}
 \If{an episode is open}{$W\leftarrow W+(1-\alpha) d_vt$\;}
 \If{$x_v=c$}{
 $H\leftarrow H\cup\{v\}$; $r_v\leftarrow r_v+(q/2)d_v$\;
 \If{an episode is open}{$W\leftarrow W+(3q/2)d_v$\;}
 }
 Update $v$'s exterior queue entry or active heap key\;
 \lIf{$H$ is the complete source component}{terminate the whole algorithm with the answer from \Cref{lem:potential-completion}}
 \lIf{an episode is open and $W\ge Q$}{\Return \textnormal{budget exhausted}}
}
\Return \textnormal{drained}\;
\end{algorithm}

\begin{algorithm}[H]
\caption{\textnormal{\textsc{FullSolve}}: reduce work residual on the whole active set}
\label{nr:alg-full}
\small
\KwIn{State $(x,H,r)$ and block scale $b$.}
\KwOut{State with $0\le r_H\le\tau\mathbf d_H$, where $\tau=qb/(2^{12}N)$.}
Set $\tau\leftarrow qb/(2^{12}N)$\;
Compute $h_H\ge0$ by an SDDM solve with
 $0\le r_H-M_{HH}h_H\le\tau\mathbf d_H$\;
$x_H\leftarrow x_H+h_H$; $r_H\leftarrow r_H-M_{HH}h_H$\;
\For{$v\in H$ and $u\in N(v)\setminus H$}{
 $r_u\leftarrow r_u+(1-\alpha) h_v$; update its exterior queue key\;
}
Update the active heap on $H$; retain $H,Q,W$\;
\end{algorithm}

\begin{algorithm}[H]
\caption{\textnormal{\textsc{CacheSolve}}: retain a residual floor on the cache}
\label{nr:alg-cache}
\small
\KwIn{State $(x,H,r)$, cache $C\subseteq H$ with $r_C\ge\theta\mathbf d_C$, and fixed $\theta\ge q$.}
\KwOut{State with $\theta\mathbf d_C\le r_C\le(\theta+q/4)\mathbf d_C$.}
$g_C\leftarrow r_C-\theta\mathbf d_C$\;
Compute $h_C\ge0$ by an SDDM solve with
 $0\le g_C-M_{CC}h_C\le(q/4)\mathbf d_C$\;
$x_C\leftarrow x_C+h_C$; $r_C\leftarrow r_C-M_{CC}h_C$\;
\For{$v\in C$ and $u\in N(v)\setminus C$}{
 $r_u\leftarrow r_u+(1-\alpha) h_v$; update its queue or heap key\;
}
Update the active heap on $C$; retain $H,Q,W,C$\;
\end{algorithm}

\begin{algorithm}[H]
\caption{Nonrecursive local PageRank computation}
\label{nr:alg-main}
\small
\KwIn{$G,s,0<\alpha\le1/2,\eps$ and integer $1\le b\le N=96/\eps$.}
\KwOut{An ACL $\eps$-approximation $\widehat\pi_s$.}
Query $d_s$; \lIf{$d_s=0$}{\Return $\mathbf e_s$}

Initialize $q=\eps/96$, $c=3q/8$, $x=0$, $H=\varnothing$, $r=\mathbf e_s$\nllabel{nr:line-init}\;
Run \textnormal{\textsc{ExteriorPush}} with no open episode\nllabel{nr:line-initial-drain}\;
\lIf{$H=\varnothing$}{\Return $\alpha D(x-c\one)_+$}
$\mathit{needFull}\leftarrow\textnormal{true}$\;
\While{true}{
 $H^{\rm start}\leftarrow H$; $\mu_0\leftarrow\sum_{v\in H}r_v$\nllabel{nr:line-episode-start}\;
 $Q\leftarrow\max\{\mu_0,qb\}$; $\theta\leftarrow Q/b$; $W\leftarrow0$; open the episode\;
 \lIf{$\mathit{needFull}$}{run \textnormal{\textsc{FullSolve}}$(H,r,b)$}
 $C\leftarrow\varnothing$; $\mathit{abort}\leftarrow\textnormal{false}$\nllabel{nr:line-cache-reset}\;
 \While{true}{
 Run \textnormal{\textsc{ExteriorPush}} with the open episode\nllabel{nr:line-normalize-start}\;
 \lIf{$W\ge Q$}{$\mathit{abort}\leftarrow\textnormal{true}$; break}
 $T\leftarrow\{v\in H\setminus C:r_v\ge3\theta d_v\}$; $C\leftarrow C\cup T$\;
 \lIf{$\max_{v\in H}r_v/d_v<3\theta$}{break}
 Run \textnormal{\textsc{CacheSolve}}$(C,r,\theta)$\nllabel{nr:line-cache-call}\;
 }
 Close the episode; $\mu\leftarrow\sum_{v\in H}r_v$\;
 \eIf{$\mathit{abort}$}{\nllabel{nr:line-abort}
 Run \textnormal{\textsc{ExteriorPush}} with no open episode\;
 $\mathit{needFull}\leftarrow\textnormal{true}$\;
 }{
 \lIf{$\theta=q$}{\Return $\alpha D(x-c\one)_+$}\nllabel{nr:line-final}\;
 $\mathit{needFull}\leftarrow(\mu\ge Q/2)$\nllabel{nr:line-next-full}\;
 }
}
\end{algorithm}

\Cref{nr:alg-main} puts these operations together. Initially, exterior
pushes from the source discover the first active region. Each episode
then fixes its allowance and runs the cache loop, with a full solve only
when requested by the preceding episode. A successful episode at
$\theta=q$ terminates: both the active and exterior residual targets
are met. Above this scale, success has two meanings. If the mass has
fallen below $Q/2$, the next episode uses a smaller allowance without a
full solve. Otherwise the residual is sufficiently spread out for
\Cref{nr:progress}, and the next episode begins with a full solve.
Budget exhaustion instead invokes the restart just described. These are
the decisions in lines~\ref{nr:line-abort}--\ref{nr:line-next-full}.

Only exterior pushes reveal new graph adjacency lists. The joint updates use
stored lists and record residual at boundary endpoints; they do not open
those endpoints' lists. Queues and a heap of active residual densities
are updated only at touched rows, so finding the next excessive residual
does not require rescanning $H$ after every cache solve. A new episode
may scan $H$ once. If discovery completes the source component, the
completion procedure in \Cref{lem:potential-completion} returns the final
answer immediately.

\subsection{Analysis}
\label{nr:analysis}

We first establish the discovery cost and the final residual bound,
then prove the cache and full-solve comparisons used above, and finally
count the operations. A bound on $x$ below supplies the
remaining ingredient for counting coordinate doublings.

The reserve serves two purposes: its growth pays for exterior scans,
and its bounded size leaves enough of the error allowance for the final
output. The next two lemmas establish these claims separately.

\begin{lemma}[Global discovery account]\label{nr:push-account}
Every valid finite execution prefix has $r\ge0$, $x\ge0$, and
$\vol(H)\le N$.  The total cost of exterior pushes is $O(N)$. If $S=\sum_k d_{v_k}t_k$ sum over all exterior pushes, then $S\le3/8$, and the
sum of episode charges $W$ is at most $15/8$.
\end{lemma}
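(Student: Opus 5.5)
The plan is an induction over the operations of a valid execution prefix (exterior pushes, activations, \textsc{FullSolve}, \textsc{CacheSolve}), carrying the invariant \eqref{nr:state} together with $x\ge0$, $r\ge0$, and $\|\mathbf e_s-Mx\|_1\le 1$.

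\textbf{Invariants.} The key identity is
\[
\one^{\mathsf T}(\mathbf e_s-Mx)=1-\alpha\,\mathbf d^{\mathsf T}x,
\]
which follows from $\one^{\mathsf T}M=\alpha\mathbf d^{\mathsf T}$. Since every operation only increases $x$, the total residual mass is nonincreasing and stays at most $1$. Nonnegativity is checked operation by operation:
\begin{itemize}
\item An exterior push removes $5d_vt\le r_v$ from $r_v$ and adds nonnegative mass to neighbours.
\item Activation replaces the reserve $4cd_v=\tfrac32 qd_v$ by $qd_v$, so $r_v$ increases by $qd_v/2$.
\item The two solves are constructed to leave $0\le r_H-M_{HH}h_H$ (respectively $\ge\theta\mathbf d_C$), and $h\ge0$ pushes only nonnegative flow outward.
\end{itemize}
Once every term of \eqref{nr:state} is nonnegative, we get $q\vol(H)\le\|\mathbf e_s-Mx\|_1\le1$, so $\vol(H)\le N$.

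\textbf{Bounding $S$.} Exterior pushes only touch vertices outside $H$. For each vertex, the sum of its push sizes $t_k$ equals its final exterior potential, which is less than $c$, or exactly $c$ if it was activated. Solves modify $x$ only on $H$ or $C\subseteq H$, so they never change the exterior potential. For activated vertices this gives a contribution of at most $c\,\vol(H)\le cN=3/8$. The remaining exterior vertices carry reserve $4d_vx_v$ inside $\|\mathbf e_s-Mx\|_1$. This is the step I expect to be the main obstacle: a naive sum would give $3/8$ plus another $1/4$, so the two contributions must be combined. I would charge via the residual mass, using
\[
q\vol(H)+4\sum_{v\notin H}d_vx_v\le1 .
\]
Writing $\sum_{v\in H}cd_v=\tfrac38\, q\vol(H)$ and $\sum_{v\notin H}d_vx_v\le\tfrac14\cdot 4\sum d_vx_v$, and using $\tfrac38\ge\tfrac14$, yields
\[
S\le \tfrac38\Bigl(q\vol(H)+4\sum_{v\notin H}d_vx_v\Bigr)\le \tfrac38 .
\]

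\textbf{Cost of exterior pushes.} Split pushes as in \Cref{nr:design}.
\begin{itemize}
\item \emph{Admitting pushes} occur at most once per vertex and cost $O(\vol(H))=O(N)$.
\item \emph{Nonadmitting pushes} have $t=r_v/(5d_v)>q/5$, so $d_v\le 5d_vt/q$. Their total cost is therefore $O(S/q)=O(N)$.
\end{itemize}
Reading adjacency lists happens only inside these pushes and is also $O(N)$. So I would derive the cost bound from $S$ rather than from work-residual decrease.

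\textbf{Episode charges.} Summing \eqref{nr:charge} over episodes, which are disjoint in time, gives
\[
\sum W\le \tfrac32 q\vol(H)+\sum_k(1-\alpha)d_{v_k}t_k\le \tfrac32+S\le \tfrac32+\tfrac38=\tfrac{15}8 .
\]
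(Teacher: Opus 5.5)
Your proposal is correct and follows essentially the same route as the paper: sum the state identity \eqref{nr:state} using $\one^{\mathsf T}M=\alpha\mathbf d^{\mathsf T}$ to get $q\vol(H)+4\sum_{v\notin H}d_vx_v\le 1$. Then write $S=c\vol(H)+\sum_{v\notin H}d_vx_v\le\frac38\bigl(q\vol(H)+4\sum_{v\notin H}d_vx_v\bigr)$, pay for admitting pushes by $\vol(H)$ and for nonadmitting ones via $t>q/5$, and bound the charges by $\frac32 q\vol(H)+(1-\alpha)S$. The paper only asserts in one line that all updates preserve the identity and nonnegativity; your operation-by-operation check of this is slightly more explicit.
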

\begin{proof}
All updates preserve $e_s-Mx=q\mathbf d_H+4D_{V\setminus H}x+r$. Summing over $v\in V$ and utilize $\mathbf 1^TM=\alpha \mathbf d^T$, we have
\[
1=\alpha\mathbf d^{\mathsf T}x+q\vol(H)
 +4\sum_{v\notin H}d_vx_v+\|r\|_1.
\]
Every term is nonnegative.  Hence $\vol(H)\le1/q=N$ and
$q\vol(H)+4\sum_{v\notin H}d_vx_v \le1$.

Before activation a coordinate changes only through exterior pushes, and
these pushes stop at $x_v=c$.  Thus
$S=c\vol(H)+\sum_{v\notin H}d_vx_v$.  Since $c=3q/8$, the summed identity implies 
\begin{align*}
  S&=\frac{3}{8}q\vol(H)+\sum_{v\notin H}d_vx_v\\
  &\le \frac{3}{8}\left(q\vol(H)+4\sum_{v\notin H}d_vx_v\right)\le \frac{3}{8}.
\end{align*}

To bound the total operations of exterior pushes, we notice the following: (i) for push operations that make the vertex active, the total cost is at most $\vol(H)\le N$ since each vertex is activated at most once; (ii) for push operations that do not make the vertex active, the value $t=r_v/(5d_v)>q/5$ and therefore
\[\sum_{\textnormal{non-active pushes}}d_v\le \sum_{\textnormal{non-active pushes}}\frac{5}{q}d_v t\le \frac{5}{q}S=O(N).\]
So the total cost of exterior pushes is $O(N)$.

Finally, we bound the sum of charges $W$. We denote $W_e$ be the total charge at $e$-th episode. We denote $H_e^{\textnormal{start}}$ be the active set $H$ at the start of $e$-th episode and $H_e^{\textnormal{end}}$ be $H$ at the end of $e$-th episode. We have
\begin{align*}
  \sum_e W_e&=\frac{3q}{2}\sum_e \vol(H_e^{\textnormal{end}}\setminus H_e^{\textnormal{start}})+(1-\alpha)\sum_{\textnormal{pushes when }\exists \textnormal{ episode open}}d_v t\\
  &\le \frac{3q}{2}\vol(H)+(1-\alpha)S\le \frac{3}{2}+\frac{3}{8}=\frac{15}{8}.
\end{align*}
\end{proof}

\begin{lemma}\label{nr:accuracy}
At a normal return of \Cref{nr:alg-main}, let $z=(x-c\one)_+$.
Then $0\le\mathbf e_s-Mz\le(39q/8)\mathbf d\le\eps\mathbf d$.
Consequently $\widehat\pi_s=\alpha Dz$ satisfies the ACL
$\eps$-error guarantee.
\end{lemma}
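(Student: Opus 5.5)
The plan is a direct computation of the residual of $z=(x-c\one)_+$ from the maintained state identity \eqref{nr:state}, followed by \Cref{lem:acl-residual}. First I identify what holds at a normal return; there are two cases.
\begin{itemize}
\item The return after the initial drain with $H=\varnothing$: the exterior is drained.
\item The return at line~\ref{nr:line-final} with $\theta=q$: the cache loop exited with $\max_{v\in H}r_v/d_v<3q$ immediately after a drain.
\end{itemize}
In both cases $r_v\le qd_v$ for $v\notin H$ and $r_v<3qd_v$ for $v\in H$. Moreover $x_v\ge c$ exactly on $H$, since activation happens at $x_v=c$ and joint solves only add $h\ge0$. Also $x_v<c$ off $H$. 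By \Cref{nr:push-account}, $x\ge0$ and $r\ge0$. The completion return is handled by \Cref{lem:potential-completion} and is not part of this claim.

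Next I write $y:=x-z=\min\{x,c\one\}$, so that $y_v=c$ on $H$, $y_v=x_v$ off $H$, and $0\le y\le c\one$. Then
\[
 \mathbf e_s-Mz=(\mathbf e_s-Mx)+My=q\mathbf d_H+4D_{V\setminus H}x+r+Dy-(1-\alpha)Ay .
\]
\emph{Upper bound.} I drop $-(1-\alpha)Ay\le0$, which is valid because $y\ge0$.
\begin{itemize}
\item For $v\in H$ the row is at most $(q+3q+c)d_v=\tfrac{35}{8}qd_v$.
\item For $v\notin H$ the row is at most $5d_vx_v+r_v\le(5c+q)d_v=\tfrac{23}{8}qd_v$.
\end{itemize}
Both are at most $\tfrac{39}{8}qd_v$, and $\tfrac{39}{8}q\le\eps$ because $q=\eps/96$.

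\emph{Lower bound.} I argue row by row using the sign pattern of $M$ rather than the expansion above.
\begin{itemize}
\item For $v\notin H$ we have $z_v=0$, so $(Mz)_v=-(1-\alpha)(Az)_v\le0$ because $z\ge0$. Hence $(\mathbf e_s-Mz)_v\ge0$.
\item For $v\in H$ the expansion gives $(\mathbf e_s-Mz)_v=(\mathbf e_s-Mx)_v+cd_v-(1-\alpha)(Ay)_v$. Since $y\le c\one$, we have $(Ay)_v\le cd_v$, so this is at least $(\mathbf e_s-Mx)_v=qd_v+r_v\ge0$.
\end{itemize}
Thus $0\le\mathbf e_s-Mz\le\eps\mathbf d$ with $z\ge0$, and \Cref{lem:acl-residual} yields the ACL guarantee for $\widehat\pi_s=\alpha Dz$.

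The only delicate point is the first step: confirming the stopping invariants at each normal return. This includes the facts that $x_v\ge c$ exactly characterizes $H$, and that the $\theta=q$ exit occurs right after a completed drain rather than after a cache solve. Once these are in place, the rest is the row-wise bookkeeping above.
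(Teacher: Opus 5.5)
Your proposal is correct and follows essentially the same route as the paper: write $\mathbf e_s-Mz=(\mathbf e_s-Mx)+M\min\{x,c\one\}$, get the upper bound by discarding the nonpositive off-diagonal term (the paper phrases this as $Ma\le Da\le c\mathbf d$ after first showing $\mathbf e_s-Mx<4q\mathbf d$), and get the lower bound row by row from $a_v=c$ on $H$ and $z_v=0$ off $H$. Your row-wise constants $\tfrac{35}{8}q$ on $H$ and $\tfrac{23}{8}q$ off $H$ are just a slightly sharper form of the paper's uniform bound $(4q+c)\mathbf d$.
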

\begin{proof}
We first prove the bound on the solution $x$ before pruning. Specifically, we prove $0\le \mathbf e_s-Mx\le 4q\mathbf d$. At a normal return of \Cref{nr:alg-main}, we have $r_v\le qd_v$, $x_v\le c$ for $v\notin H$, and $r_v\le 3qd_v$, $x_v\ge c$ for $v\in H$. Therefore, by \Cref{nr:state}, $(\mathbf e_s-Mx)_v=qd_v+r_v$ for $v\in H$ and $(\mathbf e_s-Mx)_v=4d_vx_v+r_v$ for $v\notin H$. Therefore, we have $0\le (\mathbf e_s-Mx)_v<4qd_v$ for $v\in H$ and $0\le (\mathbf e_s-Mx)_v\le (4c+q)d_v\le \frac{5}{2}qd_v$ for $v\notin H$. So $0\le (\mathbf e_s-Mx)_v<4qd_v$ for both cases.

Next we prove the bound after purning. Put $a=x-z=\min\{x,c\one\}$. By
$\mathbf e_s-Mz=(\mathbf e_s-Mx)+Ma$, we just need to bound the term $Ma$. For the upper bound, $Ma\le Da\le c\mathbf d$. Thus
$\mathbf e_s-Mz\le(4q+c)\mathbf d\le(39q/8)\mathbf d$.

For the lower bound, we split the case into $v\in H$ and $v\notin H$. For $v\in H$, $a_v=c$ and all neighbors $u\in N(v)$ have $a_u\le c$, so
$(Ma)_v=d_vc-(1-\alpha)\sum_{u\in N(v)}a_u\ge\alpha cd_v\ge0$, so $(\mathbf e_s-Mz)_v\ge 0$. For $v\notin H$, $z_v=0$, and hence
$(\mathbf e_s-Mz)_v=(\mathbf e_s)_v+(1-\alpha)\sum_{u\in N(v)}z_u\ge0$. Then we apply \Cref{lem:acl-residual} to prove the ACL guarantee.
\end{proof}

We first justify the mass comparison used to bound the monotone cache.
It compares two endpoints on the final active set, so newly activated
rows are accounted for from the drained starting snapshot.

\begin{proof}[Proof of \Cref{nr:endpoint}.]
Consider the active set $H$ and the active set at the beginning of the episode $H^{\rm start}$. Let $u=x-x^0\ge0$, where $(x^0,r^0)$ is the episode's starting state, and $(x,r)$ is the current state.
We denote the sum of episode charges $W=\frac{3q}{2}\vol(H\setminus H^{\rm start})+U$, where $U$ is the aggregate outflow of exterior pushes during the episode.

We first prove $\mu\le \mu_0+W$. The residual satisfies $\mathbf e_s-Mx=(\mathbf e_s-Mx^0)-Mu$. For $v\in H^{\rm start}$, it holds that $qd_v+r_v=qd_v+r_v^0-(Mu)_v$ since the retained residual on $H^{\rm start}$ is $qd_v$. So $r_v=r_v^0-(Mu)_v$. 
For $v\in H\setminus H^{\rm start}$, the initial retained residual is $4d_vx_v^0$ and the current retained residual is $qd_v$. This yields
$r_v=(4x_v^0-q)d_v+r_v^0-(Mu)_v$.
The starting state is drained and exterior on $H\setminus H^{\rm start}$, so
$x_v^0<c=\frac{3q}{8}$ and $r_v^0\le qd_v$ for $v\in H\setminus H^{\rm start}$.  So $r_v\le (3q/2)d_v-(Mu)_v$ for $v\in H\setminus H^{\rm start}$.

Sum over $H$ gives
\[\mu=\sum_{v\in H} r_v\le \sum_{v\in H^{\rm start}} r_v^0 + \frac{3q}{2}\vol(H\setminus H^{\rm start})-\sum_{v\in H} (Mu)_v.\]

So next we just need to bound $-\sum_{v\in H} (Mu)_v\le U$. To see this, we write
\[-\sum_{v\in H} (Mu)_v=-\mathbf 1_H^T M_{HH} u_H +(1-\alpha) \sum_{w\notin H} d_H(w) u_w.\]

Denote $F=H\setminus H^{\rm start}$. The principal column sums of $M_{HH}$ are nonnegative,
so $-\one_F^{\mathsf T}M_{FF}u_F\le 0$. Since $U$ record all the outflow of exterior pushes during the episode, we have $(1-\alpha)\sum_{w\notin H} d_H(w) u_w \le (1-\alpha)\sum_{w\notin H} d(w) u_w \le U$.
 This gives
$\mu\le\mu_0+(3q/2)\vol(H)+U=\mu_0+W$.
Before exhaustion, $W<Q$ and $\mu_0\le Q$ imply $\mu<2Q$.

Next we prove $\vol(C)<2b$.
We notice by the construction of cache $C$, the residual satisfies $r_v\ge \theta d_v$ before next cache solve.
Thus $\theta\vol(C)\le \sum_{v\in C} r_v\le\mu<2Q=2\theta b$. So $\vol(C)<2b$.
\end{proof}

We next prove the boundary comparison in \Cref{nr:progress}.

\begin{proof}[Proof of \Cref{nr:progress}.]

First we recall some notations: $A$ is the active set at first \textnormal{\textsc{FullSolve}}, $B$ is newly added nodes before next \textnormal{\textsc{FullSolve}}, $x^0,r^0$ are the solution and residual before the first \textnormal{\textsc{FullSolve}}, $h$ is the solution of the first \textnormal{\textsc{FullSolve}}, $f$ is the work residual of the first \textnormal{\textsc{FullSolve}}. We denote $D=\{v\in A:h_v>x_v^0\}$ and $w=(h-x_A^0)_+$.

We compare the state immediately after the first
\textnormal{\textsc{FullSolve}} with the next full-solve state on the
final active set $A\cup B$. Using \eqref{nr:state}, we bound the
contributions to $\mu_{\rm next}$ in four parts:
\begin{enumerate}[label=(\roman*),leftmargin=*,itemsep=3pt]
\item The work residual left on the old vertices $A$ immediately after
the first \textnormal{\textsc{FullSolve}}. This is $f$, so its
contribution is $\|f\|_1$.

\item The initial work residual on the new vertices $B$, together with
the adjustment from their exterior reserves to active reserves.
The initial state is drained, so $r_v^0\le qd_v$ for each $v\in B$.
The reserve adjustment in the endpoint comparison contributes at most 
$\frac q2d_v$.
Thus this part is at most $(3q/2)\vol(B)$.

\item The residual sent from $A$ to $B$ by the first
\textnormal{\textsc{FullSolve}}. This is its boundary flow
$(1-\alpha)\one_B^{\mathsf T}A_{BA}h$, which we bound later.

\item The residual subsequently sent into $A\cup B$ from outside
$A\cup B$. Those exterior vertices remain outside the active set
throughout the interval, so this mass is at most $U$.
\end{enumerate}

Consequently,
\[
 \mu_{\rm next}\le\|f\|_1+\frac{3q}{2}\vol(B)
       +(1-\alpha)\one_B^{\mathsf T}A_{BA}h+U.
\]

It therefore remains to bound the contribution $(1-\alpha)\one_B^{\mathsf T}A_{BA}h$.
Since $h\le x_A^0+w$ and $w$ is supported on $D$, we split it into
an ordinary part and an excess part:
\[
 (1-\alpha)\one_B^{\mathsf T}A_{BA}h
 \le (1-\alpha)\one_B^{\mathsf T}A_{BA}x_A^0
       +(1-\alpha)\one_B^{\mathsf T}A_{BD}w_D.
\]

For the ordinary part, the initial exterior-state condition ($x_v^0<c,r_v^0\le qd_v$) at each
$v\in B$ gives
\[
 (1-\alpha) A_{vA}x_A^0\le5d_vx_v^0+r_v^0\le(5c+q)d_v.
\]
Summing over $B$, the ordinary boundary flow is at most
$(5c+q)\vol(B)$.

For the excess part, subtract the initial active-state equation 
\[M_{AA}x^0_A=(\mathbf e_s)_A+(1-\alpha)A_{A,V\setminus A}x^0_{V\setminus A}-q\mathbf d_A-r_A^0\]
 from
$M_{AA}h=r_A^0-f$. Since $f$ and $x^0$ are
nonnegative, this gives the inequality
\[
 M_{AA}(h-x_A^0)\le q\mathbf d_A+2r_A^0.
\]

On $v\in D$, we have that
\[[M_{AA}(h-x_A^0)]_v=(M_{DD}w_D)_v-(1-\alpha)\sum_{u\in A\setminus D} A_{vu}(h_u-x_u^0)\ge (M_{DD}w_D)_v.\]
Hence
$M_{DD}w_D\le q\mathbf d_D+2r_D^0$, and
\[
 (1-\alpha)\one_B^{\mathsf T}A_{BD}w_D
 \le\one_D^{\mathsf T}M_{DD}w_D
 \le q\vol(D)+2\sum_{v\in D} r_v^0.
\]
The first inequality holds because the middle expression includes
all flow leaving $D$, as well as the nonnegative teleportation
contribution.

Combining the four contributions with these two boundary-flow bounds,
and using $c=3q/8$, we obtain
\begin{align*}
 \mu_{\rm next}
 &\le\|f\|_1+U+
       \left(\frac{3q}{2}+5c+q\right)\vol(B)
       +q\vol(D)+2\sum_{v\in D} r_v^0\\
 &\le J+q\vol(D)+2\sum_{v\in D} r_v^0+\|f\|_1,
\end{align*}
where $J=12q\vol(B)+U$. This proves the Lemma.
\end{proof}

The inequality becomes a contraction once the starting density is
controlled. The following quantitative form is used in the episode count.

\begin{corollary}\label{nr:halving}
Under the notation of \Cref{nr:progress}, suppose $\mu>qb/2$,
$r_v^0\le6\mu d_v/b$ on $A$, and $\|f\|_1\le qb/2^{12}$.
If $J<\mu/4$ and $\vol(D)\le b/192$, then
$\mu_{\rm next}<\mu/2$.
\end{corollary}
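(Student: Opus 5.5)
This follows by substituting the hypotheses into the bound \eqref{nr:progress-bound} of \Cref{nr:progress} and estimating each of its four terms as a fraction of $\mu$. The lemma gives
\[
 \mu_{\rm next}\le J+q\vol(D)+2\sum_{v\in D}r_v^0+\|f\|_1 .
\]
We aim to show the total is strictly below $\mu/2$. The term $J$ is at most $\mu/4$ by assumption, so the remaining three terms together must be shown to be less than $\mu/4$.

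We bound the three remaining terms in order, each using $\mu>qb/2$ to convert $q$-scale quantities into $\mu$-scale quantities.
\begin{enumerate}[label=(\roman*),leftmargin=*,itemsep=3pt]
\item For the doubling-volume term, $\vol(D)\le b/192$ and $q<2\mu/b$ give
\[
 q\vol(D)<\frac{2\mu}{b}\cdot\frac{b}{192}=\frac{\mu}{96}.
\]
\item For the residual on $D$, we use the density hypothesis $r_v^0\le 6\mu d_v/b$ on $A\supseteq D$:
\[
 2\sum_{v\in D}r_v^0\le\frac{12\mu}{b}\vol(D)\le\frac{12\mu}{192}=\frac{\mu}{16}.
\]
\item For the full-solve remainder, $\|f\|_1\le qb/2^{12}$ together with $qb<2\mu$ gives
\[
 \|f\|_1<\frac{2\mu}{2^{12}}=\frac{\mu}{2^{11}}.
\]
\end{enumerate}

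Adding these, $\mu/96+\mu/16+\mu/2^{11}<\mu/4$. Combined with $J<\mu/4$, the right-hand side of \eqref{nr:progress-bound} is strictly less than $\mu/2$, so $\mu_{\rm next}<\mu/2$.

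There is no real obstacle once \Cref{nr:progress} is available; the argument is constant-chasing. The one point to check is that the density hypothesis is applied on $D$, which requires $D\subseteq A$. This holds by the definition of $D$ in \Cref{nr:progress}.
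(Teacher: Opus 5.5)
Your proof is correct and takes the same approach as the paper: substitute the hypotheses into \eqref{nr:progress-bound} and use $q<2\mu/b$ to turn each term into a fraction of $\mu$. The paper bounds $q\vol(D)+2\sum_{v\in D}r_v^0<14\mu\vol(D)/b\le 14\mu/192$, which is exactly your $\mu/96+\mu/16$, and then adds $\|f\|_1<\mu/2048$ and $J<\mu/4$ to conclude.
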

\begin{proof}
Since $\mu>qb/2$, we have $q<2\mu/b$ and
$\|f\|_1\le qb/2^{12}<\mu/2048$.
Thus $q\vol(D)+2r^0(D)<14\mu\vol(D)/b\le14\mu/192$.
Together with $J<\mu/4$ and the error bound, this is less than $\mu/2$.
\end{proof}

To count large-growth events, we need a bound on how often an active
coordinate can double. The following fact uses the state identity \eqref{nr:state}
and the boundary of the discovered region.

\begin{lemma}
\label{lem:potential-completion}
Whenever $H$ is nonempty, it is connected and contains $s$.
As long as $H$ is a proper subset of the source component, every
potential satisfies $0\le x_v\le3N$.
\end{lemma}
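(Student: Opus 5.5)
Two separate claims need proof: connectivity of $H$ with $s\in H$, and the bound $x_v\le 3N$ while $H$ is a proper subset of the source component.

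\emph{Connectivity and $s\in H$.} I would prove by induction over the operations that every vertex with $x_v>0$ or $r_v>0$ lies in the connected component of $s$, and that $H$ is connected and contains $s$ once it is nonempty. Joint solves do not change $H$. Only exterior pushes admit vertices, and an admitted $v$ must satisfy $x_v=c>0$.

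Before $H$ becomes nonempty, only exterior pushes act, and they start from $r=\mathbf e_s$. So the first vertex to reach $x=c$ is reached first along residual flow from $s$. To show that this first admitted vertex is $s$ itself, I would argue that every push on $v\ne s$ sets $x_v\le r_v/(5d_v)$. The residual at $v$ comes from neighbors $u$, each contributing at most $(1-\alpha)x_u$ in total. Hence $5d_vx_v\le(1-\alpha)\sum_{u\in N(v)}x_u$, which gives $x_v<\max_u x_u$. So the maximum potential outside $H$ is attained at $s$ (or $s\in H$), and $s$ is the first vertex admitted.

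The same inequality, taken with neighbors restricted to exterior vertices plus contributions from $H$, shows a new vertex reaching $c$ must border $H$. Any residual it holds came through a path of positive potentials, and a purely exterior vertex cannot exceed the max of its exterior neighbors. I would phrase this as a maximum principle on the exterior: $4d_vx_v+r_v=(1-\alpha)\sum_{u\in N(v)}x_u$ for $v\ne s$, $v\notin H$. This uses the state identity \eqref{nr:state} together with $(Mx)_v=d_vx_v-(1-\alpha)\sum_u x_u$.

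\emph{Potential bound.} Let $v^*$ maximize $x$ over $V$. If $v^*\notin H$, the maximum principle above gives $x_{v^*}<c$, so assume $v^*\in H$. I would then use the summed identity from \Cref{nr:push-account},
\[
\alpha\mathbf d^{\mathsf T}x+q\vol(H)+4\sum_{v\notin H}d_vx_v+\|r\|_1=1 .
\]
This gives $\alpha\mathbf d^{\mathsf T}x\le1$, but that bound depends on $\alpha$, so it is not enough. Instead I would exploit properness: since $H$ is a proper subset of the source component and is connected, there is a boundary vertex $w\notin H$ adjacent to $H$ with $x_w<c$.

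The key step is to bound the oscillation of $x$ across $H$ by a path/flow argument. On $H$, $(Mx)_v=(\mathbf e_s)_v-qd_v-r_v\le(\mathbf e_s)_v$, so $x$ is nearly harmonic (subharmonic) on $H$ with total source $1$. I would view $L_{HH}$-type flow: the net flow $\sum_{v\in H}(Mx)_v\le1$ and each edge flow is $x_u-(1-\alpha)x_v$. The cleanest route is to take the superlevel set $S_\lambda=\{x\ge\lambda\}$ for $\lambda\ge c$. Summing $(Mx)_v\le(\mathbf e_s)_v$ over $S_\lambda\subseteq H$ bounds the cut flow: $\sum_{\text{edges leaving }S_\lambda}(x_{\rm in}-x_{\rm out})\le1$. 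Each $S_\lambda$ is a proper subset of the connected source component, so it has at least one boundary edge. Integrating over $\lambda$ from $c$ to $\max x$ then bounds $\max x-c$ by $\sum_{\text{edges in }H\cup\partial H}$ contributions, roughly the number of edges $\le\vol(H)\le N$. With slack this gives $\max x\le c+2N\le3N$.

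The main obstacle is handling the $\alpha x_v d_v$ terms and making the coarea step rigorous. I would first try to drop $-\alpha$ terms by monotonicity, since they only help. I would then use the coarea identity $\int\#\partial S_\lambda\,d\lambda=\sum_{uv}|x_u-x_v|$ together with Cauchy–Schwarz or a direct per-level bound $\#\partial S_\lambda\ge1$.
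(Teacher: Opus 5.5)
Your connectivity argument is essentially the paper's. At the moment a non-source exterior vertex reaches $x_v=c$, the state identity forces it to have a neighbor with $x_u>c$, i.e.\ an active one. There are two small slips there. For exterior $v\ne s$ the identity is $5d_vx_v+r_v=(1-\alpha)\sum_{u\in N(v)}x_u$, not $4d_vx_v+r_v=\cdots$. Also, a single push bounds only the increment by $r_v/(5d_v)$, not $x_v$ itself. Neither slip affects the conclusion.

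For the potential bound you take a genuinely different route. The paper argues linear-algebraically. The state identity and the exterior cap give $M_{HH}(x_H-c\one_H)\le\mathbf e_s|_H$, and the paper multiplies this by $M_{HH}^{-1}\ge0$. It bounds the entries of $M_{HH}^{-1}$ by its diagonal, and the diagonal via $M_{HH}\succeq(1-\alpha)L_{HH}$ together with the grounded-Laplacian estimate $(L_{HH}^{-1})_{uu}\le|H|$. This gives $x\le c+2N$, using $\alpha\le 1/2$.

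Your superlevel-set argument is an elementary version of the same electrical fact. Summing $(Mx)_v\le(\mathbf e_s)_v$ over $S_\lambda=\{x\ge\lambda\}\subseteq H$ (valid for $\lambda\ge c$) and discarding the nonnegative terms $\alpha\sum_{v\in S_\lambda}\sum_{u\in N(v)}x_u$ shows that the total drop across $\partial S_\lambda$ is at most $1$. You do not need coarea or Cauchy--Schwarz to finish. The phrase ``roughly the number of edges'' is made precise as follows:
\begin{itemize}
\item Every term of that cut sum is positive. So each edge $uv$ with $x_u\ge c$ and $x_u>x_v$ has $x_u-x_v\le 1$; take $\lambda=x_u$.
\item Since $H$ is a proper subset of the source component, some vertex of $H$ has an exterior neighbor $w$.
\item A shortest path from the maximizer through the connected set $H$ to $w$ has at most $|H|\le\vol(H)\le N$ edges, each dropping $x$ by at most $1$.
\item Hence $\max x\le x_w+N<c+N$.
\end{itemize}
Your route thus avoids the facts about $M$-matrix inverses and the $1/(1-\alpha)$ loss, so it does not need $\alpha\le 1/2$. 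The paper's route is shorter, given standard facts about SDDM inverses and effective resistance.
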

\begin{proof}
A nonsource vertex can activate only if it has an active neighbor: at activation $x_v=c$ and its residual $(\mathbf e_s-Mx)_v$ is at least $3qd_v/2$, whereas neighbors all below $c$ could supply at most $(1-\alpha)cd_v$. Thus the first active vertex is $s$, and every subsequent one attaches to $H$. If $H$ is a proper subset of the source component, the grounded Laplacian satisfies $(L^{-1}_{HH})_{uu}\le |H|\le N$ for all $u\in H$.
By the property of SDDM matrix, we directly have
$0\le(M_{HH}^{-1})_{uv}\le (M_{HH}^{-1})_{uu} \le \frac{1}{1-\alpha}(L^{-1}_{HH})_{uu} \le 2N$ for all $u,v\in H$.
By the state identity and the exterior bound $x_v<c$,
$M_{HH}(x_H-c\one_H)\le\mathbf e_s|_H$.
Multiplying by $M_{HH}^{-1}$ yields $x_v\le c+2N\le3N$ on $H$.
Outside $H$, $x_v<c$ already proves the claim.

If no edge leaves $H$, connectivity shows that it is the whole
source component. Every active vertex already has its adjacency
list stored. Maintaining the number of edges leaving $H$ costs
$O(d_v)$ when $v$ activates and detects this event without a
global graph scan. Solve $M_{HH}p_H=\mathbf e_s|_H$, set
$p=0$ outside $H$, and return $\alpha Dp$. There are no boundary
edges, so $Mp=\mathbf e_s$ and this answer is exact. The solve
costs $\widetilde O(\vol(H))\le\widetilde O(N)$ under the
convention of \Cref{sec:preliminaries}. This completion test
takes priority over further local operations.
\end{proof}

The full-solve comparison is useful only after successful normalization.
We therefore count budget restarts separately, then count the large-input
and large-growth exceptions to halving. All statements below apply to
finite valid execution prefixes; termination is a consequence of the
counts, not an assumption behind them.

\begin{lemma}\label{nr:full-exceptions}
Consider consecutive calls of \textnormal{\textsc{FullSolve}} in
\Cref{nr:alg-main}. Let $\mu$ and $\mu_{\rm next}$ be the active
work-residual masses immediately before the two calls, and let $J$
and $D$ be as in \Cref{nr:progress}. If $\mu_{\rm next}\ge\mu/2$,
then at least one of the following events occurs:
\begin{enumerate}[label=(\arabic*),leftmargin=*,itemsep=3pt]
\item The first \textnormal{\textsc{FullSolve}} is the initial call,
or was requested after an episode exhausted its budget, $W\ge Q$.
\item The first \textnormal{\textsc{FullSolve}} follows a successful
episode, and $J\ge\mu/4$.
\item The first \textnormal{\textsc{FullSolve}} follows a successful
episode, and $\vol(D)>b/192$.
\end{enumerate}
Over any finite valid execution prefix, the total number of occurrences
of these events is $O((N/b)\log N)=\widetilde O(N/b)$.
\end{lemma}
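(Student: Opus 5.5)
The lemma has two parts: a case split showing that a failure to halve must be one of the three listed events, and a counting argument bounding how often those events can occur.

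\emph{Case split.} Suppose $\mu_{\rm next}\ge\mu/2$ and event (1) fails. Then the first \textsc{FullSolve} was requested at line~\ref{nr:line-next-full} after a successful episode. That episode had some scale $Q,\theta=Q/b$. Success means the cache loop ended with $\max_{v\in H}r_v/d_v<3\theta$, and the request means $\mu\ge Q/2$.

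The \textsc{FullSolve} runs after the next episode's snapshot. No residual changes between the end of the previous episode and this call, so the starting state $r^0$ on $A$ satisfies
\[
r^0_v<3\theta d_v=\frac{3Q}{b}d_v\le\frac{6\mu}{b}d_v .
\]
Since $\theta\ne q$, we have $Q>qb$, hence $\mu\ge Q/2>qb/2$. By \textsc{FullSolve}, $\|f\|_1\le\tau\vol(A)\le\tau N=qb/2^{12}$ using $\vol(H)\le N$ from \Cref{nr:push-account}. All hypotheses of \Cref{nr:halving} therefore hold. If both $J<\mu/4$ and $\vol(D)\le b/192$, it would give $\mu_{\rm next}<\mu/2$, a contradiction. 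So (2) or (3) occurs.

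\emph{Counting (1).} The initial call happens once. Each budget-exhausted episode has $W\ge Q\ge qb$, and by \Cref{nr:push-account} the sum of all episode charges is at most $15/8$. Hence there are $O(1/(qb))=O(N/b)$ such episodes.

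\emph{Counting (2).} Here $J\ge\mu/4>qb/8$. I will bound $\sum J$ over consecutive full-solve intervals. The sets $B$ are disjoint newly admitted sets, so $\sum 12q\vol(B)\le 12q\vol(H)\le12$. The $U$ terms are exterior outflows over disjoint time intervals, so $\sum U\le S\le3/8$. Therefore $\sum J=O(1)$, and event (2) occurs $O(N/b)$ times.

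\emph{Counting (3).} This is the main obstacle. Each $v\in D$ satisfies $h_v>x^0_v$, so $x_v$ more than doubles in that solve. Potentials never decrease: pushes, \textsc{CacheSolve} and \textsc{FullSolve} all add $h\ge0$. An active vertex starts with $x_v\ge c=3q/8$, and by \Cref{lem:potential-completion} it stays below $3N$ until completion, after which the algorithm stops. So each vertex lies in some $D$ at most $\log_2(3N/c)=O(\log N)$ times.

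Summing gives $\sum\vol(D)\le\vol(H)\cdot O(\log N)=O(N\log N)$. Each occurrence of (3) contributes more than $b/192$, so there are $O((N/b)\log N)$ occurrences.

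One subtlety needs checking. \textsc{FullSolve} may run on $H$ with a coordinate whose value was set at activation, but the doubling count only uses monotonicity and the upper bound, so this is fine. Adding the three counts gives $\widetilde O(N/b)$.
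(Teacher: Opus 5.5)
Your proposal is correct and follows the paper's proof essentially step for step. It uses the same reduction to \Cref{nr:halving} (checking $\mu>qb/2$, $r^0_v<3Qd_v/b\le 6\mu d_v/b$, and $\|f\|_1\le qb/2^{12}$), followed by the same three accounts: budget charges summing to at most $15/8$, $\sum J=O(1)$ from disjoint admissions plus bounded exterior outflow, and the $O(\log N)$ doubling bound from $c\le x_v\le 3N$.
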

\begin{proof}
We first prove if neither the three event happens, then $\mu_{\rm next}<\mu/2$. If (1) does not occur, the first full solve was requested by a successful
episode whose budget $Q$ satisfies $Q>qb$ and $\mu\ge Q/2$. Its output is drained so $\mu>qb/2$, and the density guarantee gives $r_v^0<\frac{3Q}{b}d_v\le\frac{6\mu}{b}d_v$ for $v\in H$.
The full solve leaves
$\|f\|_1\le\tau\vol(H)\le qb/2^{12}$.
If neither (2) nor (3) occurs, \Cref{nr:halving} therefore gives
$\mu_{\rm next}<\mu/2$. 

We next count the three events separately. For (1), every exhausted episode
has spent $W\ge Q\ge qb$, whereas \Cref{nr:push-account} gives
$\sum W\le15/8$ over all episodes. So
\[
 \#\{\textnormal{Exhausted episodes}\} \le \frac{15}{8qb}=O(N/b).
\]
Each requests at most one full solve; adding the initial call gives $O(N/b)$ occurrences of (1).

For (2), index the disjoint intervals between consecutive full solves
by $i$, and write $J_i=12q\vol(B_i)+U_i$. Each vertex is newly activated
in at most one interval, and the total exterior-push outflow is at most
$\sum_i U_i\le 3/8$ by \Cref{nr:push-account}. Hence
\[
 \sum_i J_i
 \le12q\sum_i\vol(B_i)+\sum_iU_i
 \le12qN+\frac38=\frac{99}{8}.
\]
Whenever (2) occurs, the successful starting state has $\mu>qb/2$,
so $J_i\ge\mu/4>qb/8$. Consequently, (2) occurs only $O(N/b)$ times.

For (3), each vertex enters the active set at potential
$c=3/(8N)$. On the other hand,
\Cref{lem:potential-completion} gives $x_v\le3N$ before termination.
Whenever $v\in D$, the full-solve increment satisfies $h_v>x_v$,
so the update more than doubles $x_v$. Thus, if $v$ belongs to $D$ at $k$ full solves, $2^kc\le3N$ and therefore $k=O(\log N)$.
Writing $D_i$ for the growth set at full solve $i$, we obtain
\[
 \sum_i\vol(D_i)
 =\sum_{v\text{ activated}}d_v\,\#\{i:v\in D_i\}
 =O(N\log N),
\]
because the total activated volume is at most $N$.
Each occurrence of (3) contributes more than $b/192$ to this sum,
so there are $O((N/b)\log N)$ such occurrences.
Adding the three bounds proves the claim.
\end{proof}

\begin{lemma}[Full solves and episodes]\label{nr:counts}
For fixed $b$, \Cref{nr:alg-main} performs
$\widetilde O(N/b)$ full solves and $\widetilde O(N/b)$ episodes.
\end{lemma}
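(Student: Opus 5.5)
We bound the full solves first and then the episodes, using \Cref{nr:full-exceptions} together with a geometric-decrease argument on the active work-residual mass.

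\textbf{Full solves.} Consider the sequence of full solves and the masses $\mu_1,\mu_2,\dots$ immediately before each call. For each consecutive pair, either $\mu_{i+1}<\mu_i/2$ (a good transition) or one of events (1)--(3) of \Cref{nr:full-exceptions} occurs (a bad transition). There are only $\widetilde O(N/b)$ bad transitions, so it suffices to show that every maximal run of good transitions has length $O(\log N)$.

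Inside a run the mass at least halves at each step. It starts at most $1$, since $\|r\|_1\le1$ by the summed identity in \Cref{nr:push-account}. For a good transition that is not of type (1), the first solve was requested by a successful episode with $\mu\ge Q/2\ge qb/2$. Hence every mass in a run except possibly the last is at least $qb/2\ge q/2$. Halving from $1$ down to $q/2$ takes $O(\log N)$ steps. A run is therefore $O(\log N)$ long, and the total number of full solves is $(\#\text{bad}+1)\cdot O(\log N)=\widetilde O(N/b)$.

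\textbf{Episodes.} Each episode ends in one of three ways:
\begin{itemize}
\item by budget exhaustion, which happens $O(N/b)$ times as in the count for event (1);
\item successfully with $\mu\ge Q/2$, which requests a full solve, so these are bounded by the full-solve count;
\item successfully with $\mu<Q/2$ and $\theta>q$, which starts a new episode without a full solve.
\end{itemize}
Only the third kind needs a new argument. Consecutive episodes of this kind have $\mu_0$ strictly decreasing. The next $Q'=\max\{\mu,qb\}$ is at most $\max\{Q/2,qb\}$, so $Q$ halves until it reaches $qb$. Once $Q=qb$, we have $\theta=q$, and a successful episode at that point returns.

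Hence, between two consecutive episodes of the other kinds, there are at most $\log_2(1/(qb))+1=O(\log N)$ episodes of the third kind. Here $Q\le\max\{1,qb\}\le 1$ initially, because $\mu_0\le1$ and $qb\le qN=1$. The total number of episodes is therefore $\widetilde O(N/b)$.

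\textbf{Main obstacle.} The delicate point is the length of good runs. A good transition that starts immediately after a type-(1) full solve need not satisfy $\mu>qb/2$, so the lower bound on masses inside a run requires care. I would handle this by noting that a run can be entered through a type-(1) event at most $\widetilde O(N/b)$ times, charging that first step to the bad event. All subsequent steps in the run come from successful episodes with $\mu\ge Q/2\ge qb/2$, so the $O(\log N)$ halving bound applies to them.
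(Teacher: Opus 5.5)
Your proof is correct and follows the paper's argument: partition the full-solve sequence at the $\widetilde O(N/b)$ pairs where one of events (1)--(3) of \Cref{nr:full-exceptions} occurs, bound each run of the remaining halving pairs by $O(\log N)$ using $qb/2<\mu\le 1$, and then bound the episodes by $O(\log N)$-length runs of budget-halving successful episodes between full-solve requests and budget restarts. The ``main obstacle'' you raise comes only from your definition of a good transition as merely $\mu_{i+1}<\mu_i/2$. If, as the paper does, you call a pair nonexceptional exactly when none of (1)--(3) occurs, then its first solve automatically follows a successful episode with $\mu\ge Q/2\ge qb/2$, so no separate charging step is needed.
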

\begin{proof}
Call a pair of consecutive \textnormal{\textsc{FullSolve}}s exceptional if any of
(1)--(3) in \Cref{nr:full-exceptions} occurs.
For every other pair, that lemma gives
$\mu_{\rm next}<\mu/2$. Since (1) does not occur, its first \textnormal{\textsc{FullSolve}}
follows a successful episode and has $\mu>qb/2$.
Also, \Cref{nr:push-account} gives $\mu\le\|r\|_1\le1$.
Thus the starting masses in a consecutive run of nonexceptional pairs
satisfy $qb/2<\mu\le1$ and decrease by a factor of at least two from one pair to the next.
So we can only have $O(\log N)$ consecutive nonexceptional pairs.
Partitioning the \textnormal{\textsc{FullSolve}} sequence at the
$\widetilde O(N/b)$ exceptional pairs gives the upper bound of $ \#\textnormal{\textsc{FullSolve}}=\widetilde O(N/b)$.

It remains to count episodes. We consider the following three cases: (i) the episode is ended with a \textnormal{\textsc{FullSolve}} request; (ii) the episode is ended with an exhausted budget; (iii) the episode success without requesting a \textnormal{\textsc{FullSolve}}. For (i), the number of such episodes is at most $\{\#\textnormal{\textsc{FullSolve}}+1\}=\widetilde O(N/b)$. For (ii), the number of such episodes is at most $O(N/b)$ by \Cref{nr:full-exceptions}. For (iii), output of such episode satisfies $\mu<Q/2$, and the budget of the next episode is set as $Q'=\max\{\mu,qb\}\le \max\{Q/2,qb\}$.
Since $qb\le Q\le1$, a consecutive run of these episodes has length
$O(\log(N))$: the budget halves until it reaches $qb$, at which
a successful episode terminates the algorithm. As a result,
\[\#\{\textnormal{Episodes}\} \le O(\log N) \times (\#\{\textnormal{\textsc{FullSolve}}\} + \#\{\textnormal{Exhausted episodes}\} + 1) = \widetilde O(N/b).\]
\end{proof}

\begin{lemma}[\textnormal{\textsc{CacheSolve}} count]\label{nr:cache-count}
If $E$ is the number of episodes, the total number of \textnormal{\textsc{CacheSolve}}s is
$O(bE+N)$.
\end{lemma}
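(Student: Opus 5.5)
We classify each \textnormal{\textsc{CacheSolve}} call in the inner loop of \Cref{nr:alg-main} by whether the preceding admission step added a vertex to $C$. In the first kind, $T\neq\varnothing$, so the cache grows. In the second kind, $T=\varnothing$, yet the loop did not break. Then some cached vertex $v\in C$ has $r_v\ge3\theta d_v$. This is because every vertex of $H\setminus C$ with that density would have been admitted, so the offending row must already be in $C$.

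\emph{First kind.} Within one episode the cache grows monotonically and every admitted vertex is distinct. We bound the number of admitting calls in an episode by $|C_{\rm final}|$. Every permitted cache solve has $\vol(C)<2b$ by \Cref{nr:endpoint}, and the last admission before a solve still respects this bound. Since every vertex has $d_v\ge1$, the number of vertices is at most the volume, so there are fewer than $2b$ admitting calls per episode. Over all episodes this gives $O(bE)$.

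\emph{Second kind.} Here a row $v\in C$ was left by the previous \textnormal{\textsc{CacheSolve}} at density at most $\theta+q/4$. It has since climbed to at least $3\theta$, so its residual rose by at least $(2\theta-q/4)d_v\ge\tfrac74 qd_v$, using $\theta\ge q$. We must identify the source of this increase. Between the two solves, the only operations are \textnormal{\textsc{ExteriorPush}} steps; no joint solve occurs, since a joint solve is the cache solve itself and we are between consecutive ones. An exterior push only adds $(1-\alpha)t$ to each neighbor of the pushed vertex. So $v$ received at least $\tfrac74 qd_v\ge\tfrac74 q$ of exterior-push outflow within this interval. The intervals between consecutive cache solves are disjoint in time, so distinct non-admitting calls charge disjoint portions of exterior outflow. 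The total exterior outflow is at most $(1-\alpha)S\le3/8$ by \Cref{nr:push-account}. Hence the number of second-kind calls is at most $(3/8)/(\tfrac74 q)=O(N)$.

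A subtle point is the first cache solve of an episode, and any case where the offending row had no prior solve in the current episode. Such a call must be of the first kind, because $C$ starts empty at line~\ref{nr:line-cache-reset}. We must also check that the \textnormal{\textsc{FullSolve}} at the start of an episode does not break the charging. It precedes all cache solves of that episode, so it never lies between two cache solves. The main obstacle is confirming that no other operation between consecutive cache solves raises an active row. Activations add $(q/2)d_v$ only to the newly activated vertex, which lies outside $C$ and so does not matter. The residual of a cached row can otherwise only be affected by exterior pushes, so the charging is clean. Summing the two kinds gives $O(bE+N)$.
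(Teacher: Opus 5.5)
Your proposal is correct and follows the paper's own argument. You split cache solves into admitting calls, which number fewer than $2b$ per episode by the volume bound of \Cref{nr:endpoint}, and non-admitting calls, each of which needs $\Omega(q)$ fresh exterior-push inflow into a cached row, charged against the total exterior outflow of at most $3/8$ from \Cref{nr:push-account}. Your added checks (that the offending row was already cached at the previous solve, and that neither \textnormal{\textsc{FullSolve}} nor activations affect cached rows between solves) fill in details the paper leaves implicit.
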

\begin{proof}

We split the calls to \textnormal{\textsc{CacheSolve}} into two cases: (i) the call admits new vertices comparing with the privious \textnormal{\textsc{CacheSolve}}; (ii) the call does not admit new vertices. For (i), the number of such calls is at most $O(bE)$ beacuse $\vol(C)\le 2b$ by \Cref{nr:endpoint}.

Now we bound the number of case (ii) \textnormal{\textsc{CacheSolve}}. It must have a previous
cache solve in the same episode, because the cache is initially empty.  Immediately
after that solve, all cached residuals $r_v\le (\theta+q/4)d_v$.  For another solve to be requested, a cached vertex $v$ satisfies $r_v\ge 3\theta d_v$, requiring an increase of at least
$(2\theta-q/4)d_v\ge q$. Therefore every case (ii) \textnormal{\textsc{CacheSolve}} needs at least $q$ new residual mass.
We notice that the only way to get the $q$ new residual mass is by \textnormal{\textsc{ExteriorPush}}. However by \Cref{nr:push-account}, the total outflow of \textnormal{\textsc{ExteriorPush}} is at most $3/8$. So the number of case (ii) \textnormal{\textsc{CacheSolve}} is at most $3/(8q)=O(N)$.
Adding the two accounts proves the claim.
\end{proof}

\begin{proof}[Proof of \Cref{nr:runtime}.]
By \Cref{nr:counts}, the numbers of \textnormal{\textsc{FullSolve}}s and episodes are both
$\widetilde O(N/b)$. Each full solve and each scan at each episode costs $\widetilde O(N)$, since \Cref{nr:push-account} gives
$\vol(H)\le N$. Their total cost is therefore $\widetilde O(N^2/b)$.

If $E$ is the number of episodes, \Cref{nr:cache-count} bounds the
number of cache solves by $O(bE+N)=\widetilde O(N)$.
Every such solve has volume $\vol(C)\le 2b$ by \Cref{nr:endpoint}, so their total cost is $\widetilde O(Nb)$.

The total operations of exterior pushes are bounded by $O(N)$ by
\Cref{nr:push-account}. Summing the costs gives $\widetilde O(N^2/b+Nb)$ expected time.
The finite-prefix counts also imply termination: each drain is
finite by \Cref{nr:push-account}, each cache loop is finite by
\Cref{nr:cache-count} with one episode, and \Cref{nr:counts}
bounds the number of episodes. At an ordinary return,
\Cref{nr:accuracy} gives the ACL guarantee. If the source
component is completed first, \Cref{lem:potential-completion}
returns the exact answer with an additional $\widetilde O(N)$
work, which is absorbed in the bound.
\end{proof}

The remaining cost now has a precise source. The cache loop performs
many direct solves on sets of volume $O(b)$, and these account for the
term $Nb$. Each such loop is trying to accomplish the same residual
normalization that we have just learned to perform on the larger active
set. The next section uses this fact to replace a normalization attempt
by a smaller recursive computation.

\section{Extending the method to recursion}
\label{sec:recursive-reduction}

The nonrecursive algorithm has separated two tasks: discovering the
active region and deciding when to solve on all of it. Its remaining
$Nb$ term comes from repeatedly solving on a cache of volume $O(b)$.
We can reduce this cost by recursively applying the same method inside that smaller
region, rather than calling SDDM solver on the whole cache. The normalization task passed to the next level is the \emph{output goal}
from \Cref{nr:full-idea}: reduce the residual densities throughout the
parent's active set, using joint solves only on a smaller discovered
set. One parent episode makes one child call for this entire task.
The child in turn uses discovery, occasional full solves on its own
active set, and still smaller normalization tasks.

The parent's decisions remain those of \Cref{nr:alg-main}. Successful
normalization either finishes the problem, reduces the mass enough
to skip a full solve, or supplies the density bound needed to justify
one. Consequently the good-case/bad-case argument can also be reused.
The key is to ensure that normalizing the child's work residual also
normalizes the residual seen by its parent. We state the two conditions
that make this implication valid and derive the main result before
giving the implementation in \Cref{rec:algorithm}; the detailed proofs
are in \Cref{sec:technical-proofs}.

\subsection{Recursive guarantees and the main result}
\label{rec:reduction}
\label{rec:interface}
\label{rec:execution}

Consider a parent episode with budget $Q$, cache scale $b$, and target
$\theta=Q/b$. A level-$j$ child receives the parent's work residual
$r^{(j-1)}$ on $H_{j-1}$, uses threshold $q_j=\theta$, and discovers
its own active set $H_j$. All levels update the same vector $x$.
To measure the child's contribution, record $\xi_v^{(j)}=x_v$ when
$v$ first becomes available and use the \emph{relative potential}
$y_v^{(j)}=x_v-\xi_v^{(j)}$. This reference stays fixed across the
episodes of the call. The child starts with $H_j=\varnothing$,
$y^{(j)}=0$, and $r^{(j)}=r^{(j-1)}$ on its domain, and uses the
same reserve and activation rule as in \Cref{nr:alg-exterior}.
The parent--child state is
\begin{align}
 r_v^{(j-1)}=
 \begin{cases}
 q_jd_v+r_v^{(j)}, & v\in H_j,\\
 4d_vy_v^{(j)}+r_v^{(j)}, & v\in H_{j-1}\setminus H_j.
 \end{cases}
 \label{rec:parent-child}
\end{align}

The output goal concerns the parent's residual throughout
$H_{j-1}$. However, bounding the child's work residual alone is insufficient:
outside $H_j$, the parent also sees the reserve $4d_vy_v^{(j)}$.
We therefore need both a work-residual bound and a bound on the
relative potential at inactive vertices. Inside $H_j$, the only
additional contribution is the retained floor $q_jd_v$.

Fix a maximum depth $\ell\ge0$, with levels $0,\ldots,\ell$, and set
$\kappa_j=\ell-j+3$. The leaf uses the nonrecursive factor $3$;
each return adds one retained floor, giving
$\kappa_j=1+\kappa_{j+1}$. At the root, set $H_{-1}=V$,
$q_0=\eps/(32\kappa_0)$, $N_0=1/q_0$, and $y^{(0)}=x$.
The root state is \eqref{nr:state} with threshold $q_0$.
A \emph{normal return} means that the call has met its target.
An exhausted ancestor budget instead causes an interrupted return,
which retains the work already performed but makes no accuracy claim.

\begin{condition}[Residual at return]\label{condition:normalize-residual}
On a normal return of \textnormal{\textsc{Normalize}} at level $j$,
\begin{align*}
 (1)\qquad &0\le r_v^{(j)}\le\kappa_jq_jd_v,
       &&v\in H_j,\\
 (2)\qquad &0\le r_v^{(t)}\le q_td_v,
       &&v\in H_{t-1}\setminus H_t,\quad 0\le t\le j.
\end{align*}
\end{condition}

\begin{condition}[Relative potential at return]\label{condition:normalize-solution}
On a normal return at level $j$, $0\le y_v^{(j)}\le3q_j/8$ for
$v\in H_{j-1}\setminus H_j$, and $y_v^{(j)}\ge3q_j/8$ for $v\in H_j$.
\end{condition}

Together, the two conditions give the parent exactly the output
goal it needs. For a child at level $j\ge1$ with $q_j=\theta$,
\eqref{rec:parent-child} bounds the parent density by
$(1+\kappa_j)\theta$ on $H_j$ and by
$(4\cdot3/8+1)\theta=5\theta/2$ on $H_{j-1}\setminus H_j$.
Both are at most $\kappa_{j-1}\theta$. This is why one additional
retained floor per level suffices.

Part (2) of \Cref{condition:normalize-residual} also covers earlier
levels: a child update can trigger exterior work at an ancestor,
so the parent must resume with those queues drained as well.
\Cref{lem:query-push} supplies these exterior bounds and the
relative-potential bounds; \Cref{lem:recursive-leaf} supplies the
active residual bound at the leaf. The preceding parent--child
calculation then gives the induction in \Cref{rec:return-proof}.

\begin{lemma}[Normalization composes across levels]\label{lem:recursive-root}
Every normal return of \textnormal{\textsc{Normalize}} at level
$0\le j\le\ell$ satisfies
\Cref{condition:normalize-residual,condition:normalize-solution}.
\end{lemma}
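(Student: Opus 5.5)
We argue by downward induction on the level $j$, from the leaf $j=\ell$ up to the root $j=0$. The two conditions split into a part that comes from exterior pushes and a part that comes from active residuals. Part (2) of \Cref{condition:normalize-residual} and all of \Cref{condition:normalize-solution} concern vertices outside the current active set at some level. These come from the drain structure of \textnormal{\textsc{ExteriorPush}}, which we take from \Cref{lem:query-push}. Part (1) of \Cref{condition:normalize-residual} is the active residual bound. It is the only place where the recursion enters.

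\textbf{Exterior bounds at every level.} At any normal return, no queue at any level $t\le j$ is still pending. The reason is that a normal return is declared only after all ancestor exterior queues have been drained, since child updates can feed them. Hence $r^{(t)}_v\le q_td_v$ on $H_{t-1}\setminus H_t$, and nonnegativity follows from the invariant of \Cref{nr:push-account} applied level by level. For the relative potential, a vertex $v\in H_{j-1}\setminus H_j$ changes its $y^{(j)}_v$ only through level-$j$ exterior pushes. These stop at the activation threshold $3q_j/8$, exactly as the rule $t=\min\{r_v/(5d_v),c-x_v\}$ stops at $c$. Activated vertices have reached that threshold. Afterwards $y^{(j)}$ only grows, because joint solves produce nonnegative increments: $M_{SS}^{-1}\ge0$ and the right-hand sides are nonnegative. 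This gives \Cref{condition:normalize-solution} and part (2). I would cite \Cref{lem:query-push} for the details.

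\textbf{Base case $j=\ell$.} The leaf runs the nonrecursive cache loop of \Cref{nr:alg-main} with direct \textnormal{\textsc{CacheSolve}} calls. A normal return means its loop exited with $\max_{v\in H_\ell}r^{(\ell)}_v/d_v<3q_\ell$. This is part (1) with $\kappa_\ell=3$, as in \Cref{nr:accuracy}, and \Cref{lem:recursive-leaf} supplies exactly this.

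\textbf{Inductive step.} Assume the claim at level $j+1$ and consider a normal return at level $j<\ell$. The level-$j$ episode ends normally only after its last child call returns normally, with threshold $q_{j+1}=\theta$, where $\theta$ is the level-$j$ target density, and with the level-$j$ queue drained. By the induction hypothesis and \eqref{rec:parent-child}, on $H_{j+1}$ we get
\[
r^{(j)}_v=q_{j+1}d_v+r^{(j+1)}_v\le(1+\kappa_{j+1})\theta d_v=\kappa_j\theta d_v.
\]
On $H_j\setminus H_{j+1}$ we get
\[
r^{(j)}_v=4d_vy^{(j+1)}_v+r^{(j+1)}_v\le\tfrac52\theta d_v\le\kappa_j\theta d_v.
\]
Lower bounds follow the same way from nonnegativity of both terms. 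A normal return of level $j$ happens at $\theta=q_j$, matching the terminal test at line~\ref{nr:line-final}. Together these give part (1) at level $j$ with constant $\kappa_j q_j$.

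\textbf{Main obstacle.} The delicate step is ensuring that the child's guarantee is still valid when the parent returns. After the last child return, the parent may run further exterior drains, and these can change $r^{(j+1)}$ on $H_j$. I would handle this by showing two things. First, a normal parent return only occurs when its check passes immediately after the child returns, so no further drain intervenes. Second, if a drain does intervene, the parent's final check directly enforces the density bound. Either way, the interleaving of ancestor drains must not invalidate the child's snapshot.
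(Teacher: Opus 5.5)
Your skeleton is the paper's: downward induction, the leaf case from \Cref{lem:recursive-leaf}, the two-case bound from \eqref{rec:parent-child} giving $(1+\kappa_{j+1})\theta$ on $H_{j+1}$ and $5\theta/2$ on $H_j\setminus H_{j+1}$, and the cap-plus-monotonicity argument for \Cref{condition:normalize-solution}. The gap is the step you yourself call the main obstacle, which you leave open.

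Your second proposed resolution fails. At an internal level, \Cref{alg:recursive-normalization} performs no density check after the child returns; it only tests the budgets and whether $\theta=q_j$. So nothing would catch a violation of part (1) created by a later drain.

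Your first resolution states the right conclusion for the wrong reason. The parent \emph{does} call \textsc{ExteriorPush}$(j,\cdot)$ after the child returns, whenever $W_j<Q_j$. What must be shown is that this call is vacuous, not that it is absent. This matters because your paragraph on exterior bounds obtains part (2) from exactly that drain. A nontrivial drain would send flow into $H_j$, or activate new vertices with $r^{(j)}_v$ raised by $q_jd_v/2$. The child's density bound would then no longer apply at the parent's return.

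The missing idea is to use the induction hypothesis in full. Part (2) of \Cref{condition:normalize-residual} at level $j+1$ holds for every $0\le t\le j+1$. So at the child's normal return, all exterior queues that \textsc{ExteriorPush}$(j,\cdot)$ inspects (levels $t\le j$) are already empty. Hence the parent's drain performs no push, no activation, and no budget charge. Closing the episode leaves $x$ and $r^{(j)}$ unchanged, so the child's return state is the parent's return state. This gives part (2) at level $j$ directly, and it makes your two-case computation valid at the moment the parent returns.

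You should also treat separately a normal return right after the initial drain with $H_j=\varnothing$. There part (1) is vacuous and the rest follows from \Cref{lem:query-push}.
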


The conditions also connect normalization to the running-time
analysis. The inherited density bound controls the residual on
the doubling set in \Cref{lem:residual_next_bound}, allowing
\Cref{lem:technical-full-halving} to reuse the nonrecursive
good-case/bad-case argument. The lower bound on active relative
potentials in \Cref{condition:normalize-solution} supplies the
starting scale for the doubling count in
\Cref{lem:technical-full-exceptions}. Together with the budget and
new-input bounds, these estimates yield the episode and full-solve
counts in \Cref{lem:recursive-layer}.

The child is also small enough for recursion to save work. While
the parent budget is unexhausted, its residual mass is below $2Q$
by \Cref{lem:mu_t-W_t}. Since each child-active vertex retains
$\theta d_v$ in \eqref{rec:parent-child}, we have
$\theta\vol(H_j)<2Q$, hence $\vol(H_j)<2b$; this is
\Cref{lem:query-push-volume}. The interruption rule prevents any
further child solve after the parent allowance is exhausted.

We also need a work guarantee. A nonroot call of scale $N$ is opened
inside one parent episode with $Q_{j-1}=q_jN$. Its own episodes use
a fixed block scale $1\le b\le N$. Let $R_j(N)$ denote the work
charged to this call and its descendants. A push is charged to the
call at which it is produced, including its updates to other open
records. Initializing a child's records on the parent region is
charged to the parent episode. This convention counts every operation
once, even when an ancestor push interrupts a child.
Write $P=C(2+\log(2/\eps)+\ell)^c$ for sufficiently large absolute
constants $C,c$.

\begin{lemma}[Work at one recursive level]\label{lem:recursive-layer}
A level-$j$ call of \textnormal{\textsc{Normalize}} has at most
$PN/b$ episodes and full solves, and at most one child call per
episode. For $j<\ell$, its expected work satisfies
\begin{align}
 R_j(N)\le P\left(\frac{N^2}{b}+\frac Nb R_{j+1}(b)\right)+O(PN).
 \label{rec:recurrence}
\end{align}
At the leaf, $R_\ell(N)\le P(N^2/b+Nb)$.
\end{lemma}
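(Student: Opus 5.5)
My plan is to transplant the counting argument of \Cref{nr:runtime} to one level of \textsc{Normalize}, treating the child call as a black box that realizes the output goal. I will first bound the number of episodes and full solves by $PN/b$. Then I will charge the work per episode, per full solve, and per child call.

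The counting step reuses \Cref{nr:full-exceptions,nr:counts} almost verbatim, with $q$ replaced by $q_j$ and $N$ by the call's scale. The ingredients change as follows.

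\begin{enumerate}
\item \textbf{Discovery account.} The analogue of \Cref{nr:push-account} comes from the summed state identity \eqref{rec:parent-child}, applied to the call's input mass. That mass is at most $2Q_{j-1}=2q_jN$ by \Cref{lem:mu_t-W_t}. This gives $\vol(H_j)=O(N)$, exterior-push work $O(N)$, and $\sum W=O(1)$ in units of $q_jN$.
\item \textbf{Halving.} The halving corollary comes from \Cref{lem:technical-full-halving}, with the density bound $\kappa_{j+1}\theta$ supplied by \Cref{lem:recursive-root} in place of $3\theta$. This is why $P$ must absorb a factor polynomial in $\ell$.
\item \textbf{Doubling count.} This uses \Cref{lem:technical-full-exceptions}. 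Relative potentials start at $\ge 3q_j/8$ by \Cref{condition:normalize-solution} and are bounded polynomially in $N_0$ by the analogue of \Cref{lem:potential-completion}. Each vertex therefore doubles $O(\log(2/\eps)+\ell)$ times.
\item \textbf{Episode chains.} Consecutive successful episodes without a full solve halve $Q$ from at most $2q_jN$ down to $q_jb$. This gives $O(\log N)$ per chain.
\end{enumerate}

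Multiplying these logarithmic factors, I will choose $C,c$ so that all counts are at most $PN/b$.

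For the work recurrence, each episode makes at most one child call, by construction of the recursive algorithm. I also need the child's scale to be at most $b$. The child threshold is $\theta=Q/b$ and its budget is $Q$, so its normalized scale is $Q/\theta=b$. \Cref{lem:query-push-volume} guarantees $\vol(H_{j+1})<2b$ while the budget is open, so the child's cost is $R_{j+1}(b)$ (up to a constant, which $P$ absorbs) in expectation.

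The remaining charges are these:
\begin{itemize}
\item Each full solve and each episode-start scan costs $\widetilde O(N)$, for a total of $(PN/b)\cdot\widetilde O(N)$.
\item Initializing the child's records on the parent region is charged to the parent episode at $\widetilde O(N)$ each, which is absorbed in the same term.
\item Exterior pushes produced at this level cost $O(N)$ in total. This is the $O(PN)$ term.
\end{itemize}
By linearity of expectation over the at most $PN/b$ child calls, we obtain \eqref{rec:recurrence}. At the leaf there are no children. Cache solves are counted as in \Cref{nr:cache-count}: there are $O(bE+N)$ of them, each on volume $<2b$. This gives $R_\ell(N)\le P(N^2/b+Nb)$.

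\textbf{Main obstacle.} The delicate point is the discovery account at a nonroot level, namely that $\sum W$ and exterior outflow remain $O(1)$ in units of $q_j$ times the call's scale. Interruptions by ancestor pushes complicate this, as does the fact that a child's work residual includes mass injected by ancestor-level exterior pushes. My first attempt will be to charge such injected mass to the ancestor (as the stated convention does) and to show that within the child the summed identity still holds, with total input bounded by the parent's budget $2Q$. If that fails, I would enlarge $P$ by an extra factor of $\ell$ to cover mass reentering through each ancestor level.
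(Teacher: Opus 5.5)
Your plan is essentially the paper's proof. Full solves are counted by cutting the sequence at the exceptional pairs of \Cref{lem:technical-full-exceptions}, with $O(\log(2N/b))$ halvings in between (since $q_jb/2<\mu_j\le 2q_jN$); episodes are counted by the $Q_j$-halving chains; the costs are charged as you list; and the earlier-level outflow is bounded by the single parent allowance $Q_{j-1}=q_jN$, exactly as in your first attempt.

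Two small corrections, both absorbed in $P$. First, the density the parent inherits from a successful child is $(1+\kappa_{j+1})\theta=\kappa_j\theta$, not $\kappa_{j+1}\theta$, because of the one extra retained floor. Second, the birth charges $(j+3)q_jd_v/2$ make $\sum_iW_j^{(i)}=O(\kappa_0q_jN)$ rather than $O(q_jN)$.
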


The first term pays for full solves, episode-entry scans, and child
initialization. The second pays for one smaller normalization per
episode. The leaf retains the nonrecursive cache cost.
\Cref{sec:technical-layer,sec:technical-leaf} prove these bounds
using the same halving and doubling accounts as in \Cref{nr:analysis}.

If a child costs $b^{1+1/k}$ up to the factor $P$, the two terms
in \eqref{rec:recurrence} are $N^2/b$ and $Nb^{1/k}$. Balancing
them improves the exponent to $1+1/(k+1)$. Starting from $3/2$ at
the leaf gives $4/3,5/4$, and so on. A slowly growing depth brings
the exponent to $1+o(1)$ while keeping the overhead subpolynomial.

\begin{lemma}[Recursive running time]\label{lem:recursive-complexity}
Set $K=1/\eps$ and
$\ell=\lfloor\sqrt{\log(2K)/\log\log(4K)}\rfloor$.
With $b=\lfloor N^{(\ell-j+1)/(\ell-j+2)}\rfloor$ at level $j$,
the root call has expected running time
\[
 K\exp\!\left(O\!\left(\sqrt{\log(2K)\log\log(4K)}\right)\right)
 =K^{1+o(1)}.
\]
\end{lemma}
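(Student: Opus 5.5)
We unroll the recurrence of \Cref{lem:recursive-layer} by induction on the level, proving a bound of the form
\[
 R_j(N)\le (3P)^{\ell-j+1}\,N^{1+1/(\ell-j+2)}
\]
for every call at level $j$ with scale $N\ge1$. Write $k_j=\ell-j+2$, so the exponent is $1+1/k_j$ and the block scale is $b=\lfloor N^{(k_j-1)/k_j}\rfloor$.

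\emph{Base case.} At the leaf, $k_\ell=2$ and $b=\lfloor N^{1/2}\rfloor$. The leaf bound $R_\ell(N)\le P(N^2/b+Nb)$ then gives $R_\ell(N)\le 3PN^{3/2}$. The floor costs at most a factor $2$ in $N^2/b$, since $b\ge N^{1/2}/2$ when $N\ge1$.

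\emph{Inductive step.} For $j<\ell$, apply \eqref{rec:recurrence} and substitute the hypothesis $R_{j+1}(b)\le(3P)^{\ell-j}b^{1+1/(k_j-1)}$.
\begin{itemize}
\item With $b\approx N^{(k_j-1)/k_j}$ we have $b^{1/(k_j-1)}\approx N^{1/k_j}$. Hence the second term is $\frac Nb R_{j+1}(b)\le(3P)^{\ell-j}N\cdot N^{1/k_j}$.
\item The first term is $N^2/b\le 2N^{1+1/k_j}$.
\item The additive term is $O(PN)$.
\end{itemize}
Summing, $R_j(N)\le P\bigl(2+(3P)^{\ell-j}+O(1)\bigr)N^{1+1/k_j}\le(3P)^{\ell-j+1}N^{1+1/k_j}$, after absorbing constants into $P$ (taking $P\ge$ a large constant).

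One must check that the floor does not break the exponent identity. Since $b\le N^{(k_j-1)/k_j}$, we get $b^{1/(k_j-1)}\le N^{1/k_j}$, so $\frac Nb\cdot b^{1+1/(k_j-1)}=Nb^{1/(k_j-1)}\le N^{1+1/k_j}$. This direction is monotone and needs no loss. Also $1\le b\le N$ holds as the lemma requires.

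\emph{Root call.} The root has $N_0=1/q_0=32\kappa_0/\eps=O(\ell K)$ and $k_0=\ell+2$. The total expected work is therefore
\[
 (3P)^{\ell+1}(O(\ell K))^{1+1/(\ell+2)}.
\]
With $P=C(2+\log(2K)+\ell)^c$, we have $\log(3P)=O(\log\log(4K))$. So the first factor is $\exp(O(\ell\log\log(4K)))$. The second factor is $K\cdot\exp(O(\log(2K)/\ell+\log\ell))$. Substituting $\ell\approx\sqrt{\log(2K)/\log\log(4K)}$ balances the two exponents, each becoming $O(\sqrt{\log(2K)\log\log(4K)})$. This gives $K\exp(O(\sqrt{\log(2K)\log\log(4K)}))=K^{1+o(1)}$.

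\emph{Small-$K$ edge case.} When $\ell=0$, the root is itself the leaf and the $3/2$ bound applies. This is fine because $\log(2K)/\ell$ is only used when $\ell\ge1$. For bounded $K$, everything is $O(1)$.

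\emph{Main obstacle.} The delicate point is that $P$ must be uniform across levels, so that it is the same at every level. It depends only on $\eps$ and $\ell$, not on the local $N$, and \Cref{lem:recursive-layer} states it this way. We rely on this, together with expectations composing linearly through the recursion, which holds since each bound is on expected work.
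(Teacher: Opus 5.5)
Your proposal is correct and follows the paper's proof. It uses the same induction on $\ell-j$ for a bound of the form $(cP)^{\ell-j+1}N^{1+1/(\ell-j+2)}$, the same leaf base case with the floor costing a factor $2$ in $N^2/b$, and the same balancing of $(\ell+1)\log P$ against $\log N_0/(\ell+2)$ at the root; the differences (using $3P$ instead of $4P$, and carrying the $O(PN)$ term explicitly instead of absorbing it into $PN^2/b$ up front) are cosmetic.
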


\begin{proof}
Absorb the $O(PN)$ term into $PN^2/b$ by increasing $P$.
Put $r=\ell-j$ and choose $b=\lfloor N^{(r+1)/(r+2)}\rfloor$.
We prove by induction on $r$ that
$R_j(N)\le(4P)^{r+1}N^{1+1/(r+2)}$.
For $r=0$, \Cref{lem:recursive-layer} gives
$R_\ell(N)\le3PN^{3/2}\le4PN^{3/2}$.
Assume the claim holds for $r=k$, where $k\ge0$.
For $r=k+1$, the induction hypothesis gives
\begin{align*}
 R_j(N)
 &\le P\left(\frac{N^2}{b}+(4P)^{k+1}N b^{1/(k+2)}\right)\\
 &\le P\bigl(2+(4P)^{k+1}\bigr)N^{1+1/(k+3)}
 \le(4P)^{k+2}N^{1+1/(k+3)}.
\end{align*}
Finally, $N_0=32\kappa_0K$ and the chosen depth give
\begin{align*}
 R_0(N_0)
 &\le N_0\exp\!\left(\frac{\log N_0}{\ell+2}
                   +O((\ell+1)\log P)\right)\\
 &\le K\exp\!\left(O\!\left(\sqrt{\log(2K)\log\log(4K)}\right)\right).
\end{align*}
\end{proof}

The return guarantees now give the approximation guarantee directly.

\begin{proof}[Proof of \Cref{thm:main}]
Initialize the root with $x=0$, $H_0=\varnothing$, and
$r^{(0)}=\mathbf e_s$, and run \textsc{Normalize} at scale $N_0$
and threshold $q_0$. By \Cref{lem:recursive-layer}, it terminates.
If the source component is completed, \Cref{lem:potential-completion}
returns the exact answer with an additional $\widetilde O(N_0)$ work.
Otherwise, the root has no ancestor that can interrupt it, so its
return is normal and \Cref{lem:recursive-root} applies.

Set $c_0=3q_0/8$, $z=(x-c_0\one)_+$, and
$\widehat\pi_s=\alpha Dz$. The root state and the two return
conditions give
\[
 \mathbf e_s-Mz
 \le\bigl(\max\{q_0,4c_0\}+c_0+\kappa_0q_0\bigr)\mathbf d
 =\left(\kappa_0+\frac{15}{8}\right)q_0\mathbf d
 \le\eps\mathbf d.
\]
For $v\in H_0$, $(x-z)_v=c_0$ and
$0\le x-z\le c_0\one$ imply $(M(x-z))_v\ge0$, so
$(\mathbf e_s-Mz)_v\ge q_0d_v$. For $v\notin H_0$, $z_v=0$
implies $(\mathbf e_s-Mz)_v\ge0$. Thus
$0\le\mathbf e_s-Mz\le\eps\mathbf d$, and
\Cref{lem:acl-residual} proves the ACL guarantee.
\Cref{lem:recursive-complexity} supplies the claimed running time.
\end{proof}

It remains to implement \textsc{Normalize} and establish the two
guarantees used above.

\subsection{The complete recursive algorithm}
\label{rec:algorithm}

\Cref{alg:recursive-normalization} reuses the episode loop of
\Cref{nr:alg-main}, replacing the cache loop by one child call at
each internal level. An episode fixes
$Q_j=\max\{\mu_j^{\rm start},q_jb\}$ and $\theta=Q_j/b$, performs
a pending full solve, and asks the child to meet the output goal at
threshold $q_{j+1}=\theta$. The leaf uses the original cache loop.
After success, the same decisions apply: return if $\theta=q_j$,
skip the next full solve if the mass is below $Q_j/2$, and otherwise
request that solve with the density bound just obtained
(lines~\ref{rec:line-child-init}--\ref{rec:line-next-full}).

The additional issue is discovery across several open levels.
Ancestor joint solves are suspended during a child call, but ancestor
exterior pushes still process boundary flow. The \emph{budget counter}
$W_j$ records two kinds of new input to the level-$j$ active region:
\begin{itemize}[leftmargin=*,itemsep=3pt,topsep=3pt]
\item When $v$ enters $H_j$, add $(j+3)q_jd_v/2$ to $W_j$.
This allowance covers a vertex that began outside the parent domain
and passed through earlier inactive layers before reaching level $j$.
\item A push of increment $h$ at level $t$ adds its outgoing flow
$(1-\alpha)d_vh$ to every open $W_j$ with $j\ge t$.
Its source is outside those active sets, so this flow may increase
their residual mass.
\end{itemize}
Deeper updates are supported inside $H_j$ and cannot increase its
total mass after internal transfers cancel. They incur no direct
budget charge. \Cref{lem:mu_t-W_t} proves
$\mu_j\le\mu_j^{\rm start}+W_j$, the bound that keeps every child
small while the parent budget remains unexhausted.

The update crossing a budget is completed and charged; no further
solve is then allowed at that level or below. The owner closes its
exhausted episode, discards descendant records, drains with the
ancestor counters still open, and restarts with a full solve. If an
ancestor owns the exhausted budget, the current call returns control
to it. All increments to $x$ are retained. This is the restart rule
of \Cref{nr:cache-idea}, applied at the level whose allowance expired.

The shared state consists of the root identity \eqref{nr:state},
the identities \eqref{rec:parent-child}, and the open episode records.
Write $\mathcal S_j=(x,\{H_t,\xi^{(t)},y^{(t)},r^{(t)}\}_{t\le j})$
for these records. The active sets are nested and all actual work
residuals are nonnegative. A newly available row receives
$\xi_v^{(j)}=x_v$, $y_v^{(j)}=0$, and $r_v^{(j)}=r_v^{(j-1)}$.
The root relation $y^{(0)}=x$ is an alias, so it is not incremented
twice. References remain fixed until their call ends.

\begin{algorithm}[!htbp]
\caption{Recursive local PageRank computation}
\label{alg:recursive-normalization}
\small
\SetKwProg{RecFn}{Procedure}{:}{}
\KwIn{$G,s,0<\alpha\le1/2,\eps$ and depth $\ell$.}
\KwOut{An ACL $\eps$-approximation $\widehat\pi_s$.}
\RecFn{\textnormal{\textsc{LocalPageRank}}$(G,s,\alpha,\eps,\ell)$}{
  Query $d_s$; \lIf{$d_s=0$}{\Return $\mathbf e_s$}

  Set $\kappa_j=\ell-j+3$, $q_0=\eps/(32\kappa_0)$, and $N_0=1/q_0$\;
  Initialize $x=0$, $y^{(0)}=x$, $\xi^{(0)}=0$, $H_0=\varnothing$, $r^{(0)}=\mathbf e_s$, with no open episode\;
  $\mathcal S_0\leftarrow\textnormal{\textsc{Normalize}}(0,N_0,q_0,\mathcal S_0)$\;
  \Return $\widehat\pi_s=\alpha D(x-3q_0\one/8)_+$\;
}
\BlankLine
\RecFn{\textnormal{\textsc{Normalize}}$(j,N,q_j,\mathcal S_j)$}{
  $\mathcal S_j\leftarrow\textnormal{\textsc{ExteriorPush}}(j,\mathcal S_j)$\;
  \lIf{an open ancestor budget is exhausted}{\Return interrupted $\mathcal S_j$}
  \lIf{$H_j=\varnothing$}{\Return normal $\mathcal S_j$}
  Set $b\leftarrow\lfloor N^{(\ell-j+1)/(\ell-j+2)}\rfloor$ and $\mathit{needFull}\leftarrow\textnormal{true}$\;
  \While{\textnormal{true}}{
    $\mu_0\leftarrow\sum_{v\in H_j}r^{(j)}_v$; store the episode-entry set $H_j^{\rm start}$\;
    Set $Q_j\leftarrow\max\{\mu_0,q_jb\}$, $\theta\leftarrow Q_j/b$, $W_j\leftarrow0$; open this episode\nllabel{rec:line-episode-open}\;
    \lIf{$\mathit{needFull}$}{$\mathcal S_j\leftarrow\textnormal{\textsc{FullSolve}}(j,N,b,q_j,\mathcal S_j)$}
    \eIf{$j=\ell$}{
      $\mathcal S_j\leftarrow\textnormal{\textsc{LeafNormalize}}(j,b,\theta,\mathcal S_j)$\;
    }{
      Initialize $H_{j+1}=\varnothing$, $q_{j+1}=\theta$, $\xi^{(j+1)}=x$, $y^{(j+1)}=0$, $r^{(j+1)}=r^{(j)}$ on $H_j$, with no open child episode\nllabel{rec:line-child-init}\;
      $\mathcal S_{j+1}\leftarrow\textnormal{\textsc{Normalize}}(j+1,b,\theta,\mathcal S_{j+1})$\nllabel{rec:line-child-call}\;
    }
    \If{an open ancestor budget is exhausted}{
      Close this episode and discard descendant records; \Return interrupted $\mathcal S_j$\;
    }
    \If{$W_j<Q_j$}{
      $\mathcal S_j\leftarrow\textnormal{\textsc{ExteriorPush}}(j,\mathcal S_j)$ \tcp*{normal child/leaf return has drained these queues}
    }
    \If{an open ancestor budget is exhausted}{
      Close this episode and discard descendant records; \Return interrupted $\mathcal S_j$\;
    }
    \eIf{$W_j\ge Q_j$}{
      Close this episode and discard descendant records\;
      $\mathcal S_j\leftarrow\textnormal{\textsc{ExteriorPush}}(j,\mathcal S_j)$ \tcp*{local budget closed; ancestors remain open}
      \lIf{an open ancestor budget is exhausted}{\Return interrupted $\mathcal S_j$}
      $\mathit{needFull}\leftarrow\textnormal{true}$; \textbf{continue} the while loop\;
    }{
      $\mu\leftarrow\sum_{v\in H_j}r^{(j)}_v$; close this episode and discard descendant records\;
      \lIf{$\theta=q_j$}{\Return normal $\mathcal S_j$}
      $\mathit{needFull}\leftarrow(\mu\ge Q_j/2)$\nllabel{rec:line-next-full}\;
    }
  }
}
\end{algorithm}

\begin{algorithm}[!htbp]
\caption{\textnormal{\textsc{ExteriorPush}}: exterior discovery on the open stack}
\label{alg:query-push}
\small
\KwIn{Level $j$, shared records $\mathcal S_j$, and open episode counters.}
\KwOut{Updated records; interrupted if an open budget is exhausted.}
\lIf{some open episode $i\le j$ has $W_i\ge Q_i$}{\Return interrupted $\mathcal S_j$}
\While{$\exists (t,v)$, $0\le t\le j$, $v\in H_{t-1}\setminus H_t$, with $r_v^{(t)}>q_td_v$}{
  Choose such $(t,v)$ and read its adjacency list if not stored\;
  Set $h\leftarrow\min\{r_v^{(t)}/(5d_v),3q_t/8-y_v^{(t)}\}$\;
  $x_v\leftarrow x_v+h$; $y_v^{(i)}\leftarrow y_v^{(i)}+h$ for $1\le i\le t$\;
  $r_v^{(i)}\leftarrow r_v^{(i)}-d_vh$ for $0\le i<t$; $r_v^{(t)}\leftarrow r_v^{(t)}-5d_vh$\label{line:qp-own}\;
  \For{$u\in N(v)$ and $0\le i\le j$ with $u\in H_{i-1}$}{
    $r_u^{(i)}\leftarrow r_u^{(i)}+(1-\alpha)h$\label{line:qp-neighbor}\;
  }
  \For{each open episode $k\ge t$}{
    $W_k\leftarrow W_k+(1-\alpha)d_vh$\label{line:qp-flow}\;
  }
  \If{$y_v^{(t)}=3q_t/8$}{
    $H_t\leftarrow H_t\cup\{v\}$; $r_v^{(t)}\leftarrow r_v^{(t)}+q_td_v/2$\label{line:qp-activate}\;
    \lIf{episode $t$ is open}{$W_t\leftarrow W_t+(t+3)q_td_v/2$\label{line:qp-birth}}
    \lIf{$t<j$}{register $\xi_v^{(t+1)}=x_v$, $y_v^{(t+1)}=0$, $r_v^{(t+1)}=r_v^{(t)}$}
    \lIf{$t=0$ and $H_0$ is the complete source component}{finish globally using \Cref{lem:potential-completion}}
  }
  Update queue and density keys at all touched rows\;
  \lIf{some open episode $i\le j$ has $W_i\ge Q_i$}{\Return interrupted $\mathcal S_j$}
}
\Return $\mathcal S_j$\;
\end{algorithm}

\begin{algorithm}[!htbp]
\caption{\textnormal{\textsc{FullSolve}}: joint update on the current active set}
\label{alg:full-solve-interface}
\small
\KwIn{$j,N,b,q_j$ and shared records $\mathcal S_j$.}
\KwOut{Updated records with $0\le r_{H_j}^{(j)}\le q_jb\mathbf d_{H_j}/(2^{12}N)$.}
Compute $\widehat h_{H_j}\leftarrow M_{H_jH_j}^{-1}r_{H_j}^{(j)}$ by an SDDM solve, component by component if necessary\;
Extend $\widehat h$ by zero and set $x\leftarrow x+\widehat h$\;
\For{$1\le t\le j$}{
  $y^{(t)}_v\leftarrow y^{(t)}_v+\widehat h_v$ for $v\in H_j$\;
}
\For{$0\le t\le j$}{
  $r^{(t)}_v\leftarrow r^{(t)}_v-(M_{H_jH_j}\widehat h_{H_j})_v$ for $v\in H_j$\;
  \For{$u\in (N(H_j)\setminus H_j)\cap H_{t-1}$}{
    $r^{(t)}_u\leftarrow r^{(t)}_u+(1-\alpha)\sum_{v\in N(u)\cap H_j}\widehat h_v$\;
  }
}
Update queue and density keys at touched rows; keep active sets, references, and budgets unchanged\;
\Return $\mathcal S_j$\;
\end{algorithm}

\Cref{alg:query-push,alg:full-solve-interface} update all affected
open records. Adjacency lists are stored, and queue or density keys
are updated only at touched rows. Opening a child may scan the parent
region once; later arrivals are registered when touched. Discarding
an interrupted call is lazy, with record creation and reclamation
charged to the operation that created the records. Completion of the
\emph{root} region $H_0$ takes priority and returns the exact answer
by \Cref{lem:potential-completion}.

\begin{algorithm}[!htbp]
\caption{\textnormal{\textsc{LeafNormalize}}: normalize residual at the leaf}
\label{alg:technical-leaf}
\small
\KwIn{$\ell,b,\theta$ and shared state $\mathcal S_\ell$, with an open leaf episode
  $Q_\ell=\theta b\ge q_\ell b$.}
\KwOut{Updated $\mathcal S_\ell$, with the same open episodes.}
\lIf{some open $i\le\ell$ has $W_i\ge Q_i$}{\Return $\mathcal S_\ell$}
Set $C\leftarrow\varnothing$ and initialize the queue
  $\mathcal D=\{v\in H_\ell:r_v\ge3\theta d_v\}$\;
\While{true}{
  $\mathcal S_\ell\leftarrow\textnormal{\textsc{ExteriorPush}}(\ell,\mathcal S_\ell)$, updating the shared
    records and $\mathcal D$ at every touched row\;
  \lIf{some open $i\le\ell$ has $W_i\ge Q_i$}{\Return $\mathcal S_\ell$}
  \lIf{$\mathcal D=\varnothing$}{\Return $\mathcal S_\ell$}
  Insert every vertex of $\mathcal D\setminus C$ into $C$\;
  Set $g_C\leftarrow r_C-\theta\mathbf d_C$\;
  Compute $h_C\leftarrow M_{CC}^{-1}g_C$ by an SDDM solve \tcp*{retains the floor $\theta\mathbf d_C$}
  Apply $h_C$ to $x$ and all open records using the update in
    \Cref{alg:full-solve-interface}, with $C$ as the update domain\;
  Update the inactive queues and $\mathcal D$ at the touched rows\;
}
\end{algorithm}

Write $q=q_\ell$ and $r=r^{(\ell)}$.  One invocation of
\textnormal{\textsc{LeafNormalize}} processes the fixed episode budget $Q_\ell$
and density $\theta=Q_\ell/b\ge q$; it does not reset any open counter.
The global component-completion test of \Cref{lem:potential-completion} has
priority over the local returns below. The cache is discarded on every
local return, while the shared potential,
active sets, residuals, and budget counters are retained.

The returned $\mathcal S_\ell$ always denotes the current shared records. A return with no exhausted open budget is normal; the counters
themselves identify an interrupted return. No additional return flag
is needed. A cache update creates no direct budget charge;
any ensuing single-vertex push or activation is charged when that event occurs.
Each open ancestor is updated by its state identity, and its block
solves remain suspended while the leaf procedure is running.

\section{Proofs of the recursive guarantees}
\label{sec:technical-proofs}

\subsection{\texorpdfstring{Proof of \Cref{lem:recursive-root}}{Proof of normalization composition}}
\label{rec:return-proof}

The induction uses the two cases of \eqref{rec:parent-child}: a
child either activates a vertex and retains one floor, or leaves it
outside with a capped reserve. We first state the discovery and leaf
guarantees used in this argument. Their proofs follow in
\Cref{sec:technical-query,sec:technical-leaf}.

\begin{lemma}\label{lem:query-push}
On a normal return, \textnormal{\textsc{ExteriorPush}} satisfies
$\vol(H_j)\le2N$ and
\Cref{condition:normalize-residual}(2), \Cref{condition:normalize-solution}.
The total adjacency-scan cost of level-$j$ pushes during one
level-$j$ call of \textnormal{\textsc{Normalize}} is $O(N)$.
\end{lemma}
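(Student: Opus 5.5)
The proof follows the nonrecursive arguments of \Cref{nr:push-account} and \Cref{nr:accuracy}, now applied level by level through the parent--child identity \eqref{rec:parent-child}. The first step is to check that every update of \Cref{alg:query-push,alg:full-solve-interface} and of the leaf cache update preserves three things. The first is the root identity \eqref{nr:state} with threshold $q_0$. The second is each identity \eqref{rec:parent-child} for $1\le t\le j$. The third is nonnegativity of all work residuals $r^{(t)}$ and relative potentials $y^{(t)}$. For a push at level $t$ on $v\in H_{t-1}\setminus H_t$ with increment $h$, the check goes as follows. At each level $i<t$, $v$ is inactive at level $i+1$, so it sits in the reserve case there, with $4d_vy^{(i+1)}_v$ growing by $4d_vh$. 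Combined with the $(\alpha+\ldots)$ self-term, this gives exactly the decrement $d_vh$ of line~\ref{line:qp-own}. At level $t$, the reserve and own update consume $5d_vh$. The neighbor increments are identical at all levels in whose domain the neighbor lies. The choice $h\le r^{(t)}_v/(5d_v)$ keeps $r^{(t)}_v\ge0$, and for $i<t$ we have $r^{(i)}_v\ge r^{(t)}_v+\dots\ge 5d_vh$. Activation replaces the reserve $4\cdot\frac{3q_t}{8}d_v$ by $q_td_v$ and releases $q_td_v/2$, exactly as in \Cref{nr:alg-exterior}.

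With these identities in hand, the return bounds follow directly. A normal return of \textsc{ExteriorPush} means its while-loop found no pair $(t,v)$ with $v\in H_{t-1}\setminus H_t$ and $r^{(t)}_v>q_td_v$. This is precisely \Cref{condition:normalize-residual}(2). For \Cref{condition:normalize-solution}, the cap $h\le 3q_t/8-y^{(t)}_v$ gives $y^{(t)}_v\le3q_t/8$ while $v$ is inactive at level $t$. A vertex in $H_t$ entered with $y^{(t)}_v=3q_t/8$, and afterwards $y^{(t)}$ only increases, because every joint solve has $\widehat h\ge0$ by entrywise nonnegativity of $M^{-1}_{SS}$ on a nonnegative right-hand side. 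The one subtlety is that a deeper solve also raises $y^{(t)}$ at vertices of $H_{t+1}\subseteq H_t$. Such vertices are already active at level $t$, so the cap is not violated.

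The volume bound comes from summing \eqref{rec:parent-child} over the domain $H_{j-1}$. Every vertex of $H_j$ retains at least $q_jd_v$ of the parent's residual, so $q_j\vol(H_j)\le\sum_{H_{j-1}}r^{(j-1)}$. At the root this is at most $1$ by the summed identity, which gives $\vol(H_0)\le N_0$. At a child opened inside a parent episode with $Q_{j-1}=q_jN$, I would bound the parent mass by $\mu_{j-1}^{\rm start}+W_{j-1}<2Q_{j-1}$ (the analogue of \Cref{nr:endpoint}, cited as \Cref{lem:mu_t-W_t}). This gives $\vol(H_j)<2N$.

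For the scan cost I would repeat the two-kind split from \Cref{nr:push-account}. Admitting level-$j$ pushes occur at most once per vertex of $H_j$, for $O(\vol(H_j))=O(N)$ in total. A nonadmitting push has $h=r^{(j)}_v/(5d_v)>q_j/5$, so its cost $d_v$ is at most $(5/q_j)d_vh$. It then suffices to bound $S_j=\sum d_vh$ over level-$j$ pushes by $O(q_jN)$. Any vertex's accumulated level-$j$ push increments equal its $y^{(j)}_v$ and never exceed $3q_j/8$, whether it ends up active or inactive. So $S_j\le\frac38 q_j\vol(H_j)+\sum_{v\in H_{j-1}\setminus H_j}d_vy^{(j)}_v$. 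The second sum is at most $\frac14\sum r^{(j-1)}$ by \eqref{rec:parent-child}, which is $O(Q_{j-1})=O(q_jN)$, again via the budget mass bound.

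\textbf{Main obstacle.} I expect the hardest step to be this reliance on the mass bound for a call that runs across many episodes, and during interrupted ancestor activity, where the relative reference $\xi^{(j)}$ stays fixed. The plan there is to telescope exactly as in the proof of \Cref{nr:endpoint}, using the fact that deeper solves cannot increase parent mass and that ancestor pushes are charged to the open $W_{j-1}$.
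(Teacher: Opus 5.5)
Your plan matches the paper's proof. The exterior and relative-potential bounds come from the stopping rule and the activation cap. The volume bound is $q_j\vol(H_j)\le\mu_{j-1}\le\mu_{j-1}^{\rm start}+W_{j-1}<2Q_{j-1}=2q_jN$, which is \Cref{lem:query-push-volume} via \Cref{lem:mu_t-W_t}. The scan cost is the admitting/nonadmitting split together with $S_j\le\frac38\mu_{j-1}$, which is \Cref{lem:query-push-scalar}. The obstacle you flag does not arise. A level-$j$ call is opened inside a single parent episode, so $W_{j-1}$ is one counter that stays open for the entire call, and \Cref{lem:mu_t-W_t} applies directly.

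Two details need correcting.

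First, your bookkeeping for a level-$t$ push at $v\in H_{t-1}\setminus H_t$ is wrong as written. Since $v\in H_{t-1}\subseteq\cdots\subseteq H_0$, at every level below $t$ the vertex is in the floor case, not the reserve case. The decrement $d_vh$ of $r_v^{(i)}$ for $i<t$ is just $M_{vv}h=d_vh$ passing through fixed floors; there is no $\alpha$ self-term, because $M_{vv}=d_v$. The reserve adds $4d_vh$ only at level $t$, which is where the decrement $5d_vh$ comes from.

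Second, the scan-cost claim also covers interrupted calls, so you should say why your mass bound holds at every level-$j$ push. A level-$j$ push charges only counters $W_k$ with $k\ge j$. It also does not increase $\mu_{j-1}$: the net change is at most $-\alpha d_vh$, and activation leaves $r^{(j-1)}$ unchanged. Moreover, no level-$j$ push occurs once some $W_i$ with $i<j$ is exhausted. The paper covers this case differently: it bounds the scan of the budget-crossing push directly by $d_v<2N$, using the pre-update inequalities $q_jd_v<r_v^{(j)}\le r_v^{(j-1)}\le\mu_{j-1}<2q_jN$.
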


\begin{lemma}\label{lem:recursive-leaf}
A level-$\ell$ call of \textnormal{\textsc{Normalize}} of scale $N$
satisfies $R_\ell(N)\le P(N^2/b+Nb)$.
A normal return satisfies
\Cref{condition:normalize-residual,condition:normalize-solution}.
\end{lemma}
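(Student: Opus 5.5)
The leaf call runs the episode loop of \Cref{alg:recursive-normalization} with \textsc{LeafNormalize} in place of the child call, so I would transport the nonrecursive analysis of \Cref{nr:analysis} essentially line by line. The residual record $r^{(\ell)}$ plays the role of $r$ and the relative potential $y^{(\ell)}$ plays the role of $x$. The leaf state identity \eqref{rec:parent-child} with threshold $q=q_\ell$ has exactly the form \eqref{nr:state}, with $r^{(\ell-1)}$ replacing $\mathbf e_s-Mx$. The proof has two parts: correctness at a normal return, and the work bound.

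\textbf{Correctness.} A normal return happens only after a successful episode with $\theta=q$. At that moment the final \textsc{ExteriorPush} has drained all inactive queues at every level $t\le\ell$. By \Cref{lem:query-push} this gives \Cref{condition:normalize-residual}(2) and \Cref{condition:normalize-solution}. The \textsc{LeafNormalize} loop exits only when $\mathcal D=\varnothing$, that is, when $r_v<3\theta d_v=3q_\ell d_v=\kappa_\ell q_\ell d_v$ on $H_\ell$. Nonnegativity of $r^{(\ell)}$ on $H_\ell$ follows because every cache solve retains the floor $\theta\mathbf d_C$, and the full solve is exact with a nonnegative right-hand side. Boundary flows only add residual. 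This gives part (1).

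\textbf{Work.} I would reproduce the counts of \Cref{nr:full-exceptions,nr:counts,nr:cache-count}. The needed ingredients are as follows.
\begin{itemize}
\item A budget lemma $\mu_\ell\le\mu^{\rm start}_\ell+W_\ell$, which is \Cref{lem:mu_t-W_t}. It replaces \Cref{nr:endpoint} and gives $\vol(C)<2b$ before exhaustion.
\item A global bound $\sum W_\ell=O(1)$ in units of $q_\ell N$. This requires that the activation charge $(\ell+3)q d_v/2$ summed over $\vol(H_\ell)\le 2N$, plus the total outflow of pushes at level $\le\ell$ charged to this call, is $O(qN)$. The activation part is immediate from \Cref{lem:query-push}. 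For the outflow part, I would redo the summed-identity argument of \Cref{nr:push-account} on the parent domain with initial mass $\sum r^{(\ell-1)}\le Q_{\ell-1}=q_\ell N$.
\item The comparison of \Cref{nr:progress} and \Cref{nr:halving}. These go through verbatim with $M_{AA}$ acting on $y$, since the derivation used only the state identity and drained endpoints.
\item A doubling bound. Here $y_v\ge 3q/8$ at activation (\Cref{condition:normalize-solution}), and an upper bound $y_v=O(\mathrm{poly}(N/q))$ while the episode is open. I would obtain the upper bound as in \Cref{lem:potential-completion}: since $H_\ell$ is connected to the boundary of $H_{\ell-1}$ or the region is complete, $M_{HH}^{-1}$ entries are bounded by $O(\vol(H)/\alpha)$ at worst. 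If I instead settled for $M^{-1}$ entries $\le 1/\alpha$, then $\log(\text{ratio})$ would become $O(\log(N/(\alpha\eps)))$, which would break the claimed $\alpha$-independence of $P$.
\end{itemize}
Combining these gives $\widetilde O(N/b)$ episodes and full solves, each costing $\widetilde O(N)$, and $O(bE+N)$ cache solves of volume $O(b)$. Together with $O(N)$ push cost, this yields $P(N^2/b+Nb)$.

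\textbf{Main obstacle.} The main obstacle is the step that bounds relative potentials for the doubling count without any $\alpha$ dependence. I would try first to use the bound from \Cref{lem:potential-completion} applied to the root $H_0$, which gives $x\le 3N_0$, so $y_v\le x_v\le 3N_0$. I would then argue $\log(3N_0/(3q_\ell/8))=O(\log(1/\eps)+\ell\log)$. This requires $q_\ell\ge q_0$, which holds since each $\theta\ge q_{j}$. Then $\log$ of the ratio is $O(\log(2/\eps))$, absorbed into $P$. A secondary delicacy is charging ancestor-produced pushes correctly, so that the case-(ii) cache solves in \Cref{nr:cache-count} are still paid by $O(1/q)$ total inflow into this call.
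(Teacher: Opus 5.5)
\paragraph{The comparison lemma does not transfer verbatim.} Your correctness argument agrees with the paper. The leaf returns normally only after an uninterrupted drain with $\mathcal D=\varnothing$ and $\theta=q_\ell$, so $r^{(\ell)}<3q_\ell\mathbf d=\kappa_\ell q_\ell\mathbf d$ on $H_\ell$. Your final doubling bound, $0\le y^{(\ell)}_v\le x_v\le3N_0$ with $q_\ell\ge q_0$, also agrees.

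The gap is your claim that \Cref{nr:progress,nr:halving} go through verbatim with $y^{(\ell)}$ in place of $x$ and $r^{(\ell-1)}$ in place of $\mathbf e_s-Mx$. On the level-$\ell$ domain,
\[
r^{(\ell-1)}=\bigl(\mathbf e_s-M\xi^{(\ell)}-(\textstyle\sum_{t<\ell}q_t)\mathbf d\bigr)-My^{(\ell)},
\]
and the bracketed source term is \emph{not} nonnegative. Two things cause this:
\begin{itemize}
\item The references $\xi^{(\ell)}_v$ are recorded at different times.
\item The domain $H_{\ell-1}$ keeps growing during the call, because ancestor pushes activate vertices into it. A vertex that becomes level-$\ell$ active between two full solves may therefore have had no level-$\ell$ record at the first snapshot.
\end{itemize}
The nonrecursive comparison uses $\mathbf e_s\ge0$ and $x^0\ge0$ in exactly the two places that matter. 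First, it uses them to get $(1-\alpha)A_{vA}x^0_A\le5d_vx^0_v+r^0_v$ on the new vertices. Second, it uses them to discard $-(\mathbf e_s)_A-(1-\alpha)A_{A,V\setminus A}x^0_{V\setminus A}$ when deriving $M_{AA}(h-x^0_A)\le q\mathbf d_A+2r^0_A$. Neither step survives for relative potentials.

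\paragraph{How the paper handles this.} The paper's proof of this lemma does not redo these counts. It first establishes the single-episode output guarantee and the $O(bE+N)$ cache-solve count. It then invokes the finite-prefix episode and full-solve count of \Cref{sec:technical-layer}, which holds at every level $j$, including $j=\ell$. That count rests on two ingredients:
\begin{itemize}
\item extending the records to all of $V$, with a reference that stays fixed for the whole call;
\item \Cref{lem:bound_mu_next}, which bounds the negative part of the shifted source by $(1+3\alpha/8)(\sum_{t<j}q_t)\mathbf d$.
\end{itemize}
This is why \Cref{lem:residual_next_bound} carries the terms $4\kappa_0q_j\vol(Z)$ and $2\kappa_0q_j\vol(D)$, and why the doubling threshold becomes $b/(64\kappa_0)$ rather than $b/192$. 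You need to cite \Cref{lem:residual_next_bound,lem:technical-full-halving,lem:technical-full-exceptions} or supply this ingredient yourself.

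\paragraph{Smaller corrections.}
\begin{itemize}
\item The summed identity gives $S_\ell\le\frac38\mu_{\ell-1}$ at the current time. You therefore need the parent bound $\mu_{\ell-1}<2Q_{\ell-1}$ from \Cref{lem:mu_t-W_t}, not the initial mass.
\item Earlier-level outflow is paid by the parent allowance $W_{\ell-1}<Q_{\ell-1}=q_\ell N$. This resolves your ``secondary delicacy.''
\item The number of cache solves without new admissions is $O(N)$ with $N=Q_{\ell-1}/q_\ell$, not $O(1/q_\ell)$. At the leaf, $1/q_\ell$ can be far larger than $N$, and using it would not give the $Nb$ term.
\end{itemize}
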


Only the return assertion of \Cref{lem:recursive-leaf} is needed
below. Its proof for one episode does not use the episode count;
that count is used later to obtain the running time.

\begin{proof}[Proof of \Cref{lem:recursive-root}]
We use downward induction on $j$. If the initial drain leaves
$H_j=\varnothing$, the active condition is vacuous and the other
conditions follow from \Cref{lem:query-push}. We may therefore
consider a normal return after an episode.

For $j=\ell$, the claim follows from \Cref{lem:recursive-leaf}.
Assume it holds at level $j+1$. A successful child is called with
$q_{j+1}=\theta$. For each $v\in H_j$, there are two cases:
\begin{itemize}[leftmargin=*,itemsep=3pt]
\item If $v\in H_{j+1}$, then
$r_v^{(j)}=\theta d_v+r_v^{(j+1)}
 \le(1+\kappa_{j+1})\theta d_v=\kappa_j\theta d_v$.
\item If $v\in H_j\setminus H_{j+1}$, then
$r_v^{(j)}=4d_vy_v^{(j+1)}+r_v^{(j+1)}
 \le(4\cdot3/8+1)\theta d_v
 =5\theta d_v/2\le\kappa_j\theta d_v$.
\end{itemize}
These cases include vertices discovered during the child call.
The parent returns normally only when $\theta=q_j$, proving
\Cref{condition:normalize-residual}(1). The child's return drains
all exterior queues at levels $t\le j+1$, hence in particular those
at $t\le j$, proving part (2). The subsequent parent drain makes
no update because those queues are already empty.

Finally, an inactive level-$j$ coordinate changes only through
level-$j$ exterior pushes. Each such push caps it at $3q_j/8$
and activates it when the cap is reached. Joint updates are
nonnegative and supported on active vertices. Thus the two bounds
in \Cref{condition:normalize-solution} are preserved throughout,
which completes the induction.
\end{proof}

\subsection{\texorpdfstring{Proof of \Cref{lem:query-push}}{Proof of the exterior-push guarantees}}
\label{sec:technical-query}

We first justify the budget account. Write
$\mu_t=\sum_{v\in H_t}r_v^{(t)}$, and use the superscript
$\mathrm{start}$ for the drained snapshot at episode entry.
The thresholds of open calls satisfy $q_0\le\cdots\le q_t$,
since each child is opened at $\theta\ge q_{t-1}$.

\begin{lemma}\label{lem:mu_t-W_t}
It holds that $\mu_t\le\mu_t^{\mathrm{start}}+W_t$.
\end{lemma}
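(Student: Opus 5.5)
We mirror the proof of \Cref{nr:endpoint}, now on the level-$t$ records. Fix an open level-$t$ episode with drained entry snapshot $(x^{0},y^{(t),0},r^{(t),0})$ and entry active set $H_t^{\rm start}$, and put $u=x-x^{0}\ge0$. The first step is to express every current work residual on $H_t$ through the level-$t$ state identity. Iterating \eqref{rec:parent-child} down to the root, $r^{(t-1)}$ on $H_{t-1}$ is an affine function of $\mathbf e_s-Mx$ minus retained reserves and floors of the shallower levels. Hence the level-$t$ record $r^{(t)}_v+q_td_v$ (for $v\in H_t$) equals its entry value minus $(Mu)_v$, up to corrections from reserves of vertices that changed status at shallower levels. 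Since deeper levels' inactive vertices are still active at level $t$, their reserves do not appear in $r^{(t)}$, and the update rules in \Cref{alg:query-push,alg:full-solve-interface} are exactly the incremental form of this identity. First I verify that every update line preserves it.

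Next I split $H_t=H_t^{\rm start}\cup F$ with $F=H_t\setminus H_t^{\rm start}$. For $v\in H_t^{\rm start}$ we get $r_v^{(t)}=r_v^{(t),0}-(Mu)_v$ plus nonpositive corrections: a shallower-level reserve increase only subtracts. For $v\in F$, the entry residual of $v$ at every level $i\le t$ was drained, so $r^{(i),0}_v\le q_id_v\le q_td_v$, and each intermediate reserve $4d_vy^{(i)}_v\le \tfrac32 q_id_v\le\tfrac32 q_td_v$ by monotonicity $q_0\le\dots\le q_t$. The vertex may have entered the level-$t$ domain from outside the parent domain, passing through layers $H_{i-1}\setminus H_i$ for $i\le t$. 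Each layer contributes at most one reserve-to-floor swap of size $O(q_td_v)$. Summing these layer contributions should give at most $\tfrac{t+3}{2}q_td_v$ minus $(Mu)_v$, which is exactly the activation charge on line~\ref{line:qp-birth}. Thus $\sum_{v\in F}$ is bounded by the activation part of $W_t$.

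Finally I bound $-\sum_{v\in H_t}(Mu)_v$. I decompose $u$ into increments supported inside $H_t$ (joint solves at level $\ge t$ and pushes that happened while the vertex was already in $H_t$ — the latter cannot occur, since pushes at levels $>t$ act on vertices in $H_t$, namely $H_{k-1}\setminus H_k$) and increments on vertices outside $H_t$. For the internal part, column sums of $M_{H_tH_t}$ are nonnegative, so $-\one^{\mathsf T}M_{H_tH_t}u_{H_t}\le0$. The external part comes only from level-$\le t$ pushes at vertices outside $H_t$. Its inflow into $H_t$ is at most its total outflow $(1-\alpha)d_vh$, which line~\ref{line:qp-flow} adds to $W_t$ for every $k\ge t$. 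Ancestor joint solves are suspended during the episode, so they contribute nothing. Combining these bounds gives $\mu_t\le\mu_t^{\rm start}+W_t$.

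The main obstacle is the second step: the precise per-layer accounting for a vertex newly entering $H_t$ after traversing several inactive layers. The difficulty is showing that the retained reserves and floors at intermediate levels telescope to at most $(t+3)q_td_v/2$. I would first write $r^{(t)}_v$ for such $v$ explicitly as $r^{(0)}_v$ minus the sum over levels of floors and reserves, then bound each drained entry term by $q_td_v$ and each reserve by $\tfrac32q_td_v$, tracking signs carefully.
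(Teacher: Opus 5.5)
Your outline matches the paper's proof. You take $F$ to be the endpoint set $H_t$, let the vertices of $H_t^{\mathrm{start}}$ contribute $\mu_t^{\mathrm{start}}$, and cover each newly activated vertex by its birth charge. You then bound $-\one_F^{\mathsf T}Mu$ using the nonnegative column sums of $M_{FF}$ plus the outflow of depth-$\le t$ pushes at vertices outside $F$, which line~\ref{line:qp-flow} charges to $W_t$ (ancestor solves being suspended).

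Two small corrections outside Step~2:
\begin{enumerate}[label=(\roman*),leftmargin=*,itemsep=2pt]
\item For $v\in H_t^{\mathrm{start}}$ there are no correction terms at all. Such a vertex is active at every level $\le t$ for the whole episode, so $r_v^{(t)}=r_v^{(t),0}-(Mu)_v$ exactly.
\item In Step~3, pushes of depth $>t$ certainly do occur, at vertices already in $H_t$. They are simply internal to $F$, as are depth-$\le t$ pushes at vertices that later join $F$.
\end{enumerate}

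The gap is Step~2, which you leave at ``should give''. Your claim that $r_v^{(i),0}\le q_id_v$ at every level $i\le t$ is false. Let $k$ be the first level at which $v$ is inactive at entry, so $v\in H_{k-1}\setminus H_k$. At levels $i<k$ the vertex is active, and $r_v^{(i),0}$ contains the deeper floors and the level-$k$ reserve. At levels $i>k$ it has no records yet. Charging a drained term and a reserve to each of up to $t+1$ layers therefore cannot in general be fitted into $\tfrac{t+3}{2}q_td_v$.

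The missing observation is that there is exactly one drained term and one reserve, both at level $k$, and nothing accumulates across layers. By \eqref{rec:parent-child} and the root identity, $(\mathbf e_s-Mx)_v$ equals $\sum_{i<k}q_id_v+4d_vy_v^{(k)}+r_v^{(k)}$ at entry and $\sum_{i\le t}q_id_v+r_v^{(t)}$ at the endpoint. Hence
\[
 r_v^{(t)}=r_v^{(k),0}
 +\Bigl(4y_v^{(k),0}-\sum_{i=k}^{t}q_i\Bigr)d_v-(Mu)_v
 \le\Bigl(\frac52q_k-\sum_{i=k}^{t}q_i\Bigr)d_v-(Mu)_v
 \le\frac32q_td_v-(Mu)_v,
\]
using only $r_v^{(k),0}\le q_kd_v$, $y_v^{(k),0}\le3q_k/8$ and $q_k\le q_t$. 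The floors $q_{k+1},\dots,q_t$ enter with a minus sign. This is the paper's computation, and it sits inside the birth charge of line~\ref{line:qp-birth}.

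Your dynamic layer-by-layer picture can also be made rigorous, but only after separating the $-4d_vh$ reserve part of $v$'s own pushes from $-(Mu)_v$. That part is nonpositive, and each activation at level $i$ releases exactly $q_id_v/2$. Together with the single drained term this gives $q_td_v+(t+1)q_td_v/2=\tfrac{t+3}{2}q_td_v$, which is exactly the charge.
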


\begin{proof}
Fix the current endpoint and let $F=H_t$ at that endpoint.
Track its vertices from episode entry, writing $\Delta x\ge0$
for all intervening increments to $x$. Vertices already in
$H_t^{\mathrm{start}}$ contribute
$\mu_t^{\mathrm{start}}$ initially. For a vertex
$v\in F\setminus H_t^{\mathrm{start}}$, let $k$ be its first
inactive level at entry: $v\in H_{k-1}\setminus H_k$, with
$0\le k\le t$. This includes vertices outside $H_0$.
The drained snapshot gives $r_v^{(k),\mathrm{start}}\le q_kd_v$
and $y_v^{(k),\mathrm{start}}\le3q_k/8$.

Telescoping \eqref{rec:parent-child} and the root identity, the
residual $(\mathbf e_s-Mx)_v$ initially consists of the floors at
levels below $k$, its level-$k$ reserve, and $r_v^{(k),\mathrm{start}}$.
At the endpoint it consists of all floors through level $t$ and
$r_v^{(t)}$. Thus its initial contribution, after allowing for
these retained floors, is
\[
 a_v=r_v^{(k),\mathrm{start}}
       +\left(4y_v^{(k),\mathrm{start}}-\sum_{i=k}^t q_i\right)d_v
 \le\left(\frac52q_k-\sum_{i=k}^tq_i\right)d_v
 \le\frac32q_td_v\le\frac{t+3}{2}q_td_v.
\]
This is covered by the birth charge in
line~\ref{line:qp-birth} of \Cref{alg:query-push}.
Summing the residual identities on the fixed set $F$ gives
\[
 \mu_t=\mu_t^{\mathrm{start}}
       +\sum_{v\in F\setminus H_t^{\mathrm{start}}}a_v
       -\one_F^{\mathsf T}M\Delta x.
\]

All joint updates during this episode are supported on $F$:
ancestor solves are suspended, and the current and descendant
active sets are contained in $F$. The columns of $M_{FF}$ have
nonnegative sums, so these updates, as well as pushes supported
on $F$, contribute at most zero to the last term. A push outside
$F$ has depth at most $t$ and contributes at most its outgoing
flow $(1-\alpha)d_vh$. Line~\ref{line:qp-flow} charges that flow
to $W_t$. Consequently,
\[
 \mu_t\le\mu_t^{\mathrm{start}}
   +\sum_{v\in F\setminus H_t^{\mathrm{start}}}\frac{t+3}{2}q_td_v
   +(1-\alpha)\sum_{\substack{\text{pushes in the episode}\\
                              \text{of depth at most }t}}d_vh
 =\mu_t^{\mathrm{start}}+W_t.
\]
The crossing update is included because its charges are recorded
before the budget is tested.
\end{proof}

Fix a level-$j$ call of scale $N$; the next two estimates apply
before ancestor interruption.

\begin{lemma}\label{lem:query-push-volume}
The active set satisfies $\vol(H_j)\le2N$.
\end{lemma}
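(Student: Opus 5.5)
We need to show $\vol(H_j)\le 2N$ for a level-$j$ call of scale $N$, before any ancestor interruption. The argument splits into the root and nonroot cases.

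\emph{Root case.} For $j=0$ we have $N=N_0=1/q_0$, and the root state is \eqref{nr:state} with threshold $q_0$. Summing it over $V$ exactly as in the proof of \Cref{nr:push-account} gives
\[
 1=\alpha\mathbf d^{\mathsf T}x+q_0\vol(H_0)+4\sum_{v\notin H_0}d_vy^{(0)}_v+\|r^{(0)}\|_1 .
\]
Every term is nonnegative, since the shared records keep all work residuals nonnegative. Hence $\vol(H_0)\le 1/q_0=N\le 2N$.

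\emph{Nonroot case.} For $j\ge1$, the call was opened inside a parent episode with budget $Q_{j-1}=q_jN$, since $q_j=\theta=Q_{j-1}/b_{j-1}$ and the child scale is $N=b_{j-1}$. By \eqref{rec:parent-child}, every $v\in H_j$ has $r^{(j-1)}_v=q_jd_v+r^{(j)}_v\ge q_jd_v$, using $r^{(j)}\ge0$. Summing over $H_j\subseteq H_{j-1}$ gives
\[
 q_j\vol(H_j)\le\sum_{v\in H_j}r^{(j-1)}_v\le\mu_{j-1}.
\]
By \Cref{lem:mu_t-W_t} applied at level $t=j-1$, we have $\mu_{j-1}\le\mu_{j-1}^{\rm start}+W_{j-1}$. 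The episode rule $Q_{j-1}=\max\{\mu_{j-1}^{\rm start},q_{j-1}b\}$ gives $\mu_{j-1}^{\rm start}\le Q_{j-1}$. While the parent budget is unexhausted, $W_{j-1}<Q_{j-1}$, so $\mu_{j-1}<2Q_{j-1}=2q_jN$. Dividing by $q_j$ yields $\vol(H_j)<2N$.

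\emph{The crossing update.} The delicate point is the qualifier ``before ancestor interruption''. The update that pushes $W_{j-1}$ across $Q_{j-1}$ may itself activate a vertex of $H_j$, so at that instant the strict bound can fail. I would handle this in one of two ways. The first option is to interpret the lemma at all states before the crossing update completes, and note that after a crossing no further level-$j$ solve or push occurs before control returns to the exhausted owner (\Cref{alg:query-push} returns immediately). The second option is to bound the single overshoot directly: one push increases $W_{j-1}$ by its outflow plus at most one birth charge, so $\mu_{j-1}$ exceeds $2Q_{j-1}$ by at most that increment. That excess would then need to be absorbed, for example via $\vol$ of a single vertex activated at level $j$.

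I expect the first interpretation to be what the statement intends, and I would rely on it. The main care needed is checking that \Cref{lem:mu_t-W_t} holds at every intermediate state of the episode, not only at drained endpoints. Its proof fixes an arbitrary current endpoint $F=H_t$, so it applies at every state.
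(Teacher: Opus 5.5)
Your proof is correct and follows the paper's argument. The root case sums the root identity to get $q_0\vol(H_0)\le 1$. The nonroot case uses the same chain $q_j\vol(H_j)\le\sum_{v\in H_j}r_v^{(j-1)}\le\mu_{j-1}\le\mu_{j-1}^{\mathrm{start}}+W_{j-1}<2Q_{j-1}=2q_jN$, valid before ancestor interruption.

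Your worry about the crossing update is unnecessary. Only level-$j$ pushes enlarge $H_j$, and these are charged only to $W_k$ with $k\ge j$, never to $W_{j-1}$. So the update that exhausts $W_{j-1}$ leaves $H_j$ unchanged, and the level-$j$ call is interrupted immediately afterward.
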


\begin{proof}
For $j\ge1$, \eqref{rec:parent-child} gives
\[
 q_j\vol(H_j)\le\sum_{v\in H_j}r_v^{(j-1)}
 \le\mu_{j-1}
 \le\mu_{j-1}^{\mathrm{start}}+W_{j-1}
 <2Q_{j-1}=2q_jN.
\]
The third inequality is \Cref{lem:mu_t-W_t}. At the root,
summing the root identity \eqref{nr:state} gives
$q_0\vol(H_0)\le1$, exactly as in \Cref{nr:push-account}.
The root identity and nonnegativity are preserved by all descendant
updates, so the same argument applies here.
\end{proof}

\begin{lemma}\label{lem:query-push-scalar}
Let $S_j=\sum d_vh$ over all level-$j$ pushes in the current call.
Then $S_j\le3q_jN/4=O(q_jN)$.
\end{lemma}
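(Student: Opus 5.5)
We mirror the computation of $S$ in \Cref{nr:push-account}, now carried out relative to the level-$j$ reference $\xi^{(j)}$ and the level-$(j-1)$ residual. Let $F=H_{j-1}$ be the parent's domain at the current endpoint. Each level-$j$ push at $v$ has $v\in H_{j-1}\setminus H_j$ at the moment of the push. It increases $y^{(j)}_v$ by $h$. In addition, $y^{(j)}_v$ never exceeds $3q_j/8$ while $v$ is inactive at level $j$, because the step is capped by $3q_j/8-y_v^{(j)}$ and the vertex activates at equality. Once $v$ joins $H_j$, no further level-$j$ push happens at $v$.

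\textbf{Step 1 (rewriting $S_j$).} The pushes at $v$ are the only increments of $y^{(j)}_v$ made by level-$j$ pushes. Hence
\[
S_j=\frac{3q_j}{8}\vol(H_j)+\sum_{v\in H_{j-1}\setminus H_j}d_v\,\bar y_v ,
\]
where $\bar y_v\le y_v^{(j)}$ is the part of the relative potential produced by level-$j$ pushes. For $v\notin H_j$ nothing else moves $y^{(j)}_v$: joint updates are supported on active sets, and a deeper push at $v$ would require $v\in H_j$. So $\bar y_v=y^{(j)}_v$ for those $v$.

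\textbf{Step 2 (budget bound).} By \eqref{rec:parent-child}, every $v\in F$ contributes at least $q_jd_v$ if $v\in H_j$ and $4d_vy_v^{(j)}$ otherwise to $r^{(j-1)}_v$, since $r^{(j)}\ge0$. Summing over $F$ gives
\[
q_j\vol(H_j)+4\sum_{v\in F\setminus H_j}d_vy_v^{(j)}\le \mu_{j-1}.
\]
Before interruption, \Cref{lem:mu_t-W_t} and $W_{j-1}<Q_{j-1}$ give $\mu_{j-1}<2Q_{j-1}=2q_jN$. A vertex that was pushed at level $j$ but has since left $F$ is impossible, because the active sets only grow. Therefore
\[
S_j\le \tfrac38\Bigl(q_j\vol(H_j)+4\sum_{v\in F\setminus H_j} d_vy^{(j)}_v\Bigr)<\tfrac38\cdot 2q_jN=\tfrac34q_jN .
\]

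\textbf{Step 3 (root case).} For $j=0$ we reuse the summed root identity exactly as in \Cref{nr:push-account}. It gives a bound of $3/8\le 3q_0N_0/4$.

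\textbf{Main obstacle.} The delicate point is the interruption boundary. The push that makes $W_{j-1}$ cross $Q_{j-1}$ is completed before the test, so at that moment $\mu_{j-1}$ may exceed $2Q_{j-1}$. We handle this by applying the bound to the state just before that push and adding the final push separately. Its increment is at most $\min\{r_v/(5d_v),3q_j/8\}$. For an activating push the term $d_vh$ is already absorbed in the $\frac{3q_j}{8}\vol(H_j)$ accounting. This keeps the bound at $O(q_jN)$. If we must insist on the exact constant $3/4$, we instead note that the lemma is stated before ancestor interruption and before exhaustion, where $W_{j-1}<Q_{j-1}$ holds throughout.
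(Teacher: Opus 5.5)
Your proof is correct and follows the paper's argument essentially verbatim: you rewrite $S_j=\frac38q_j\vol(H_j)+\sum_{v\in H_{j-1}\setminus H_j}d_vy_v^{(j)}$, bound this by $\frac38\mu_{j-1}$ using \eqref{rec:parent-child} and $r^{(j)}\ge0$, and conclude with $\mu_{j-1}<2Q_{j-1}=2q_jN$ from \Cref{lem:mu_t-W_t}, handling the root as in \Cref{nr:push-account}. Your ``main obstacle'' does not actually arise: a level-$j$ push is never charged to $W_{j-1}$ (only to $W_k$ with $k\ge j$), and it cannot increase $\mu_{j-1}$ (its own row loses $d_vh$, its neighbours gain at most $(1-\alpha)d_vh$, and activation leaves $r_v^{(j-1)}$ unchanged), so every level-$j$ push ends with $\mu_{j-1}<2Q_{j-1}$ and the constant $3/4$ holds even for interrupted calls, with no separate treatment of the crossing update.
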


\begin{proof}
A coordinate starts at relative potential zero when it becomes
available. Until activation, only its level-$j$ pushes increase it.
Their cumulative increment is $y_v^{(j)}$ while it is inactive
and exactly $3q_j/8$ once it activates. Therefore
\[
 S_j=\frac38q_j\vol(H_j)
        +\sum_{v\in H_{j-1}\setminus H_j}d_vy_v^{(j)}.
\]
For $j\ge1$, the two cases of \eqref{rec:parent-child} imply
$r_v^{(j-1)}\ge q_jd_v$ on $H_j$ and
$r_v^{(j-1)}\ge4d_vy_v^{(j)}$ outside it. Hence
\[
 S_j\le\frac38\sum_{v\in H_{j-1}}r_v^{(j-1)}
 =\frac38\mu_{j-1}\le\frac34q_jN.
\]
At the root, the same expression for $S_0$ and the summed root
identity give $S_0\le3/8$, as in \Cref{nr:push-account}.
\end{proof}

\begin{proof}[Proof of \Cref{lem:query-push}]
The push rule preserves nonnegative residuals: its own row loses
at most its available work residual, and neighboring rows only
gain residual. Ancestor rows contain the same work residual plus
nonnegative retained terms, so their smaller own-row subtraction
also preserves nonnegativity. The activation cap preserves the
relative-potential bounds. If no budget interrupts the drain,
its stopping condition gives all the required exterior bounds.
The volume estimate is \Cref{lem:query-push-volume}.

We count level-$j$ pushes in two cases. A push that activates its
vertex is paid for once by its degree, for a total of $O(N)$.
A push that does not activate uses
$h=r_v^{(j)}/(5d_v)>q_j/5$. Its scan cost satisfies
$d_v<(5/q_j)d_vh$, so \Cref{lem:query-push-scalar} bounds the
total of these costs by $O(N)$ as well.

These bounds apply before an ancestor budget crosses. If the
crossing push itself has depth $j\ge1$, its pre-update row
satisfies $r_v^{(j-1)}\ge r_v^{(j)}>q_jd_v$ and
$\mu_{j-1}<2q_jN$, giving $d_v<2N$. Its final scan therefore
also costs $O(N)$. A crossing push at an earlier level is charged
to that earlier call. No further level-$j$ solve or drain is
performed after this interruption. Updating open records and
queue keys adds only depth and logarithmic factors, absorbed by
$P$. In particular, the scan bound proves that a drain cannot
continue indefinitely.
\end{proof}

\subsection{\texorpdfstring{Proof of \Cref{lem:recursive-layer}}{Proof of the recursive work bound}}
\label{sec:technical-layer}

Fix a level-$j$ call of scale $N$ and block size $b$. We prove
the same three estimates as in the nonrecursive analysis: a
comparison between consecutive full solves, a condition for halving,
and a bound on the exceptions. All counts are for finite execution
prefixes before ancestor interruption or global completion;
termination will follow from the counts.

Under the idealized SDDM convention of \Cref{sec:preliminaries},
\textsc{FullSolve} costs $\widetilde O(\vol(H_j))$ and its exact
increment $M_{H_jH_j}^{-1}r_{H_j}^{(j)}$ is nonnegative.
The same observation applies to a leaf cache solve with right-hand
side $r_C^{(j)}-\theta\mathbf d_C\ge0$. We retain the stated
residual tolerances below; exact solves satisfy them with zero error.

The comparison needs a reference fixed even for vertices that enter
the parent domain later. For the proof only, extend the records to
all of $V$. If $v\in H_{t-1}\setminus H_t$ for some $t<j$, set
\[
 \xi_v^{(j)}=\xi_v^{(t)}+\frac38\sum_{i=t}^{j-1}q_i,\qquad
 y_v^{(j)}=y_v^{(t)}-\frac38\sum_{i=t}^{j-1}q_i,\qquad
 r_v^{(j)}=r_v^{(t)}+\frac12\sum_{i=t}^{j-1}q_i d_v.
\]
Thus $x=\xi^{(j)}+y^{(j)}$ still holds. When $v$ activates at
level $t$, its new reference at level $t+1$ is
$x_v=\xi_v^{(t)}+3q_t/8$, exactly its previously extended value.
Repeating this argument shows that the extended $\xi^{(j)}$ is
fixed for the whole call. No records outside the current domain
are stored by the algorithm. The state identity becomes
\begin{align}
 \mathbf e_s-M\xi^{(j)}-\left(\sum_{t<j}q_t\right)\mathbf d
       -My^{(j)}
 =q_j\mathbf d_{H_j}+4D_{V\setminus H_j}y^{(j)}+r^{(j)}.
 \label{eq:rj_comparison}
\end{align}

Subtracting the reference changes the source term. The following
bound controls its negative part by the floors retained at earlier
levels.

\begin{lemma}\label{lem:bound_mu_next}
Throughout a level-$j$ call,
\[
 \mathbf e_s-M\xi^{(j)}-\left(\sum_{t<j}q_t\right)\mathbf d
 \ge-\left(1+\frac{3\alpha}{8}\right)
       \left(\sum_{t<j}q_t\right)\mathbf d.
\]
\end{lemma}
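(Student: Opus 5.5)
The plan is an induction on the level $j$ along the chain of open calls, with $\sigma_j:=\sum_{t<j}q_t$. The target is then equivalent to $\mathbf e_s-M\xi^{(j)}\ge-\tfrac{3\alpha}{8}\sigma_j\mathbf d$. At the root, $\xi^{(0)}=0$ and $\sigma_0=0$, so the claim reads $\mathbf e_s\ge0$, which is trivial.

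For the step from a level-$j$ call to its child, I will first describe the child's extended reference $\xi^{(j+1)}$ in terms of the parent's records at the moment the child is opened (line~\ref{rec:line-child-init}). For $v\in H_j$ the child records $\xi_v^{(j+1)}=x_v=\xi_v^{(j)}+y_v^{(j)}$. For every other vertex, the extension rule gives $\xi^{(j+1)}_v=\xi^{(j)}_v+3q_j/8$, and this is consistent with later activations, as checked just before the lemma. Hence
\[
 \xi^{(j+1)}=\xi^{(j)}+y^{(j)}+w,
\]
where $w=0$ on $H_j$ and $w_v=3q_j/8-y_v^{(j)}$ off $H_j$. By \Cref{condition:normalize-solution} and the extended definition of $y^{(j)}$, we have $0\le w_v\le3q_j/8$ on $H_{j-1}\setminus H_j$. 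The same bounds should hold off $H_{j-1}$, where $y^{(j)}$ is shifted down by the accumulated $\tfrac38\sum q_i$.

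Substituting into \eqref{eq:rj_comparison} at level $j$ will give
\[
 \mathbf e_s-M\xi^{(j+1)}-\sigma_{j+1}\mathbf d
 =q_j\mathbf d_{H_j}+4D_{V\setminus H_j}y^{(j)}+r^{(j)}-q_j\mathbf d-Mw .
\]
The right-hand side splits into two cases.

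On $H_j$, the first and last terms cancel the $-q_j\mathbf d$. Moreover $(Mw)_v=-(1-\alpha)\sum_{u\sim v}w_u\le0$ there, since $w$ vanishes on $H_j$. So the row is nonnegative, which is stronger than needed.

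Off $H_j$, I will not bound $Mw$ crudely by $Dw$. Instead I will compare $w$ with the constant vector $\tfrac{3q_j}{8}\one$ and use $M\one=\alpha\mathbf d$. Writing $w=\tfrac{3q_j}{8}\one-\tilde y$, where $\tilde y$ equals $y^{(j)}$ off $H_j$ and $3q_j/8$ on $H_j$, gives
\[
 -Mw=-\tfrac{3\alpha q_j}{8}\mathbf d+M\tilde y .
\]
At an off-$H_j$ row, $(M\tilde y)_v\ge d_vy_v^{(j)}-(1-\alpha)\sum_{u\sim v}\tilde y_u$. The negative neighbor sum must be absorbed by the positive terms $4d_vy^{(j)}_v+r^{(j)}_v$ and by the slack built into the inductive hypothesis. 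This is where I expect to spend the real effort.

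The cleanest route is probably to strengthen the inductive hypothesis. I would carry the statement in the form $\mathbf e_s-M\xi^{(j)}\ge-\tfrac{3\alpha}{8}\sigma_j\mathbf d$ and add each level's increment $-\tfrac{3\alpha}{8}q_j\mathbf d$ exactly as produced by $M\one=\alpha\mathbf d$. The remaining terms $q_j\mathbf d_{H_j}+4Dy+r-q_j\mathbf d+M\tilde y$ would then be shown nonnegative by reading them as the level-$j$ residual identity evaluated at the modified potential $\xi^{(j)}+\tilde y$. That identity is nonnegative because all work residuals are nonnegative (\Cref{lem:query-push}), and the added $\tilde y$ only raises neighbors' inflow.

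The main obstacle is therefore the off-$H_j$ rows, and especially vertices outside the parent domain $H_{j-1}$. For those vertices the extended records pile up shifts from several levels, so I must verify carefully that $w\ge0$ and that the per-level defect is exactly $\tfrac{3\alpha}{8}q_j$ rather than something of order $q_j$. If the exact constant fails there, the fallback is to use the drained bounds $r^{(t)}_v\le q_td_v$ and $y^{(t)}_v\le3q_t/8$ to absorb the discrepancy into the $-\sigma_{j+1}\mathbf d$ term. That is why the stated bound carries the extra factor $1$ in front of $\sigma_j$.
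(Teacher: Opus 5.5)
Your treatment of the rows in $H_j$ is correct and matches the paper's case $v\in H_{j-1}$ after the index shift: $w\ge0$ and $w$ vanishes on $H_j$, so the row is at least $r^{(j)}_v\ge0$. The gap is off $H_j$, which is where the lemma has its content. There you propose to show $q_j\mathbf d_{H_j}+4D_{V\setminus H_j}y^{(j)}+r^{(j)}-q_j\mathbf d+M\tilde y\ge0$, and this is false. By your own identities, this vector equals $\mathbf e_s-M\xi^{(j+1)}-\sigma_{j+1}\mathbf d+\frac{3\alpha}{8}q_j\mathbf d$. Its nonnegativity would therefore give $\mathbf e_s-M\xi^{(j+1)}\ge(\sigma_{j+1}-\frac{3\alpha}{8}q_j)\mathbf d$.

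Now take any $v\neq s$ such that $v$ and all its neighbors lie outside $H_0$. The extension rule gives $\xi^{(j+1)}_u=\frac38\sigma_{j+1}$ at each of these vertices, hence $(\mathbf e_s-M\xi^{(j+1)})_v=-\frac{3\alpha}{8}\sigma_{j+1}d_v$. So the lemma is tight at $v$, and your quantity equals $-(1+\frac{3\alpha}{8})\sigma_{j+1}d_v+\frac{3\alpha}{8}q_jd_v<0$. Already in the first step it is $-q_0d_v$.

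The justification breaks in three places. First, the floor $q_j\mathbf d_{H_j}$ cancels $-q_j\mathbf d$ only on $H_j$. Second, off $H_{j-1}$ the extended $y^{(j)}$ is negative, so $4d_vy^{(j)}_v$ is not a nonnegative reserve, and \Cref{lem:query-push} says nothing about it. Third, $+(M\tilde y)_v$ subtracts the neighbors' contribution rather than adding it.

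The fallback cannot help either. The $1$ in $-(1+\frac{3\alpha}{8})\sigma_j$ cancels the $-\sigma_j\mathbf d$ on the left identically, as you noted at the outset, so there is no slack. Moreover, upper bounds such as $r^{(t)}_v\le q_td_v$ and $y^{(t)}_v\le3q_t/8$ cannot yield a lower bound.

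The missing step is to leave $\mathbf e_s-M\xi^{(j)}$ unexpanded off $H_j$ and apply the induction hypothesis at the same vertex, instead of substituting \eqref{eq:rj_comparison}. Your decomposition already contains what is needed. We have $\delta:=\xi^{(j+1)}-\xi^{(j)}=y^{(j)}+w=\frac{3q_j}{8}\one+(y^{(j)}-\tilde y)$. The correction $y^{(j)}-\tilde y$ is supported on $H_j$ and is nonnegative there, because active relative potentials are at least $3q_j/8$. Thus $\delta\ge\frac{3q_j}{8}\one$, with equality at every $v\notin H_j$.

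At such $v$, the nonpositive off-diagonal entries of $M$ give $(M\delta)_v\le\frac{3q_j}{8}(M\one)_v=\frac{3\alpha}{8}q_jd_v$. Hence $(\mathbf e_s-M\xi^{(j+1)})_v\ge(\mathbf e_s-M\xi^{(j)})_v-\frac{3\alpha}{8}q_jd_v\ge-\frac{3\alpha}{8}\sigma_{j+1}d_v$. This is exactly the paper's argument, and neither $r^{(j)}$ nor the reserve enters it.

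Your concern about bounding $w$ above off $H_{j-1}$ is moot. That bound actually fails there, since $w_v\ge3q_j/8$, but only the lower bound on $\delta$ is used. Because the extended reference $\xi^{(j+1)}$ stays fixed for the child's whole call, it suffices to verify the inequality at the child's opening snapshot, where the parent's induction hypothesis applies.
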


\begin{proof}
We use induction on $j$. For $j=0$, $\xi^{(0)}=0$ and the
claim is $\mathbf e_s\ge0$. Assume it holds at level $j-1$.
Evaluate the following two cases at the entry snapshot of the
level-$j$ call. Since the references are fixed, the resulting
bound then holds throughout the call.

For $v\notin H_{j-1}$, the extension gives
$\xi_v^{(j)}-\xi_v^{(j-1)}=3q_{j-1}/8$.
For $v\in H_{j-1}$, initialization gives
$\xi_v^{(j)}-\xi_v^{(j-1)}=y_v^{(j-1)}\ge3q_{j-1}/8$.
Consequently, at an exterior vertex,
\[
 \bigl(M(\xi^{(j)}-\xi^{(j-1)})\bigr)_v
 \le\frac{3\alpha}{8}q_{j-1}d_v.
\]
By the induction hypothesis,
\begin{align*}
 \left(\mathbf e_s-M\xi^{(j)}-\sum_{t<j}q_t\mathbf d\right)_v
 &=\left(\mathbf e_s-M\xi^{(j-1)}-\sum_{t<j-1}q_t\mathbf d\right)_v
   -q_{j-1}d_v-\bigl(M(\xi^{(j)}-\xi^{(j-1)})\bigr)_v\\
 &\ge-\left(1+\frac{3\alpha}{8}\right)
          \left(\sum_{t<j}q_t\right)d_v.
\end{align*}

For $v\in H_{j-1}$, initialization gives $\xi_v^{(j)}=x_v$.
Outside this set, the extended $y_u^{(j)}\le0$, so
$\xi_u^{(j)}\ge x_u$. The nonpositive off-diagonal entries of
$M$ imply $(M\xi^{(j)})_v\le(Mx)_v$. Telescoping the residual
identities on $H_{j-1}$ therefore gives
\[
 \left(\mathbf e_s-M\xi^{(j)}-\sum_{t<j}q_t\mathbf d\right)_v
 \ge\left(\mathbf e_s-Mx-\sum_{t<j}q_t\mathbf d\right)_v
 =r_v^{(j-1)}\ge0.
\]
The two cases complete the induction.
\end{proof}

Now consider the drained snapshots immediately before two consecutive
level-$j$ full solves. Let their active sets be $H$ and $H\cup Z$,
and their work-residual masses be $\mu_j$ and
$\mu_j^{\mathrm{next}}$. The first solve computes $h$ supported
on $H$, with
$M_{HH}h=r_H^{(j)}-f$ and
$0\le f\le q_jb\mathbf d_H/(2^{12}N)$.
Write the updates to $x$ between the snapshots as
$x\longrightarrow x+h\longrightarrow x+h+u$.
Here $u\ge0$ includes \emph{all} subsequent updates, including
descendant joint solves as well as exterior pushes. Since the
extended reference is fixed,
$y^{(j),\mathrm{next}}=y^{(j)}+h+u$.

Let $\mathcal P_j$ be the pushes of depth at most $j$ between
these snapshots. For a push $p$, write $v_p$ for its vertex and
$\delta_p$ for its increment. Define
\[
 D=\{v\in H:h_v>y_v^{(j)}\},\qquad
 J_j=4\kappa_0q_j\vol(Z)
       +(1-\alpha)\sum_{p\in\mathcal P_j}d_{v_p}\delta_p.
\]
Thus $D$ contains the coordinates doubled by the first solve,
whereas $J_j$ accounts for new vertices and exterior input.

\begin{lemma}\label{lem:residual_next_bound}
With the preceding notation,
\[
 \mu_j^{\mathrm{next}}
 \le J_j+2\kappa_0q_j\vol(D)
       +2\sum_{v\in D}r_v^{(j)}+\|f\|_1.
\]
\end{lemma}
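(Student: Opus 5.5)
The plan is to mirror the proof of \Cref{nr:progress}, working with the extended level-$j$ identity \eqref{eq:rj_comparison} in place of \eqref{nr:state}. Write $\sigma=\sum_{t<j}q_t$ and $g=\mathbf e_s-M\xi^{(j)}-\sigma\mathbf d$. \Cref{lem:bound_mu_next} gives $g\ge-(1+3\alpha/8)\sigma\mathbf d$. Since $q_t\le q_j$ and $j\le\ell$, we have $\sigma\le\ell q_j\le\kappa_0q_j$, so $g\ge-\tfrac{19}{16}\kappa_0q_j\mathbf d$. This negative part of the source term is the only new ingredient compared with the nonrecursive case. It is why the constants $12q$ and $q$ in \Cref{nr:progress} become $4\kappa_0q_j$ and $2\kappa_0q_j$.

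First I fix the final set $F=H\cup Z$ and sum \eqref{eq:rj_comparison} over $F$ at the next snapshot. I decompose $\mu_j^{\rm next}$ into the same four contributions as before.
\begin{enumerate}[label=(\roman*),leftmargin=*]
\item The residual $f$ left on $H$ by the first solve contributes $\|f\|_1$.
\item The starting work residual on $Z$, together with the reserve and floor adjustments, is handled as in the proof of \Cref{lem:mu_t-W_t}. Each $v\in Z$ has $r_v^{(j)}\le q_jd_v$ and reserve at most $\tfrac32q_jd_v$ in extended coordinates at the drained start, giving $O(q_j)\vol(Z)$.
\item The boundary flow $(1-\alpha)\one_Z^{\mathsf T}A_{ZH}h$ from the first solve.
\item All later updates $u$. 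Exactly as in \Cref{lem:mu_t-W_t}, joint solves and pushes supported in $F$ contribute $-\one_F^{\mathsf T}M_{FF}u_F\le0$, and pushes outside $F$ have depth at most $j$. Their total contribution is therefore bounded by the $\mathcal P_j$ sum in $J_j$.
\end{enumerate}

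For (iii) I split $h\le y_H^{(j)}+w$ with $w=(h-y^{(j)}_H)_+$, which is supported on $D$. For the ordinary part at $v\in Z$, I read off the drained row of \eqref{eq:rj_comparison}. This gives $(1-\alpha)A_{vH}y^{(j)}_H\le 4d_vy_v^{(j)}+r_v^{(j)}+\alpha d_v y_v^{(j)}-g_v$. Using $y_v^{(j)}\le 3q_j/8$ (\Cref{condition:normalize-solution}, or nonpositivity in the extension), $r_v^{(j)}\le q_jd_v$ and the lower bound on $g$, this is $O(\kappa_0q_j)d_v$. Summing over $Z$ and combining with (ii), the constants should fit inside $4\kappa_0q_j\vol(Z)$ because $\kappa_0\ge3$.

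For the excess part I subtract the $H$-rows of \eqref{eq:rj_comparison} at the start from $M_{HH}h=r_H^{(j)}-f$. The off-$H$ terms $(1-\alpha)A_{H,V\setminus H}y^{(j)}$ may be negative in the extension, so they must be handled carefully, but they only help the inequality when bounded appropriately. This yields $M_{HH}(h-y_H^{(j)})\le q_j\mathbf d_H+2r_H^{(j)}-g_H\le 2\kappa_0q_j\mathbf d_H+2r^{(j)}_H$. Restricting to $D$ as in \Cref{nr:progress} (off-diagonal terms toward $H\setminus D$ have the favorable sign) gives $M_{DD}w_D\le 2\kappa_0q_j\mathbf d_D+2r_D^{(j)}$. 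Then $(1-\alpha)\one_Z^{\mathsf T}A_{ZD}w_D\le\one_D^{\mathsf T}M_{DD}w_D$ finishes this part.

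I expect the main obstacle to be the bookkeeping of signs in the extended records. Outside $H_{j-1}$ we have $y^{(j)}$ negative and $r^{(j)}$ inflated by floors, so the step ``$h\le y_H+w$'' and the off-$H$ flow terms need nonnegativity that only holds on $H$. I would first check that every use of $y^{(j)}$ is at vertices of $H\cup Z\subseteq H_{j-1}$ at the relevant times, or else absorb the discrepancy into the $g$ lower bound. I would also verify that the vertices of $Z$ entering $H_{j-1}$ during the interval are still covered by the $4\kappa_0q_j$ coefficient.
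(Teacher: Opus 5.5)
Your outline matches the paper's proof: sum \eqref{eq:rj_comparison} over $F=H\cup Z$, bound the later updates by the outflow of $\mathcal P_j$, split $h\le y_H^{(j)}+w$, and control $M_{DD}w_D$ through the subtracted $H$-rows. But the one step that is new relative to \Cref{nr:progress} is the bookkeeping of the extended records, which you flag as the obstacle. You do not carry it out, and your claim that the off-$H$ terms ``only help'' is backwards.

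In your ordinary bound at $v\in Z$, and in your bound $M_{HH}(h-y_H^{(j)})\le q_j\mathbf d_H+2r_H^{(j)}-g_H$, you drop $-(1-\alpha)A_{\cdot,V\setminus H}\,y^{(j)}_{V\setminus H}$. That is harmless only if $y^{(j)}\ge0$ off $H$. Neighbours of $H$ and of $Z$ may lie outside $H_{j-1}$, where the extended value $y^{(j)}_u=y^{(t)}_u-\frac38\sum_{i=t}^{j-1}q_i$ can be negative. The dropped term is then positive, so both inequalities are unjustified and in general false. Your ordinary bound also has the wrong coefficient: the diagonal of $M$ is $d_v$, so the drained row gives $5d_vy_v^{(j)}$, not $(4+\alpha)d_vy_v^{(j)}$.

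Likewise, $r_v^{(j)}\le q_jd_v$ fails on $Z$. Level-$j$ \textsc{ExteriorPush} still drains the queues of all levels $t\le j$. So a vertex outside $H_{j-1}$ at the first snapshot can enter $H_{j-1}$ and then $H_j$ before the second snapshot, and its extended residual at the first snapshot is $r_v^{(t)}+\frac12\sum_{i=t}^{j-1}q_id_v$. For these two reasons, your fallback of checking that every use of $y^{(j)}$ is at vertices of $H\cup Z\subseteq H_{j-1}$ cannot succeed.

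What is missing is a two-sided bound on the extended records at the first drained snapshot. From the actual bounds $0\le y_v^{(t)}\le\frac38q_t$ and $r_v^{(t)}\le q_td_v$, together with $q_t\le q_j$, every $v\notin H$ satisfies $-\frac38\sigma\le y_v^{(j)}\le\frac38q_j$ and $r_v^{(j)}\le\frac{j+2}{2}q_jd_v$. The lower bound makes each dropped term at most $\frac{3(1-\alpha)}{8}\sigma d_v$. Combined with \Cref{lem:bound_mu_next}, this gives $-g_v-(1-\alpha)A_{v,V\setminus H}y^{(j)}_{V\setminus H}\le\frac{11}{8}\sigma d_v$. Then each $Z$-row contributes at most $\bigl[(j+\frac{35}{8})q_j+\frac{11}{8}\sigma\bigr]d_v+(1-\alpha)A_{vD}w_D\le4\kappa_0q_jd_v+(1-\alpha)A_{vD}w_D$. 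Similarly, $M_{DD}w_D\le2r_D^{(j)}+(q_j+\frac{11}{8}\sigma)\mathbf d_D\le2r_D^{(j)}+2\kappa_0q_j\mathbf d_D$. This is exactly how the paper closes the argument, and the constants fit.

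Your item (ii) survives if done as in \Cref{lem:mu_t-W_t}. In extended coordinates, $r_v^{(j)}+(4y_v^{(j)}-q_j)d_v$ telescopes to the actual-record quantity $a_v$ there, which is at most $\frac32q_jd_v$. What breaks is the separate use in (iii) of $r_v^{(j)}\le q_jd_v$ and of nonnegative off-$H$ potentials.
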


\begin{proof}
For $v\in H$, the retained floor is unchanged, so
\eqref{eq:rj_comparison} gives
$r_v^{(j),\mathrm{next}}=f_v-(Mu)_v$.
For $v\in Z$, the exterior reserve is replaced by an active floor,
giving
\[
 r_v^{(j),\mathrm{next}}
 =r_v^{(j)}+(4y_v^{(j)}-q_j)d_v
       +(1-\alpha)A_{vH}h-(Mu)_v.
\]
Set $F=H\cup Z$. Summing the two expressions yields
\[
 \mu_j^{\mathrm{next}}=\|f\|_1
 +\sum_{v\in Z}\bigl((4y_v^{(j)}-q_j)d_v+r_v^{(j)}
                          +(1-\alpha)A_{vH}h\bigr)
 -\one_F^{\mathsf T}Mu.
\]
We bound the later updates and the newly active rows separately.

\emph{Later updates.}
Since $u\ge0$ and $M_{FF}$ has nonnegative column sums,
\[
 -\one_F^{\mathsf T}Mu
 =-\one_F^{\mathsf T}M_{FF}u_F
       +(1-\alpha)\one_F^{\mathsf T}A_{F,V\setminus F}u_{V\setminus F}
 \le(1-\alpha)\sum_{p\in\mathcal P_j}d_{v_p}\delta_p.
\]
Indeed, every descendant joint update and every deeper push is
supported inside the growing $H_j\subseteq F$; ancestor joint
solves are suspended. Only the pushes in $\mathcal P_j$ can
contribute outside $F$, and each is bounded by its own outgoing flow.

\emph{Newly active rows.}
Set $w=(h-y_H^{(j)})_+$, supported on $D$, so that
$h\le y_H^{(j)}+w$. For $v\in Z$, \eqref{eq:rj_comparison} gives
\begin{align*}
 (1-\alpha)A_{vH}y_H^{(j)}
 &=5d_vy_v^{(j)}+r_v^{(j)}
   -\left(\mathbf e_s-M\xi^{(j)}-\sum_{t<j}q_t\mathbf d\right)_v
   -(1-\alpha)A_{v,V\setminus H}y_{V\setminus H}^{(j)}.
\end{align*}
At the first drained snapshot, actual exterior records satisfy
$y_v^{(t)}\le3q_t/8$ and $r_v^{(t)}\le q_td_v$.
The extended records therefore satisfy, for $v\notin H$,
\[
 -\frac38\sum_{t<j}q_t\le y_v^{(j)}\le\frac38q_j,
 \qquad r_v^{(j)}\le\frac{j+2}{2}q_jd_v.
\]
Combining these bounds with \Cref{lem:bound_mu_next}, we obtain
\begin{align*}
 (1-\alpha)A_{vH}y_H^{(j)}
 &\le\left(\frac{15}{8}+\frac{j+2}{2}\right)q_jd_v
   +\left(1+\frac{3\alpha}{8}+\frac{3(1-\alpha)}8\right)
       \left(\sum_{t<j}q_t\right)d_v\\
 &=\left(\frac{15}{8}+\frac{j+2}{2}\right)q_jd_v
       +\frac{11}{8}\left(\sum_{t<j}q_t\right)d_v.
\end{align*}
Using $h\le y_H^{(j)}+w$ and $q_t\le q_j$, it follows that
\begin{align*}
 (4y_v^{(j)}-q_j)d_v+r_v^{(j)}+(1-\alpha)A_{vH}h
 &\le\left[\left(j+\frac{35}{8}\right)q_j
                +\frac{11}{8}\sum_{t<j}q_t\right]d_v
       +(1-\alpha)A_{vD}w_D\\
 &\le4\kappa_0q_jd_v+(1-\alpha)A_{vD}w_D.
\end{align*}

It remains to bound the extra flow from $w$. On $H\setminus D$,
$h-y_H^{(j)}\le0$, so the nonpositive off-diagonal entries give
\begin{align*}
 M_{DD}w_D
 &\le\bigl(M_{HH}(h-y_H^{(j)})\bigr)_D\\
 &=2r_D^{(j)}-f_D+q_j\mathbf d_D
   -\left(\mathbf e_s-M\xi^{(j)}-\sum_{t<j}q_t\mathbf d\right)_D
   -(1-\alpha)A_{D,V\setminus H}y_{V\setminus H}^{(j)}\\
 &\le2r_D^{(j)}
       +\left(q_j+\frac{11}{8}\sum_{t<j}q_t\right)\mathbf d_D
 \le2r_D^{(j)}+2\kappa_0q_j\mathbf d_D.
\end{align*}
Summing over $D$ counts all flow leaving $D$, including that
entering $Z$, together with a nonnegative teleportation term.
Hence
\[
 (1-\alpha)\one_Z^{\mathsf T}A_{ZD}w_D
 \le\one_D^{\mathsf T}M_{DD}w_D
 \le2\kappa_0q_j\vol(D)+2\sum_{v\in D}r_v^{(j)}.
\]
Combining the bounds proves the claim.
\end{proof}

Successful normalization supplies the density bound needed to turn
this comparison into a contraction.

\begin{lemma}\label{lem:technical-full-halving}
Suppose the first full solve follows a successful episode of budget
$Q_j$, with $\theta=Q_j/b>q_j$ and $\mu_j\ge Q_j/2$.
If $J_j<\mu_j/4$ and $\vol(D)\le b/(64\kappa_0)$, then
$\mu_j^{\mathrm{next}}<\mu_j/2$.
\end{lemma}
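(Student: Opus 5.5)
The proof mirrors \Cref{nr:halving}. We plug the hypotheses into \Cref{lem:residual_next_bound} and bound each term by a small multiple of $\mu_j$.

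\emph{Scale comparisons.} Since $\theta=Q_j/b>q_j$, we have $Q_j>q_jb$, so $Q_j=\mu_j^{\rm start}$ was the maximum. Together with $\mu_j\ge Q_j/2$ this gives
\[
 q_j<\frac{Q_j}{b}\le\frac{2\mu_j}{b}.
\]
The episode was successful and its child (or leaf) returned normally at threshold $\theta$. By \Cref{lem:recursive-root} and the parent--child computation in its proof, the drained snapshot before the full solve then satisfies
\[
 r_v^{(j)}\le\kappa_j\theta d_v\le\kappa_0\frac{Q_j}{b}d_v\le\frac{2\kappa_0\mu_j}{b}d_v
 \qquad\text{on }H .
\]
Finally, the \textsc{FullSolve} tolerance together with $\vol(H_j)\le2N$ from \Cref{lem:query-push} gives
\[
 \|f\|_1\le\frac{q_jb}{2^{12}N}\cdot 2N=\frac{q_jb}{2^{11}}<\frac{\mu_j}{2^{10}} .
\]

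\emph{Combining.} Substituting into \Cref{lem:residual_next_bound} yields
\[
 2\kappa_0q_j\vol(D)+2\sum_{v\in D}r_v^{(j)}
 \le\Bigl(\frac{4\kappa_0}{b}+\frac{4\kappa_0}{b}\Bigr)\mu_j\vol(D)
 \le\frac{8\kappa_0\mu_j}{b}\cdot\frac{b}{64\kappa_0}
 =\frac{\mu_j}{8}.
\]
Adding $J_j<\mu_j/4$ and $\|f\|_1<\mu_j/2^{10}$ gives
\[
 \mu_j^{\rm next}<\Bigl(\frac14+\frac18+2^{-10}\Bigr)\mu_j<\frac{\mu_j}{2}.
\]

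\emph{Main obstacle.} The delicate step is the density bound at the snapshot $r^{(j)}$ used in \Cref{lem:residual_next_bound}. That snapshot is taken after the episode closes, so one must check that the post-episode \textsc{ExteriorPush} drain does not spoil the bound. The proof of \Cref{lem:recursive-root} shows this drain makes no update, because the child's normal return has already emptied all queues at levels $t\le j$.

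One must also check that any vertex activated during the child call still obeys the bound $\kappa_j\theta d_v$ (it does, since that proof covers discovered vertices), and that $\kappa_j\le\kappa_0$. Beyond this the argument is arithmetic. The constant $64\kappa_0$ leaves slack, so even a cruder bound such as $r_v^{(j)}\le3\kappa_0\theta d_v$ would still give a factor below $1/2$.
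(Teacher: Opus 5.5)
Your proof is correct and is essentially the paper's argument. You plug the density bound $r_v^{(j)}\le\kappa_j\theta d_v\le 2\kappa_0\mu_j d_v/b$, the estimate $q_j<2\mu_j/b$, and $\|f\|_1\le q_jb/2^{11}<\mu_j/2^{10}$ into \Cref{lem:residual_next_bound} to reach $(1/4+1/8+2^{-10})\mu_j<\mu_j/2$, exactly as the paper does.

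Your closing aside is slightly off, though. With the tolerance term retained, a density bound of $3\kappa_0\theta d_v$ makes the two $D$-terms total $\mu_j/4$, which gives only $(1/4+1/4+2^{-10})\mu_j$, not something below $\mu_j/2$. So the slack absorbs a factor only slightly under $3$; this remark plays no role in your proof.
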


\begin{proof}
At the first snapshot,
$\mu_j>q_jb/2$ and
$r_v^{(j)}\le\kappa_j\theta d_v\le2\kappa_j\mu_jd_v/b$
for $v\in H$. Therefore
\[
 2\kappa_0q_j\vol(D)+2\sum_{v\in D}r_v^{(j)}
 \le\frac{8\kappa_0\mu_j}{b}\vol(D)\le\frac{\mu_j}{8}.
\]
Also, $\vol(H)\le2N$ gives
$\|f\|_1\le q_jb/2048<\mu_j/1024$.
\Cref{lem:residual_next_bound} now yields
$\mu_j^{\mathrm{next}}<(1/4+1/8+1/1024)\mu_j<\mu_j/2$.
\end{proof}

\begin{lemma}\label{lem:technical-full-exceptions}
For two consecutive full solves in one level-$j$ call,
$\mu_j^{\mathrm{next}}\ge\mu_j/2$ implies at least one of:
\begin{enumerate}[label=(\arabic*),leftmargin=*,itemsep=3pt]
\item The first solve is the initial solve, or was requested after
an episode exhausted its budget.
\item The first solve follows a successful episode, and
$J_j\ge\mu_j/4$.
\item The first solve follows a successful episode, and
$\vol(D)>b/(64\kappa_0)$.
\end{enumerate}
The total number of these events is
$O(\kappa_0(N/b)\log(2N_0))$.
\end{lemma}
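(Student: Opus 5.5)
The plan mirrors the proof of \Cref{nr:full-exceptions}, with \Cref{lem:technical-full-halving} in place of \Cref{nr:halving}.

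\emph{Step 1: the dichotomy.} Suppose none of (1)--(3) occurs. Then the first solve follows a successful episode whose own budget $Q_j$ was requested with $\mu_j\ge Q_j/2$. Because a full solve was requested, $\theta\ne q_j$, so $\theta>q_j$. Together with $J_j<\mu_j/4$ and $\vol(D)\le b/(64\kappa_0)$, \Cref{lem:technical-full-halving} gives $\mu_j^{\rm next}<\mu_j/2$. Hence $\mu_j^{\rm next}\ge\mu_j/2$ forces one of the three events.

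\emph{Step 2: counting event (1).} Every exhausted episode has $W_j\ge Q_j\ge q_jb$. I would bound the total charge $\sum W_j$ over one level-$j$ call by $O(\kappa_0q_jN)$, using its two contributions.
\begin{itemize}
\item The birth charges sum to at most $\frac{j+3}{2}q_j\vol(H_j)\le(j+3)q_jN$, by \Cref{lem:query-push-volume}.
\item The flow charges are $(1-\alpha)\sum d_vh$ over pushes of depth at most $j$ made while a level-$j$ episode is open.
\end{itemize}
For the depth-$j$ pushes, \Cref{lem:query-push-scalar} gives $S_j\le\frac34q_jN$. Pushes at ancestor depth $t<j$ made during this call also count. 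Each such push that lands here is at most its flow, and it is bounded by $S_t$ restricted to the current call's lifetime. I would control this by noting that any such flow entering $H_{j-1}$ raises $\mu_{j-1}$, which stays below $2Q_{j-1}=2q_jN$ while the parent is unexhausted. If necessary, I would instead charge those ancestor pushes to the ancestor calls, which the convention of \Cref{lem:recursive-layer} permits. This yields $O(\kappa_0 N/b)$ exhausted episodes, plus one for the initial solve.

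\emph{Step 3: counting event (2).} The intervals between consecutive full solves are disjoint. Hence
\[\sum_i J_{j,i}\le4\kappa_0q_j\vol(H_j)+(\text{total flow})=O(\kappa_0q_jN),\]
by the same flow bound as in Step 2. Each occurrence has $J_j\ge\mu_j/4>q_jb/8$, so there are $O(\kappa_0N/b)$ occurrences.

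\emph{Step 4: counting event (3).} By \Cref{condition:normalize-solution}, each $v\in H_j$ has $y_v^{(j)}\ge3q_j/8$, and $y^{(j)}$ only increases. The relative potential satisfies $y_v^{(j)}\le x_v\le3N_0$ by \Cref{lem:potential-completion}. Membership of $v$ in $D$ more than doubles $y_v^{(j)}$, so $v$ lies in at most $\log_2(8N_0/q_j)=O(\log(2N_0))$ doubling sets, since $q_j\ge q_0=1/N_0$. Summing gives $\sum_i\vol(D_i)\le2N\cdot O(\log 2N_0)$. Each occurrence contributes more than $b/(64\kappa_0)$, giving $O(\kappa_0(N/b)\log(2N_0))$ occurrences.

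The main obstacle is Step 2. The flow from ancestor-depth pushes during the call must be bounded by $O(q_jN)$ rather than by the ancestor's larger scale. I expect this to follow from the parent's unexhausted budget, which gives $\mu_{j-1}<2q_jN$, combined with the fact that an ancestor push has no effect here unless it touches $H_{j-1}$.
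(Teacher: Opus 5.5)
Your Steps 1, 3 and 4 match the paper's proof. Step 2, and therefore the flow term you reuse in Step 3, has a genuine gap exactly where you expected one: bounding the outflow from pushes of depth $t<j$ made during the level-$j$ call.

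Your primary argument via $\mu_{j-1}<2q_jN$ does not work, for two reasons. First, the flow-charge rule of \Cref{alg:query-push} adds the \emph{entire} outgoing flow $(1-\alpha)d_vh$ of such a push to $W_j$, and the same full amount enters $J_j$. It is not restricted to the part that lands in $H_{j-1}$. Second, even for the part landing in $H_{j-1}$, $\mu_{j-1}<2q_jN$ bounds an instantaneous value, not cumulative inflow. $\mu_{j-1}$ is not monotone during the child call: level-$j$ pushes and descendant full and cache solves remove mass from it (through teleportation and flow leaving $H_{j-1}$), and ancestor flow can then raise it again. So repeated inflow can sum to far more than $2q_jN$ without $\mu_{j-1}$ ever exceeding that level.

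Your fallback also fails. The convention in \Cref{lem:recursive-layer} assigns \emph{work} to the producing call. But the number of level-$j$ budget exhaustions, and of the full solves they trigger, is determined by the actual value of $W_j$. The algorithm increments $W_j$ regardless of how work is accounted.

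The missing observation is that the same charging rule adds every push of depth $t\le j-1$ to every open counter $W_k$ with $k\ge t$. In particular it is added to the parent's counter $W_{j-1}$. The whole level-$j$ call runs inside a single parent episode with allowance $Q_{j-1}=q_jN$, and the call is interrupted as soon as $W_{j-1}\ge Q_{j-1}$. Hence the total earlier-level outflow during the call is below $q_jN$. The only exception is one crossing update; it ends the call without causing a local restart, so it adds at most one terminal episode. At the root there is no earlier level, and $S_0\le 3/8=O(q_0N_0)$. With this bound, $\sum_i W_j^{(i)}=O(\kappa_0q_jN)$ and $\sum_i J_j^{(i)}=O(\kappa_0q_jN)$ follow, and your counts for events (1) and (2) go through as planned.
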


\begin{proof}
If (1) does not occur, the first solve follows a successful episode
with $\theta>q_j$ and $\mu_j\ge Q_j/2$. If neither (2) nor (3)
occurs, \Cref{lem:technical-full-halving} gives
$\mu_j^{\mathrm{next}}<\mu_j/2$. It remains to count the three
events separately.

\emph{Budget restarts.}
Write $W_j^{(i)}$ for the total charge in episode $i$. Its charges
have three sources: level-$j$ pushes, earlier-level pushes, and
newly active vertices. The first contributes $O(q_jN)$ by
\Cref{lem:query-push-scalar}. For $j\ge1$, all earlier-level
outflow is also charged to the one parent episode containing this
call, whose allowance is $Q_{j-1}=q_jN$; before it is exhausted,
that outflow is below $q_jN$. At the root there is no earlier
level, and its own outflow is $O(1)=O(q_0N_0)$.
Finally, each vertex enters $H_j$ at most once, so
\Cref{lem:query-push-volume} bounds all birth charges by
$O(\kappa_0q_jN)$. Therefore
\[
 \sum_i W_j^{(i)}=O(\kappa_0q_jN).
\]
Every exhausted episode that causes a local restart spends at
least $q_jb$. If $m_1$ is the number of those episodes, then
$m_1q_jb\le\sum_iW_j^{(i)}$, giving $m_1=O(\kappa_0N/b)$.
Adding the initial solve bounds (1). An episode interrupted by
an ancestor need not satisfy this charge bound after its crossing
update, but it causes no local restart and adds at most one
terminal episode.

\emph{Large new input.}
Index the disjoint intervals between consecutive full solves by
$i$. Their newly active sets $Z_i$ are disjoint, and their push
flows are bounded by the preceding accounts. Hence
\[
 \sum_i J_j^{(i)}
 \le4\kappa_0q_j\sum_i\vol(Z_i)+O(q_jN)
 =O(\kappa_0q_jN).
\]
In case (2), $\mu_j>q_jb/2$, so $J_j\ge\mu_j/4>q_jb/8$.
Thus (2) occurs $O(\kappa_0N/b)$ times.

\emph{Large doubling sets.}
Before global completion, the root set $H_0$ is a proper subset
of the source component. The root state and
\Cref{lem:potential-completion} give $x_v\le3N_0$.
An actual reference is a previously recorded nonnegative value
of $x_v$, so $0\le y_v^{(j)}\le x_v\le3N_0$ on $H_j$.
Each vertex enters $H_j$ at $y_v^{(j)}=3q_j/8$, and every
membership in $D$ more than doubles this coordinate. The reference
is fixed and all intervening increments are nonnegative. Thus,
if a vertex belongs to $D$ at $k$ full solves,
\[
 2^k\frac{3q_j}{8}\le3N_0.
\]
Since $q_j\ge q_0=1/N_0$, we have $k=O(\log(2N_0))$.
Writing $D_i$ for the doubling sets and $m_3$ for the number of
occurrences of (3), we obtain
\begin{align*}
 m_3\frac{b}{64\kappa_0}
 &\le\sum_i\vol(D_i)
   =\sum_{v\text{ activated in this call}}d_v\,\#\{i:v\in D_i\}\\
 &\le O(N\log(2N_0)).
\end{align*}
This bounds (3) and completes the proof. The logarithm uses the
global scale $N_0$; no upper bound in terms of the smaller local
scale is needed.
\end{proof}

\begin{proof}[Proof of \Cref{lem:recursive-layer}]
We first count operations in a finite prefix, without assuming
termination. Call a pair of consecutive full solves exceptional
if one of (1)--(3) in \Cref{lem:technical-full-exceptions} occurs.
Otherwise the active mass halves. For $j\ge1$, the parent account
gives $\mu_j\le\mu_{j-1}<2q_jN$ before ancestor interruption;
at the root, $\mu_0\le1=q_0N_0$. The first solve in a
nonexceptional pair follows a successful episode with
$\mu_j\ge Q_j/2$ and $Q_j>q_jb$. Consequently,
\[
 q_jb/2<\mu_j\le2q_jN.
\]
There can be only $O(\log(2N/b))$ consecutive halvings in this
range. Partitioning the full-solve sequence at the exceptional
pairs therefore gives at most $PN/b$ full solves, after enlarging
$P$ to absorb the depth and logarithmic factors.

Next consider consecutive successful, nonterminal episodes that
do not request a full solve. Their output mass $\mu_j^{\mathrm{out}}$ is below $Q_j/2$,
so the next budget satisfies
\[
 Q_j^{\mathrm{next}}
 =\max\{\mu_j^{\mathrm{out}},q_jb\}
 \le\max\{Q_j/2,q_jb\}.
\]
Since $q_jb\le Q_j\le2q_jN$, a run has length
$O(\log(2N/b))$ before reaching the floor $q_jb$, where success
returns. Such runs are separated by full-solve requests or budget
restarts. Including a possible final interrupted episode, we get
\[
 \#\{\text{episodes}\}
 \le O(\log(2N))
       \bigl(\#\{\text{full solves}\}
             +\#\{\text{budget restarts}\}+1\bigr)
 \le PN/b.
\]
Each episode makes at most one child call.

Each full solve and each episode-entry or child-initialization
scan costs $\widetilde O(N)$, because the current active set has
volume at most $2N$. Level-$j$ discovery costs $O(PN)$ in total
by \Cref{lem:query-push}. Ancestor pushes encountered during
these operations are charged to their producing calls, including
any crossing update. Thus for $j<\ell$,
\[
 R_j(N)\le P\left(\frac{N^2}{b}
                      +\frac Nb R_{j+1}(b)\right)+O(PN).
\]
The leaf cache count in \Cref{sec:technical-leaf} gives
$R_\ell(N)\le P(N^2/b+Nb)$. All logarithms in these bounds are
at most $O(\log(2N_0))$, so the same $P$ works at every level.

Finally, these uniform finite-prefix bounds imply termination by
induction from the leaf. A leaf episode is finite by the
single-episode argument in \Cref{sec:technical-leaf}, and the
episode count prevents infinitely many episodes. At an internal
level, each drain is finite by \Cref{lem:query-push}; each child
call terminates or is interrupted by induction. The same episode
count then makes the parent call finite. Global completion ends
all calls immediately and costs an additional
$\widetilde O(N_0)$, charged to the root and absorbed in its bound.
\end{proof}

\subsection{\texorpdfstring{Proof of \Cref{lem:recursive-leaf}}{Proof of the leaf guarantees}}
\label{sec:technical-leaf}

\begin{proof}
Write $q=q_\ell$ and $r=r^{(\ell)}$. We first analyze one leaf
episode, with fixed $Q_\ell=\theta b$ and $\theta\ge q$.
Its cache grows monotonically. A vertex enters with
$r_v\ge3\theta d_v$, and every cache solve leaves
$\theta\mathbf d_C\le r_C\le(\theta+q/4)\mathbf d_C$.
Between these solves, no push acts on a cached vertex, because
it is active at every open level. Its residual can only increase
through incoming flow, so the retained floor persists.

Before every permitted cache solve, the budget is unexhausted.
By \Cref{lem:mu_t-W_t},
\[
 \theta\vol(C)\le\sum_{v\in C}r_v\le\mu_\ell
 <2Q_\ell=2\theta b.
\]
Thus every solved cache has volume below $2b$. A crossing update
may break this bound, but then the loop returns before another
solve. If the high-density queue becomes empty instead, all
$r_v<3\theta d_v$ on $H_\ell$ and the preceding uninterrupted
drain has emptied every exterior queue. This proves the
single-episode output guarantee independently of any episode count.

We next count the \emph{individual cache solves inside}
\textsc{LeafNormalize}. For any finite prefix with $E$ entered
episodes, split them into two cases, as in \Cref{nr:cache-count}.
A solve admitting a new vertex is paid for by that admission.
The cache is monotone within an episode and has volume below $2b$,
so there are $O(bE)$ such solves.

A solve with no new admission has a preceding cache solve in
the same episode. After the preceding solve, every cached row
has density at most $\theta+q/4$. Reaching $3\theta$ again
requires at least $(2\theta-q/4)d_v\ge q$ new residual mass
at some cached vertex. Only exterior-push flow can supply it:
ancestor joint solves are suspended, and no other joint update
occurs between these two cache solves. The intervals between
successive cache solves are disjoint.

The relevant flow over the entire leaf call is $O(qN)$.
Its own pushes have that bound by \Cref{lem:query-push-scalar}.
For $\ell\ge1$, all earlier-level outflow is charged to the
containing parent episode and is below
$Q_{\ell-1}=qN$ before interruption. For $\ell=0$, the root
outflow is $O(1)=O(q_0N_0)$. Flow from a final crossing update
cannot cause a subsequent cache solve. Therefore the number of
solves with no new admission is $O(N)$, and the total is
$O(bE+N)$. In particular, this bound with $E=1$, together with
the finite drain bound, rules out an infinite loop in one episode.

We can now apply the finite-prefix episode count from
\Cref{sec:technical-layer}, which used only the output guarantee
just proved. It gives $E\le PN/b$ and at most $PN/b$ full solves.
The $O(bE+N)=O(PN)$ cache solves each cost $\widetilde O(b)$;
the full solves and episode scans each cost $\widetilde O(N)$.
Adding discovery yields
\[
 R_\ell(N)\le P\left(\frac{N^2}{b}+Nb\right).
\]

For a normal return after an episode, $\theta=q$, so the output
guarantee gives $r_v^{(\ell)}<3qd_v=\kappa_\ell q_\ell d_v$
on $H_\ell$, with all exterior queues drained. The cap and
activation rule give \Cref{condition:normalize-solution}.
If the initial drain returns with $H_\ell=\varnothing$, the
active condition is vacuous and the others follow directly from
\Cref{lem:query-push}. This proves all return assertions.
\end{proof}

\clearpage
\bibliographystyle{alpha}
\bibliography{ref}

\end{document}